  \definecolor{dark-gray}{gray}{0.20}
  \definecolor{gray}{gray}{0.30}
  \definecolor{light-gray}{gray}{0.80}
  \definecolor{dark-red}{rgb}{0.7,0,0}
  \definecolor{dark-green}{rgb}{0.1,0.4,0}
  \definecolor{dark-blue}{rgb}{0.3,0.3,0.7}
  \definecolor{light-blue}{rgb}{0.8,0.8,1}
      \definecolor{swamp}{RGB}{240, 199, 197}
\newcommand{\be}{\begin{equation}}
\newcommand{\ee}{\end{equation}}
\newcommand{\eq}[1]{(\ref{#1})}
\def\be{\begin{equation}}
\def\ee{\end{equation}}
\def\bea{\begin{eqnarray}}
\def\eea{\end{eqnarray}}
\newcommand{\shortexact}[3]{
\begin{tikzcd}[ampersand replacement=\&]
	0 \& #1
	\&  #2
	\& #3
	\& 0
	\arrow[from=1-1, to=1-2]
	\arrow[from=1-2, to=1-3]
	\arrow[from=1-3, to=1-4]
	\arrow[from=1-4, to=1-5]
\end{tikzcd}
}
\newcommand{\C}{\mathbb{C}}
\newcommand{\R}{\mathbb{R}}
\newcommand{\Q}{\mathbb{Q}}
\newcommand{\Z}{\mathbb{Z}}
\newcommand{\abs}[1]{\lvert #1 \rvert}
\newcommand{\ang}[1]{\langle #1 \rangle}
\newcommand{\Spin}{\mathrm{Spin}}
\newcommand{\Pin}{\mathrm{Pin}}
\newcommand{\GL}{\mathrm{GL}}
\newcommand{\SL}{\mathrm{SL}}
\newcommand{\Sq}{\mathrm{Sq}}
\newcommand{\Mp}{\mathrm{Mp}}
\newcommand{\RP}{\mathbb{RP}}
\newcommand{\CP}{\mathbb{CP}}
\newcommand{\HP}{\mathbb{HP}}
\newcommand{\Hom}{\mathrm{Hom}}
\def\simleq{\; \raise0.3ex\hbox{$<$\kern-0.75em
      \raise-1.1ex\hbox{$\sim$}}\; }
   \def\simgeq{\; \raise0.3ex\hbox{$>$\kern-0.75em
      \raise-1.1ex\hbox{$\sim$}}\; }
\numberwithin{equation}{section}
\newtheorem{thm}[equation]{Theorem}
\newtheorem{prop}[equation]{Proposition}
\newtheorem{lem}[equation]{Lemma}
\newtheorem{cor}[equation]{Corollary}
\theoremstyle{remark}
\crefname{appendix}{Appendix}{Appendices}
\title{\centering The anomaly that was not meant IIB}
\author{Arun Debray$^1$,} \affiliation{$^1$Department of Mathematics, The University of Texas at Austin, Austin, TX 78712, USA}
\author{Markus Dierigl$^2$,} \affiliation{$^2$Arnold Sommerfeld Center for Theoretical Physics, LMU, Munich, 80333, Germany}
\author{Jonathan J.~ Heckman$^{3,4}$,} \affiliation{$^3$Department of Physics and Astronomy, University of Pennsylvania, Philadelphia, PA 19104, USA}\affiliation{$^4$Department of Mathematics, University of Pennsylvania, Philadelphia, PA 19104, USA}
\author{Miguel Montero$^5$} \affiliation{$^5$Department of Physics, Harvard University, Cambridge, MA 02138, USA}
\emailAdd{a.debray@math.utexas.edu}
\emailAdd{m.dierigl@lmu.de}
\emailAdd{jheckman@sas.upenn.edu}
\emailAdd{mmontero@g.harvard.edu}
\abstract{
Type IIB supergravity enjoys a discrete non-Abelian duality group, which has potential quantum anomalies. In this paper we explicitly compute these, and present the bordism group that controls them, modulo some physically motivated assumptions. Quite surprisingly, we find that they do not vanish, which naively would signal an inconsistency of F-theory. Remarkably, a subtle modification of the standard 10d Chern-Simons term cancels these anomalies, a fact which relies on the \textit{specific} field content of type IIB supergravity. We also discover other ways to
cancel this anomaly, via a topological analog of the Green-Schwarz mechanism. These alternative type IIB theories have the same
low energy supergravity limit as ordinary type IIB, but a different spectrum of extended objects.
They could either be part of the Swampland, or connect to the standard
theory via domain walls.}
\begin{document}
\hypersetup{pageanchor=false}
\makeatletter
\let\old@fpheader\@fpheader
\preprint{LMU-ASC 24/21}

\makeatother

\maketitle

\hypersetup{
    pdftitle={The Anomaly That Was Not Meant IIB},
    pdfauthor={Arun Debray, Markus Dierigl, Jonathan J.~Heckman, Miguel Montero},
    pdfsubject={String dualities, Anomalies, Cobordisms}
}

\newcommand{\remove}[1]{\textcolor{red}{\sout{#1}}}

\newpage

\section{Introduction}
\label{sec:intro}

Dualities constitute one of the deepest features of string theory. These are \textit{exact} statements relating equivalent
formulations of quantum gravity in different regimes of validity.

Of particular significance is the duality group $\mathcal{G}_{\mathrm{IIB}}$ of type IIB string theory and its geometric uplift to
F-theory \cite{Schwarz:1983qr, Vafa:1996xn, Morrison:1996na, Morrison:1996pp}. At the level of type IIB supergravity, the duality group
is really $\SL(2,\mathbb{R})$, but the presence of quantized BPS objects reduces this to $\SL(2,\mathbb{Z})$,
the group of large diffeomorphisms of an elliptic curve. Including chiral degrees of freedom (the fermions and the self-dual form),
it is actually more appropriate to state that the IIB duality group is $\GL^{+}(2,\mathbb{Z})$,
the Pin$^+$ cover of $\GL(2,\mathbb{Z})$ (see \cite{TY19} as well as \cite{Pantev:2016nze}).

F-theory provides arguably the most complete picture for formulating non-perturbative phenomena in string theory, and has led to a range of applications, from concrete moduli stabilization scenarios \cite{Denef:2005mm, Denef:2007pq}; to the construction of
string-based particle physics scenarios \cite{Donagi:2008ca, Beasley:2008dc, Cvetic:2019gnh}; and to the classification and study of
six-dimensional superconformal field theories (see \cite{Heckman:2018jxk} for a review).

Undergirding all of this is the assumption that the duality group of type IIB strings is really retained at the
quantum level. The duality group is first encountered as a global symmetry of the low-energy type IIB supergravity Lagrangian. But if it is truly a duality of quantum gravity it should actually be gauged: we should be allowed to
introduce duality defects (indeed, these are predicted by  F-theory \cite{Vafa:1996xn,Weigand:2010wm,Weigand:2018rez}, see  \cite{Dierigl:2020lai} for a connection to Swampland arguments and the cobordism conjecture) and the type IIB partition function should sum over  $\mathcal{G}_{\mathrm{IIB}}$ bundles. Furthermore, Swampland arguments also show that if $\mathcal{G}_{\mathrm{IIB}}$ is an exact symmetry of the theory, then it must be gauged (see e.g.~\cite{Vafa:2005ui,Banks:2010zn,Harlow:2018tng} and \cite{vanBeest:2021lhn} for a recent review).  However, in order to be able to gauge a symmetry, it first needs to be anomaly free. Thus, if we take F-theory seriously and insist on gauging $\mathcal{G}_{\mathrm{IIB}}$, we are naturally led to the central question of this paper:

\begin{center}
\begin{tcolorbox}[width=0.6\textwidth]
\begin{center}
Is the IIB duality group $\mathcal{G}_{\mathrm{IIB}}$ anomalous?
\end{center}
\end{tcolorbox}\end{center}

One's first instinct might be that the answer must be a resounding ``No'', on account of the massive amount of evidence
that we have accrued in favor of F-theory and the web of string dualities. Still, to our knowledge, there has been no
direct verification of this important feature (see however \cite{Gaberdiel:1998ui,Minasian:2016hoh} which discuss a related anomaly; see also \cite{Ibanez:1992hc}).
One obstacle in performing such a computation is that the evaluation of the partition function
requires a detailed understanding of the self-dual field, a topic which is notoriously difficult. We follow the presentation in \cite{Hsieh:2020jpj}, which builds on the seminal works \cite{Belov:2006jd, Belov:2006xj} and later work by Monnier (see also \cite{Witten:1996hc, Monnier:2011rk,Monnier:2011mv,Monnier:2013kna,DelZotto:2015isa, Heckman:2017uxe}).

Following a general pattern, the duality anomaly is captured
by an 11d topological quantum field theory, with 10d type IIB supergravity
formally viewed as living on the boundary.\footnote{See for example references \cite{Martucci:2014ema, Assel:2016wcr, Lawrie:2018jut, Seiberg:2018ntt, Hsieh:2019iba} (as well as \cite{Cordova:2019jnf, Cordova:2019uob}) for related analyses of duality anomalies in the context of quantum field theory.}
Anomalies are captured by evaluating the partition function of the 11d anomaly theory on a general Euclidean background with $M = \partial X$:
\begin{equation}
Z[M] = e^{2 \pi i \mathcal{A} (X)} \,,
\end{equation}
where in our case $\mathcal{A}$ depends on the $\eta$-invariants for the gravitini and dilatini, as well as a slight extension of the anomaly theory of the
chiral 4-form of type IIB supergravity developed in \cite{Monnier:2011mv,Monnier:2011rk,Monnier:2013kna,Hsieh:2020jpj}. Specifically, the anomaly theory relies on the existence of a canonical choice for the quadratic refinement of the pairing in differential cohomology (or, more generally, the relevant differential cohomology theory for IIB RR fields, which is differential K-theory if duality backgrounds are ignored), a feature which has only been shown to exist for Spin manifolds without a duality bundle turned on; in this paper, we assume that such a canonical choice exists for general $\text{Spin-}\GL^{+}(2,\mathbb{Z})$ manifolds. This can be motivated by M-/F-duality, since on the M-theory side we do not need to specify a quadratic refinement as additional data. With this assumption, the cobordism classification of invertible topological quantum field theories (tQFTs) \cite{FH16,Yonekura:2018ufj} implies that the partition function above is a bordism invariant of the group $\Omega_{11}^{\text{Spin-}\GL^{+}(2,\mathbb{Z})}$, which we computed explicitly to be
\begin{equation}\Omega_{11}^{\Spin\text{-}\GL^+(2, \Z)} \cong\Z_8 \oplus (\Z_2)^{\oplus 9} \oplus \Z_3 \oplus \Z_{27}.\end{equation}
The details of this computation, which is based on the Adams and Atiyah-Hirzebruch spectral sequences, will be presented in a forthcoming publication \cite{Upcoming}.

To compute anomalies, we just need to evaluate the partition function of the anomaly theory on representatives of the generators of the bordism group above. We explicitly do this and, rather surprisingly, we find that the answer to the question in the box above
is ``Yes'': Type IIB supergravity, in the form in which it is written in
textbooks today, has a duality anomaly, and hence cannot be the low-energy
limit of a theory where the duality symmetry is truly present, like type IIB string theory or F-theory.

Before dismissing F-theory and thus the entire duality web as inconsistent,
we must first assess whether there might be additional (albeit quite subtle) interaction terms present which
would have remained invisible to previous analyses. Happily, we find that there is a specific 10d topological term which appears to rescue the
original IIB superstring theory. The anomalies
can be cancelled by a small modification of the triple Chern-Simons term $C_4 \wedge H_3 \wedge F_3$ of type IIB supergravity (or more precisely, the quadratic refinement mentioned above), by including a particular torsion term that encodes the duality bundle:
\begin{equation}\label{eqn:topolobampo}
I_{\mathrm{new}} =  F_{5} \cup \left[ \left(\beta(a)^2 + \lambda_2 \frac{(p_1)_{3}}{2} \right) \cup a + \frac{1}{2}[(p_1)_4 - \mathcal{P}(w)] \cup b  + \kappa \, \beta(b)^2 \cup b \right],
\end{equation}
where $F_{5}$ is the self-dual five-form field strength $a$ and $b$ are respectively $\mathbb{Z}_3$ and $\mathbb{Z}_4$ torsional
one-forms, $\beta(a)$, $\beta(b)$ are corresponding 2-form ``discrete fluxes'' for $a$ and $b$, respectively, and the $(p_1)_n$ are mod $n$ reductions of the first Pontryagin class, with $\mathcal{P}(w)$ a characteristic class built from the second Stiefel-Whitney class of the background. The coefficient $\lambda_2 = \pm 1$ is a sign, and $\kappa$ is an integer modulo 4. The last term does not contribute in the cases we can explicitly evaluate but might lead to non-trivial contributions on non-Spin manifolds. The delicate character of this term may explain why it seems to have evaded previous tests of string dualities.

The details of the mechanism we propose rely on precise numerical coincidences that would not have worked if the duality properties of the spectrum of type IIB supergravity were even slightly different. For instance, one of the anomalies we find (obtained from an 11-dimensional lens space $S^{11}/\mathbb{Z}_3$) takes values in the group $\mathbb{Z}_{27}$. Out of the 26 non-vanishing (anomalous) possibilities, only two of them can be cancelled via the quadratic refinement mechanism, and \emph{exactly one} can be cancelled with the quadratic refinement of the Chern-Simons term associated to $S^{11}/\mathbb{Z}_3$ (which we determine independently). In this way, the fact that we can cancel the anomaly via this mechanism is vaguely reminiscent of the miraculous cancellation that already occurs at the level of perturbative gravitational anomalies in type IIB supergravity \cite{Alvarez-Gaume:1983ihn}.

Perhaps even more striking, for some choices of 11d background geometry, the proposed term \eqref{eqn:topolobampo}
cannot possibly help. However, by going over the full list of 11d geometries which generate $\Omega_{11}^{\text{Spin-}\GL^{+}(2,\mathbb{Z})}$,
we find the fortuitous coincidence that in nearly all cases where it cannot contribute, the 11d anomaly vanishes anyway!
There is precisely one exceptional case involving a non-Spin manifold with a $\text{Spin-}{D_8}$ structure, and the anomaly (which is at most a sign) depends on the value of the Arf invariant of a certain quadratic refinement, which we do not know how to determine. For one choice of sign, the anomaly would also cancel, and we leave a complete independent verification of this case to future work.

The proposed topological interaction term also correctly accounts for some of the qualitative features of known type IIB compactifications,
which we view as a preliminary check of our general considerations. For example, S-fold backgrounds \cite{Garcia-Etxebarria:2015wns, Aharony:2016kai} contribute a fractional D3-brane charge which is tightly correlated with a specific duality bundle. The term \eq{eqn:topolobampo} can partially capture these charge shifts. Relatedly, a stack of D3-branes has a particular duality anomaly in its worldvolume $\mathcal{N} = 4$ super-Yang-Mills theory, which should appear as a 5d topological term in the gravity dual background. With some caveats, the term of equation \eqref{eqn:topolobampo} correctly captures such a term. Altogether, this suggests a self-consistent picture which relies on the existence of the topological interaction term \eq{eqn:topolobampo}.

All in all, the story that we present here is ultimately a happy ending for F-theory and dualities. But as in most good stories, there is a twist: we also find several \emph{alternative ways to cancel the duality anomaly}, via the topological Green-Schwarz mechanism \cite{Garcia-Etxebarria:2017crf}. All these alternatives have an identical low-energy description (that of type IIB supergravity), and differ at the level of extended operators (or equivalently, massive states, due to the completeness principle \cite{Polchinski:2003bq,Banks:2010zn,vanBeest:2021lhn}). We did not do an exhaustive classification, and there are probably more options. Some of these possibilities do not correspond to the familiar type IIB string theory, since some backgrounds and branes that should be there (such as certain orbifolds and S-folds \cite{Garcia-Etxebarria:2015wns,Aharony:2016kai}) turn out to be absent, instead being confined at the endpoints of other objects. But are they inconsistent, i.e., do they belong to the Swampland? Or did we just find that there are several different quantum theories of gravity (which would be connected by dynamical domain walls \cite{McNamara:2019rup}) with the same type IIB supergravity as their low-energy limit? A similar question arises for the discrete $\theta$ angle in M-theory introduced in \cite{FH21}, as well as for type I strings \cite{Sethi:2013hra}. At present, we do not know how to distinguish between these two intriguing possibilities, but all things considered, there seems to be mounting evidence that the usual M-theory ``star picture'' might be spikier than previously thought (see Figure \ref{fig:Mflower}).

\begin{figure}
\centering
\includegraphics[width = 0.7 \textwidth]{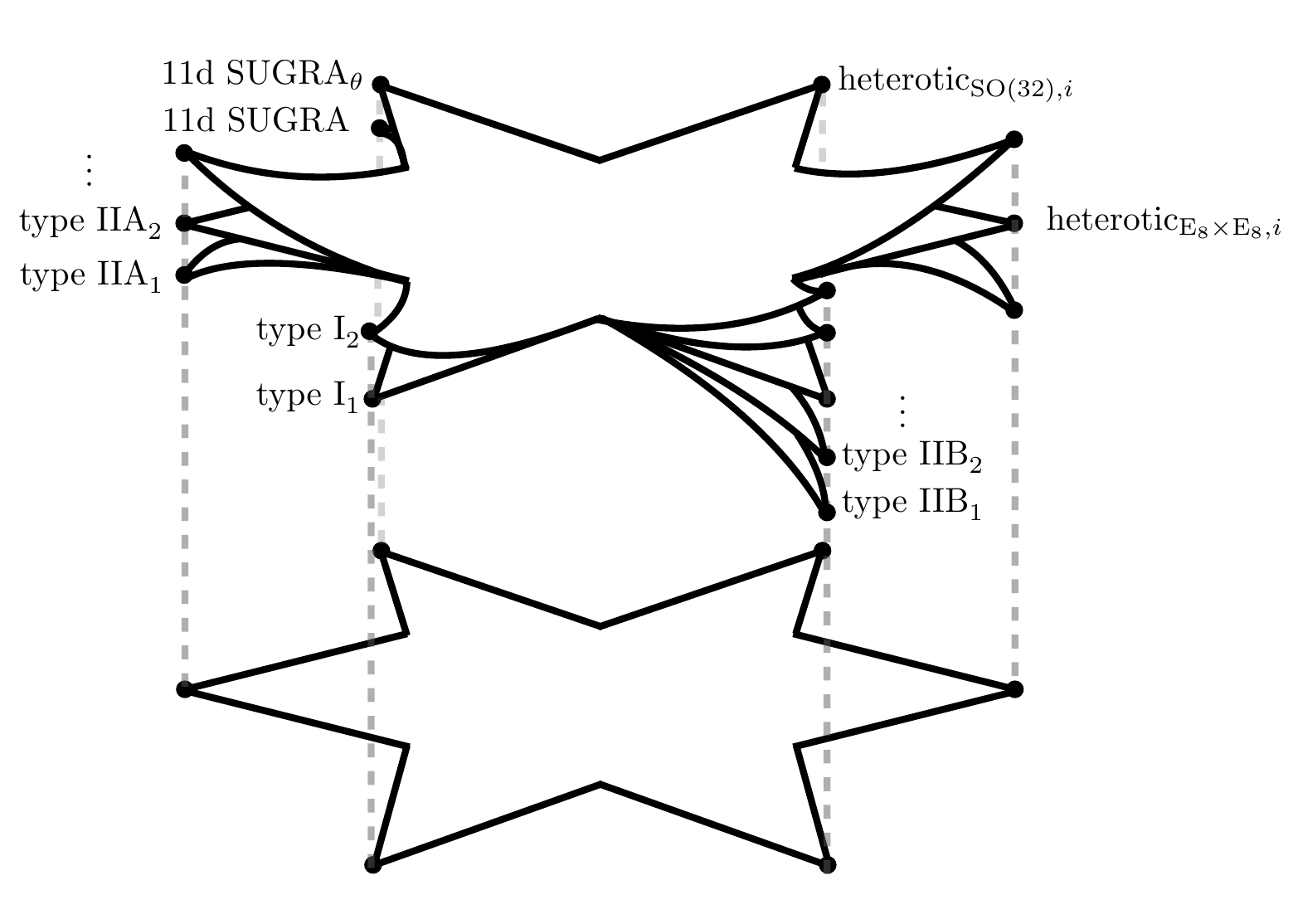}
\caption{We have found several seemingly consistent versions of type IIB theory with a non-anomalous duality group, which differ from each other at the level of massive states and extended objects. Perhaps some (or all) of these alternate versions are in the Swampland. But it does raise the question of whether the usual M-theory star picture should be replaced by a multi-sheeted one with different topological sectors. In that scenario we would expect all of these to be connected to each other via cobordism domain walls \cite{McNamara:2019rup}.}
\label{fig:Mflower}
\end{figure}

The rest of this paper is organized as follows. Section \ref{sec:bordanom} contains a brief introduction to the topic of anomalies, their classification via bordism, and the particular class of ``Dai-Freed'' anomalies that this paper focuses on, reviewing an heuristic argument from \cite{Garcia-Etxebarria:2018ajm} which explains why they should cancel in a quantum theory of gravity where topology changes are allowed (although they must also cancel in systems without gravity where topology can fluctuate, such as D-brane worldvolumes \cite{TY19}). In Section \ref{sec:dual} we review the duality group of type IIB supergravity and its extensions involving fermions and worldsheet orientation-reversal symmetries. We also introduce characteristic classes that will be useful to characterize duality bundles later on. Section \ref{sec:gen} contains the main results of the paper, where we present the bordism group classifying the anomaly theory of type IIB supergravity, the concrete anomaly theory, and we evaluate it for a specific choice of generators. We find several anomalies and explain how to cancel most of them with certain topological couplings in the theory, in a variety of ways. Section \ref{sec:x11} discusses in some detail the generator of the single bordism class where we do not know how to evaluate the anomaly even in principle. Finally, Section \ref{sec:concl} contains our conclusions. Additional information and the technical computations underlying our main results are carried out in the Appendices.


\section{Bordism groups and anomalies}
\label{sec:bordanom}

In its most basic formulation, the appearance of an anomaly indicates the violation of a symmetry due to quantum
effects (see e.g.~\cite{Bilal:2008qx}). This can be diagnosed by a lack of gauge invariance when coupling the theory to a background connection for the symmetry. The usual perturbative anomalies can be deduced from one-loop diagrams which determine the change of the partition function under infinitesimal gauge transformations, i.e., gauge transformations that are close to the identity. The groups we are interested in here, however, are discrete and their anomalies cannot be captured in terms of Feynman diagrams. Original studies on anomalies of discrete symmetries \cite{Ibanez:1991hv}, or some studies on anomalies of transformations not continuously connected to the identity \cite{Elitzur:1984kr} were often ``artisanal'' and relied on techniques such as embedding the discrete symmetry in an auxiliary continuous one. Nowadays, we have a standard procedure that allows us to study anomalies of either discrete or continuous symmetries in a uniform fashion.

This perspective advocates that the presence of an anomaly in a $d$-dimensional quantum field theory $Z$ means that
$Z$ is a boundary theory to a $(d+1)$-dimensional invertible field theory (IFT) $e^{2\pi i\mathcal{A}}$, called
the anomaly field theory of $Z$ \cite{FT12}. Invertible field theories are almost, but not quite, trivial; the state spaces of $\mathcal{A}$ are complex lines. Since $Z$ is a boundary theory to $e^{2\pi i\mathcal{A}}$, the partition function of $Z$ on a closed $d$-manifold $M$ is an element of the state space of $\mathcal{A}$. Formally, we can represent this by saying that the partition function on $M$ arises from evaluating the anomaly theory on an open manifold:
\begin{equation}
Z[M]= e^{2\pi i \mathcal{A}(X)} \,, \quad \partial X = M \,.
\label{34500}
\end{equation}
Here, $X$ is some manifold with boundary $M$, on which any relevant structures (Spin structure, background fields, etc.) are suitably extended. It is in general not possible to uniformly trivialize these state spaces to obtain partition functions that are numbers, representing the ambiguity in the partition function of an anomalous theory. Equivalently, the prescription given in \eq{34500} may depend on the choice of $X$.

This ambiguity will be absent only if the anomaly theory $\mathcal{A}$ is trivial on any closed $(d+1)$-manifold
$X$. When $X$ is a mapping torus, i.e.\ of the form $(M\times [0,1])/\sim$, where $\sim$ acts as a diffeomorphism on $M$ and identifies the two ends of the interval, a non-trivial $e^{2\pi i\mathcal{A}(X)}$ means that the symmetry is anomalous in the background $M$; there is no consistent way to assign a value to the partition function $Z[M]$, as the mapping torus explicitly exhibits a path in configuration space where the phase of this partition function changes.

For more general $X$, the conclusion is not as clear; it is certainly not possible to define $Z[M]$ in a consistent way if splitting and joining of manifolds is allowed \cite{Witten:2015aba,Monnier:2019ytc}, but whether this constitutes an issue depends on the context. Following the nomenclature in \cite{Garcia-Etxebarria:2018ajm}, we will say that a non-vanishing anomaly theory $\mathcal{A}$, which nevertheless vanishes on all mapping tori, has \emph{Dai-Freed} anomalies. In a field theory with Dai-Freed anomalies but no ordinary (perturbative or global) anomalies, the  symmetry is still preserved even at the quantum level, but the theory will not admit a lattice regularization with on site symmetry \cite{Ji:2019eqo}. Depending on the physical context, this can be perfectly fine.

By contrast, in quantum gravity, a Dai-Freed anomaly means that the relevant symmetry is broken by topology-changing processes \cite{Garcia-Etxebarria:2018ajm}. Therefore such a symmetry should not be gauged, and we should demand triviality of $\mathcal{A}$ for any exact (gauged) symmetry in a quantum theory of gravity. A similar argument applies, for the same reasons, for the worldvolume theories of D-branes \cite{TY19}, since the topology of a D-brane worldvolume fluctuates.

So, given a physical theory, how do we determine $\mathcal{A}$? The non-perturbative approach to anomalies uses
classification theorems for various classes of invertible field theories to determine it. In our setting, the
anomaly theory is unitary. In \cite{FH16} Freed and Hopkins classify unitary (by which we actually mean
reflection-positive in the Euclidean setup) invertible field theories in terms of Abelian groups called bordism
groups, which can be computed using standard techniques in algebraic topology. In a little more detail, the
$d^{\mathrm{th}}$ bordism group is the Abelian group of closed $d$-manifolds modulo the equivalence relation where
$M_1$ is equivalent to $M_2$ if $M_1 \amalg M_2$ bounds a compact $(d+1)$-manifold $X$; the Abelian group structure is
disjoint union. The higher-dimensional manifold can intuitively be understood as an appropriate deformation of
one of the boundary components into the other, where topology can change---for example holes or handles can grow and reattach---see Figure \ref{fig:Bordisms}. In the quantum gravity context, the bordism receives a physical interpretation; it can literally be regarded as a generalization of the field theory mapping torus, describing a topology-changing non-contractible path in the configuration space of geometric backgrounds of quantum gravity \cite{Garcia-Etxebarria:2018ajm}.
\begin{figure}
\centering
\includegraphics[width = 0.4 \textwidth]{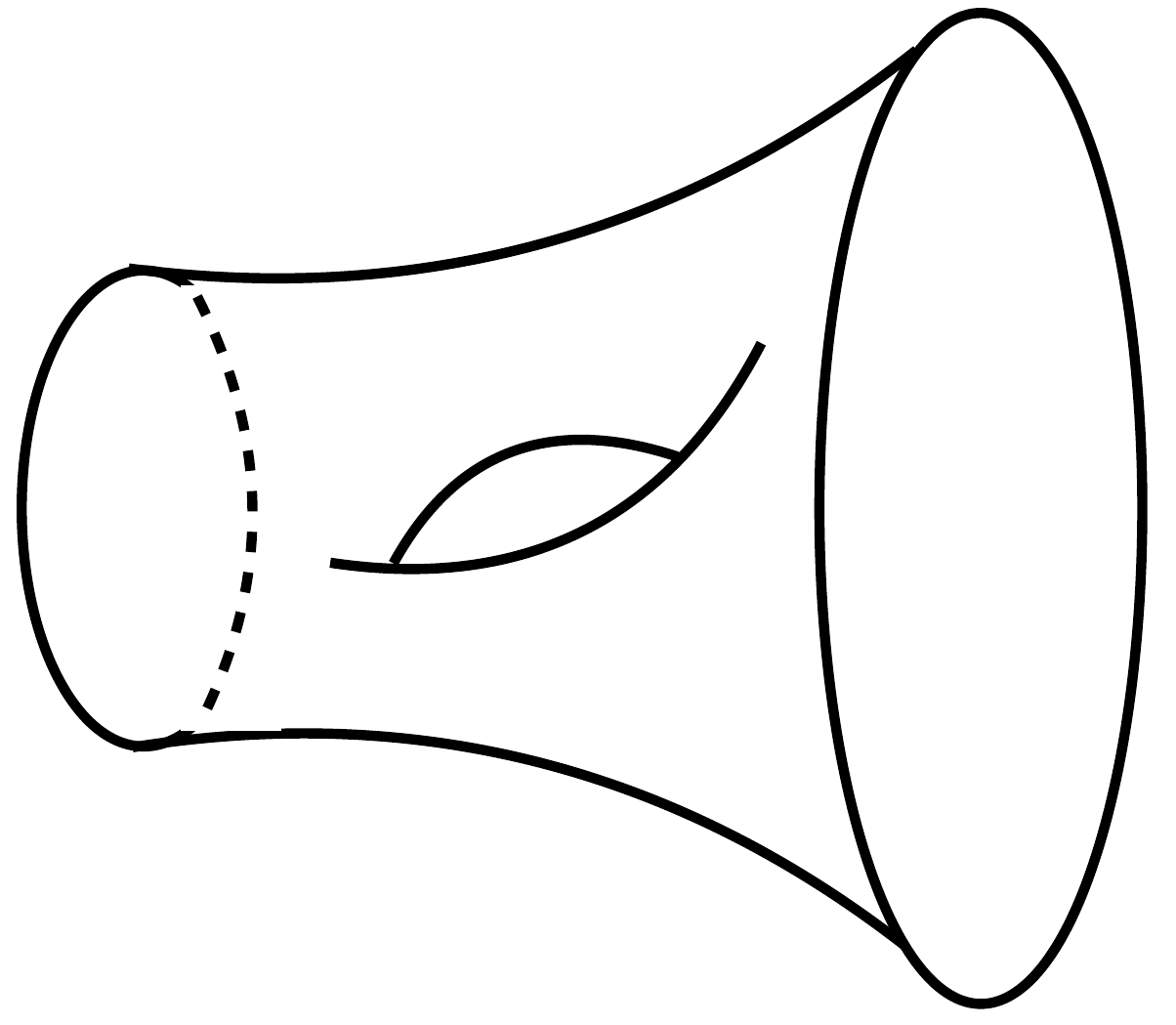}
\caption{Schematic picture of a bordism manifold between two bounding circles.}
\label{fig:Bordisms}
\end{figure}

Importantly, this procedure must take into account the additional structure of the background, which must extend in
the right way to the bordism. For instance, any background gauge fields for symmetries must extend in a
non-singular way, and similarly, if the theory contains fermions, the Spin structure must extend into the
bulk. The additional structure specified by the background is formalized mathematically as a \emph{tangential
structure} $\xi$ \cite{Las63}, and the Abelian group of $d$-manifolds with $\xi$-structure modulo bordism is denoted $\Omega_d^\xi$.

In \cite{FH16} the Abelian group of $(d+1)$-dimensional unitary invertible field theories of $\xi$-manifolds (i.e.\ the group of possible anomalies for $d$-dimensional QFTs on $\xi$-manifolds) is classified as an extension
\begin{equation}
\label{IFT_class}
\shortexact{\mathrm{Tors}(\Hom(\Omega_{d+1}^\xi, \C^\times))}
{\{\text{unitary IFTs}\}}
{\Hom(\Omega_{d+2}^\xi, \Z)}
.\end{equation}
$\mathrm{Tors}(\Hom(\Omega_{d+1}^\xi, \C^\times))$ denotes the torsion subgroup of the Abelian group of
$\C^\times$-valued bordism invariants; these classify the subgroup of unitary IFTs which are topological. That is,
the partition function of a unitary invertible tQFT is a bordism invariant and determines $e^{2\pi i\mathcal{A}}$
up to isomorphism.

The rightmost Abelian group in~\eqref{IFT_class}, $\Hom(\Omega_{n+2}^\xi, \Z)$, captures the perturbative information in an anomaly field theory: It is a group of characteristic classes of manifolds with $\xi$-structure, and the image of the anomaly field theory in this group is the anomaly polynomial. In the particular case of interest in this paper, that of type IIB supergravity, perturbative anomalies vanish famously, due to a miraculous cancellation \cite{Alvarez-Gaume:1983ihn}.

So in short, global as well as Dai-Freed anomalies are captured by an invertible tQFT, whose partition function is
a bordism invariant. Therefore the study of these anomalies is mapped to the question of computing the relevant bordism groups, finding their generators, and evaluating the anomaly theory on them. For the kinds of tangential structures $\xi$ that occur in physics, the bordism groups in~\eqref{IFT_class} are generated by a small number of manifolds, so one can determine the isomorphism class of an anomaly theory by calculating it on that generating set. This approach has been used in
\cite{Wit86, Kil88, Edw91, DMW02, Sat11,Monnier:2011mv,Monnier:2011rk,Monnier:2013kna,
CSR17, TY17,
Hsi18, LT18,Seiberg:2018ntt,Wan:2018bns,Wan:2018zql,
DGL19, Garcia-Etxebarria:2018ajm, HTY19, TY19,Wan:2019oax,Wan:2019soo, TZ19, WWW19,Monnier:2019ytc,
DL20a, DL20b, GOPWW20, Hsieh:2020jpj, Kai20, KPMTD20, Tho20a, Tho20b,
Yu20,Wang:2020mra,Wang:2020xyo,Wang:2020gqr,Wan:2020ynf,
CHTZ21, DL21, FH21, LOT21, SKLT21, Tac21, WY21, Yon21},
and is the approach we will use to determine the duality anomaly of type IIB string theory.


\section{Duality in type IIB string theory}
\label{sec:dual}

In order to identify the correct tangential structure entering the bordism classification discussed above we need to describe a precise version of the duality group of type IIB string theory. Since non-trivial duality backgrounds are then classified by a discrete bundle, we will also construct the relevant characteristic classes that enter in the classification of bordism generators.

\subsection{Duality group of type IIB string theory}
\label{sec:dualgroup}

We will start by introducing the detailed realization of the duality group of type IIB string theory, which will be the main character in the rest of the paper. Some of the material in this section is standard; we refer the reader to \cite{Pantev:2016nze}, and especially the Appendix of \cite{TY19}, for a particularly clear exposition.

It is a well-known fact \cite{Polchinski:1998rr} that the type IIB supergravity action has a perturbative $\SL(2,\mathbb{R})$ symmetry under which the bosonic action remains invariant. In the presence of quantized fluxes and the corresponding charged objects, this symmetry group is broken to the discrete duality group $\SL(2,\Z)$.
This is the group of $2 \times 2$ matrices with integer entries and unit determinant, i.e.,
\begin{equation}
\left(\begin{array}{cc} a & b\\ c& d\end{array}\right) \,,\quad ad-bc=1\,,
\end{equation}
and is generated by two elements conveniently chosen to be \cite{Seiberg:2018ntt}\footnote{The relation to the more common generator $T = \begin{pmatrix} 1 & 1 \\ 0 & 1\end{pmatrix}$ is given by $T =S^{-1} U$.}
\begin{align}
U = \begin{pmatrix} 0 & -1 \\ 1 & 1 \end{pmatrix} \,, \quad S = \begin{pmatrix} 0 & -1 \\ 1 & 0 \end{pmatrix} \,.
\end{align}
In terms of these, $\SL(2,\mathbb{Z})$ can be presented as
\begin{equation}
\SL(2,\mathbb{Z})=\langle U,S \,\vert \, S^4 = 1 \,, \enspace S^2=U^3\rangle \,.
\end{equation}
Interestingly, there is a way to write $\SL(2,\mathbb{Z})$ as an amalgamated free product, which is particularly useful for the computation of bordism groups, see e.g.\ \cite{Seiberg:2018ntt,Hsieh:2020jpj}. This amalgam structure is given as
\begin{align}
\SL(2,\Z) = \Z_4 \ast_{\Z_2} \Z_6 \,,
\end{align}
where the individual factors $\mathbb{Z}_4:\langle S \,\vert \, C\equiv S^2 \,, \enspace C^2= 1\rangle$ and $\mathbb{Z}_6:\langle U \, \vert \, C\equiv U^3 \,, \enspace C^2= 1\rangle$ are identified along a common $\mathbb{Z}_2:\langle C\, \vert \, C^2= 1\rangle$.

The standard action of $\SL(2,\Z)$ on the bosonic fields in the type IIB supergravity action is given by
\begin{align}
\tau = C_0 + i e^{- \phi} \longrightarrow \frac{a \tau + b}{c \tau + d} \,, \quad \begin{pmatrix} C_2 \\ B_2 \end{pmatrix} \longrightarrow \begin{pmatrix} a & b \\ c & d \end{pmatrix} \begin{pmatrix} C_2 \\ B_2 \end{pmatrix} \,, \quad \text{with} \enspace \begin{pmatrix} a & b \\ c & d \end{pmatrix} \in \SL(2,\Z) \,,
\label{eq:SLtrafo}
\end{align}
whereas the RR 4-form field $C_4$ and the spacetime metric are invariant.

Including the fermions of type IIB supergravity one can deduce their transformation under general $\SL(2,\Z)$ transformations, see \cite{Pantev:2016nze}. For that we form complex fermions out of the two Majorana-Weyl gravitini $\Psi_{\mu}^i$ and dilatini $\lambda^i$. These complexified fields transform as
\begin{align}
\Psi_{\mu} = \Psi_{\mu}^1 + i \Psi_{\mu}^2 \longrightarrow \bigg( \frac{c \overline{\tau} + d}{c \tau +d} \bigg)^{1/4} \Psi_{\mu} \,, \quad \lambda = \lambda^1 + i \lambda^2 \longrightarrow \bigg( \frac{c \overline{\tau} + d}{c \tau +d} \bigg)^{-3/4} \lambda \,. \label{tfermions}
\end{align}
The fact that we must take quartic roots in the above expression means that there is a sign ambiguity, and demands an extension of $\SL(2,\Z)$ to a double cover known as the metaplectic group $\text{Mp}(2,\Z)$. This group has the presentation \cite{Hsieh:2020jpj}
\begin{equation}
\text{Mp}(2,\mathbb{Z}) = \langle \hat{U},\hat{S}\vert \, \hat{S}^8 =1 \,, \enspace \hat{S}^2=\hat{U}^3\rangle \,.
\end{equation}
Here, $\hat{S}^4=(-1)^F$ is a central element that gets mapped to the identity under the map $\text{Mp}(2,\Z)\rightarrow \SL(2,\mathbb{Z})$. Given the above, we can also write an amalgam structure for $\text{Mp}(2,\Z)$,
\begin{align}
\Mp(2,\Z) = \Z_8 \ast_{\Z_4} \Z_{12} \,.
\end{align}
where the groups  $\mathbb{Z}_8:\langle \hat{S} \,\vert \, \hat{C}\equiv \hat{S}^2 \,, \enspace \hat{C}^4= 1\rangle$ and $\mathbb{Z}_{12}:\langle \hat{U}\, \vert \, \hat{C}\equiv \hat{U}^3 \,, \enspace \hat{C}^4= 1\rangle$ are identified along a common $\mathbb{Z}_4:\langle \hat{C}\, \vert \, \hat{C}^4= 1\rangle$.

$\SL(2,\mathbb{Z})$ can also be extended in a different way, including an action corresponding to orientation reversal of the type IIB worldsheet, as well as worldsheet left-moving fermion number, see \cite{TY19}. This extends the duality group acting on the bosons to $\GL(2,\Z)$ and the RR 4-form $C_4$ is odd under the additional generator $R$, which can be chosen to be
\begin{align}
R = \begin{pmatrix} 0 & 1 \\ 1 & 0 \end{pmatrix} \,,
\end{align}
of determinant $-1$, thus extending $\SL(2,\Z)$ to $\GL(2,\Z)$. Note that the generators above satisfy
\begin{align}
RSR = S^{-1} \,, \quad RUR= U^{-1} \,,
\end{align}
indicating that $\{R,S \}$ and $\{ R, U \}$ generate the dihedral groups $D_{8}$ and $D_{12}$, respectively, and where we define in general $D_{2n}=\langle R,S \,\vert \, S^n=R^2=1\,, \enspace RSR^{-1}=S^{-1}\rangle$ which is the group of symmetries of a regular $n$-gon. The group $\GL(2,\mathbb{Z})$ can then be presented as
\begin{equation}
\GL(2,\mathbb{Z})=\langle U,S,R \,\vert \, S^4 = 1 \,, \enspace RSR^{-1}=S^{-1} \,, \enspace RUR^{-1}=U^{-1} \,, \enspace S^2=U^3\rangle \,.
\end{equation}
This can also be written as an amalgam
\begin{align}
\GL(2,\Z) = D_{8} \ast_{D_4} D_{12} \,,
\label{eq:GLamalg}
\end{align}
with the subgroups
\begin{align}
\begin{split}
D_8& =\langle R,S \,\vert \, C\equiv S^2 \,, \enspace C^2=R^2=1 \,, \enspace RSR^{-1}=S^{-1}\rangle \,, \\
D_{12}&=\langle R,U \,\vert \, C\equiv U^3 \,, \enspace C^2=R^2=1 \,, \enspace RUR^{-1}=U^{-1}\rangle \,, \\
 D_4 &= \langle C,R \,\vert \, C^2=R^2=1\rangle \,.
\end{split}
\end{align}
Finally, to extend the action of $R$ to fermions, we need to consider a double cover, as above. The Majorana-Weyl
spinors of type IIB supergravity are only compatible with a double cover that squares reflections to the identity,
$\hat{R}^2=1$, i.e.\ the Pin$^+$ cover. We finally arrive at the full duality group of IIB supergravity \cite{TY19}, the Pin$^+$ cover of $\GL(2,\Z)$, which we denote as $\GL^+(2,\Z)$ for short. It has the presentation
\begin{equation}
\label{GL_plus_is_amalgam}
\GL^+(2,\mathbb{Z})=\langle \hat{U},\hat{S},\hat{R} \,\vert \, \hat{S}^8 = 1 \,, \enspace \hat{S}^2=\hat{U}^3 \,, \enspace \hat{R}^2=1 \,, \enspace \hat{R}\hat{S} \hat{R}=\hat{S}^{-1} \,, \enspace \hat{R}\hat{U} \hat{R}=\hat{U}^{-1}\rangle \,.
\end{equation}
As above, this group is an amalgam of dihedral groups,
\begin{equation}
\GL^+(2,\Z) = D_{16} \ast_{D_8} D_{24} \,,
\label{eq:GL+amalg}
\end{equation}
where
\begin{equation}
\begin{split}
D_{24}&=\langle \hat{U},\hat{R}\,\vert\, \hat{C}\equiv \hat{U}^3 \,, \enspace \hat{C}^{4}=1,\, \hat{R}^2=1 \,,\enspace \hat{R}\hat{U}\hat{R}=\hat{U}^{-1}\rangle \,,\\
 D_{16}&=\langle \hat{S},\hat{R}\,\vert\, \hat{C}\equiv \hat{S}^2 \,, \enspace \hat{C}^4=1 \,,\enspace \hat{R}^2=1 \,,\enspace \hat{R}\hat{S}\hat{R}=\hat{S}^{-1}\rangle \,, \\
D_8&=\langle \hat{C},\hat{R}\,\vert\, \hat{C}^4=1,\, \hat{R}^2=1,\, \hat{R}\hat{C}\hat{R}=\hat{C}^{-1}\rangle \,.
\end{split}
\end{equation}
As explained beautifully in the Appendix of \cite{TY19}, the appearance of $\GL^+(2,\Z)$ can be understood geometrically in terms of F-/M-theory duality, which maps the duality group of type IIB to the group of large diffeomorphisms of the M-theory torus. Since M-theory makes sense on non-orientable manifolds, this is $\GL(2,\mathbb{Z})$; but since a $\text{Pin}^+$ structure is required \cite{Witten:1996md,Witten:2016cio}, the symmetry group is actually the $\text{Pin}^+$ cover $\GL^+(2,\Z)$.

For our later calculations it will prove useful to evaluate the expressions at certain values of the axio-dilaton which are invariant under some of the generators of the duality group. At these points the action on the fermions takes a particularly simple form in terms of a complex phase. For the first factor in \eqref{eq:GL+amalg} the relevant value is given by $\tau = i$, which is invariant under $\hat{S}$. For the second factor in \eqref{eq:GL+amalg} one has $\tau = e^{2 \pi i/3}$ invariant under $\hat{U}$. At these special points the transformations of the fermions read
\begin{align}
{\renewcommand{\arraystretch}{1.2}
\begin{array}{c | c | c}
\tau = i & \hat{S} \, \Psi_{\mu} = e^{2 \pi i \frac{1}{8}} \Psi_{\mu} & \hat{S} \, \lambda = e^{- 2 \pi i \frac{3}{8}} \lambda \\ \hline
\tau = e^{2 \pi i / 3} & \hat{U} \, \Psi_{\mu} = e^{2 \pi i \frac{1}{12}} \Psi_{\mu} & \hat{U} \, \lambda = e^{- 2 \pi i \frac{3}{12}} \lambda
\end{array}
}
\label{eq:fermtrafo}
\end{align}
i.e., the gravitino has charge $1$ and the dilatino has charge $-3$ under the corresponding duality transformations. These have to be supplemented by the transformation property of the chiral 4-form
\begin{equation}
\hat{R} \, C_4 = - C_4 \,,
\label{eq:C4trafo}
\end{equation}
which only transforms under orientation reversal.

\subsection{Non-trivial duality backgrounds}
\label{subsec:dualback}

With the complete version of the duality identified, we can investigate the different non-trivial duality backgrounds of type IIB.

The fact that the duality group of type IIB involves fermion parity $(-1)^F$ has interesting consequences, since this $(-1)^F$ should be identified with the center of the Spin cover of the underlying spacetime manifold. This twists the Spin
structure of the spacetime manifold and the duality symmetry into what we call a Spin-$\GL^+(2,\Z)$ structure. In particular,
this means that on a general type IIB background, the fermion fields are sections of an associated vector bundle for the group
\begin{equation}
\frac{\text{Spin} \times \GL^+(2,\Z)}{\mathbb{Z}_2} \,.
\label{32dd}
\end{equation}
Thus, type IIB supergravity makes sense on manifolds that do not have a Spin structure, but which do have a Spin-$\GL^+(2,\Z)$ structure. The most familiar example of manifolds with twisted Spin structures are Spin$^c$ manifolds, where the fermions are charged under an additional $U(1)$ bundle. But there are many other examples of twisted Spin structures; for instance, see \cite{Hsi18,HTY19,Garcia-Etxebarria:2015wns}. In general, a Spin-$G$ structure, for $G$ a group with a $\mathbb{Z}_2$ center, describes fermions whose transition functions take values in a group like \eq{32dd} with $\GL^+(2,\Z)$ replaced by $G$. For instance, in later realizations we consider examples of Spin-$\GL^+(2,\Z)$ manifolds which are in the image of the map $D_{16}\rightarrow \GL^+(2,\Z)$ given by the amalgamation above, and we refer to them as having a Spin-${D_{16}}$ structure. Similarly, we also discuss Spin-${D_{8}}$ manifolds. In these cases the twisting also restricts the allowed representations of the fermions under the factors in \eqref{eq:GL+amalg}, see Appendix \ref{app:dihedralrep}.

In the familiar Spin$^c$ case, the fermions are not sections of a $U(1)$ bundle, but one can construct a principal $U(1)$ bundle by squaring the transition functions of the fermions.
Similarly, on a general Spin-$\GL^+(2,\Z)$ manifold we will not have a well-defined $\GL^+(2,\Z)$ duality bundle, but there is a natural associated $\GL^+(2,\Z) / \Z_2 \simeq \GL(2,\Z)$ principal bundle. We will use this principal  $\GL(2,\Z)$ bundle to characterize the duality background we have turned on; as usual, it can be efficiently described by using characteristic classes, which are obtained by pulling back cohomology classes of the associated classifying space $B\GL(2,\Z)$. The existence of the Spin-$\GL^+(2,\Z)$ structure is equivalent to a certain condition involving tangent bundle Stiefel-Whitney classes and characteristic classes of the $\GL(2,\Z)$ bundle, which we describe below. Again, this is in complete analogy to the more familiar Spin$^c$ case, where the Chern class of the associated $U(1)$ bundle $c_1$ is related to $w_2$ of the tangent bundle by $w_2=c_1 \text{ mod } 2$. We now describe several characteristic classes that will be important in our later discussions:

\begin{description}
\item[Mod 2 characteristic classes:] From the amalgam structure \eqref{eq:GLamalg} it follows that at primes 2 and 3 $B\GL(2,\Z)$ has the same cohomology ring as $BD_8$ and $BD_6$, respectively.\footnote{Although this is not immediate, the logic we follow here is the same as in Exercise 3 of Chapter II.7 of \cite{brown1982cohomology} (see also \cite{Seiberg:2018ntt}). Further details will be given in \cite{Upcoming}.}
The cohomology ring of $BD_8$ with $\Z_2$ coefficients is generated by three classes, $x$, $y$, and $w$ of
degrees $1$, $1$, and $2$, respectively. They are subject to the relation $x y = y^2$, i.e.,
\begin{equation}
H^* (BD_8, \Z_2) = \frac{\Z_2 [x,y,w]}{(xy = y^2)}.
\end{equation}
See~\cite[Theorem 4.6]{Sna13},~\cite[\S 2.3]{Tei92}, or~\cite[Theorems 5.5 and 5.6]{Han93}. The generators can be
described as Stiefel-Whitney classes of associated bundles:
	\begin{itemize}
		\item Let $\rho\colon D_8\to\mathrm O(2)$ denote the standard representation of $D_8$ as the symmetries of
		a square (see Appendix~\ref{app:dihedralrep} for some information on representations of dihedral groups),
		and let $V_\rho\to BD_8$ be the associated rank-$2$ vector bundle. Then $x = w_1(V_\rho)$ and $w =
		w_2(V_\rho)$.
		\item Let $\chi\colon D_8\to\{\pm 1\}$ be the character in which quarter turns are sent to $-1$ and
		reflections are sent to $1$, and let $L_\chi\to BD_8$ be the associated line bundle. Then $y =
		w_1(L_\chi)$.
	\end{itemize}
\item[Mod 3 characteristic classes:] The cohomology of $BD_6$ with $\Z_3$ coefficients is generated by two
classes $q,\tilde q$ in degrees $3$ and $4$, respectively, with the relation $q^2 = 0$:
\begin{equation}
	H^*(BD_6;\Z_3)\cong \frac{\Z_3[q, \tilde q]}{(q^2 = 0)} \,.
\end{equation}
If $\beta\colon H^*(\text{--}; \Z_3)\to H^{*+1}(\text{--};\Z_3)$ denotes the Bockstein homomorphism associated to
the short exact sequence
\begin{align}
	\shortexact{\Z_3}{\Z_9}{\Z_3},
\label{eq:Z3shortextact}
\end{align}
then $\tilde q = \beta(q)$. See \cite{Huy75} and \cite{Man78} for a proof, using the fact that $D_6$ is
isomorphic to the symmetric group of order $3$.

One can also consider characteristic classes obtained by pulling back cohomology classes of $BD_6$ for a local
coefficient system, in which the reflection in $D_6$ acts as multiplication by $-1$.
In this way one obtains a class $\hat{q}$ with twisted $\Z_3$ coefficients in degree 1 and $\hat{q}_5$ in degree
5~\cite[Theorems 5.8 and 5.9]{Han93}. In physics terms, ``twisted'' just means that the corresponding classes are
not invariant under $\GL^+(2,\mathbb{Z})$ reflections; but they will still be useful to us.

In fact, all of these characteristic classes can be naturally associated to the cohomology of $B\Z_3$ via pullback under the map
\begin{align}
B\Z_3 \rightarrow BD_6 \,.
\end{align}
The cohomology ring of $B\Z_3$ with $\Z_3$ coefficients is generated by a
class $a$ in degree 1 and a class $\beta (a)$ in degree 2, again connected by the Bockstein homomorphism
associated to \eqref{eq:Z3shortextact}. The cohomology ring is~\cite[Example 3.41]{Hat02}
\begin{align}
H^* (B\Z_3; \Z_3) = \frac{\Z_3 [a, \beta(a)]}{(a^2 = 0)} \,.
\label{eq:Z3cohom}
\end{align}
Note that the class $a$ fully specifies the $\Z_3$ bundle. Embedding $B\Z_3$ into $BD_6$, reflections send $a$ to $-a$. As a result, the pullback of a characteristic class of a $D_6$ bundle of the form $\beta(a)^n \cup a$ represents an element in $H^*(BD_6, \Z_3)$ if $n$ is odd and an element in $H^* (BD_6, \tilde{\Z}_3)$ if $n$ is even, where $\tilde{\Z}_3$ indicates the twisted coefficient system. With this one has, where $\sim$ denotes equivalence under pullback,
\begin{align}
q \sim \beta(a) \cup a \,, \enspace \tilde{q} \sim \beta(a)^2 \,, \enspace \hat{q} \sim a \,, \enspace \hat{q}_5 \sim \beta(a)^2 \cup a \,.
\end{align}
In the following we will write the mod 3 characteristic classes of the duality bundle in terms of $a$ and $\beta(a)$ keeping in mind that they are associated to the cohomology of $BD_6$ with twisted and untwisted coefficients under pullback. Only classes with untwisted coefficients can give rise, via integration, to bordism invariants.

\item[Mod 4 characteristic classes:] These classes can be constructed analogously to the mod 3 classes above, by studying the cohomology rings $H^*(BD_8, \Z_4)$ and $H^*(BD_8, \tilde{\Z}_4)$ using the embedding
\begin{align}
B \Z_4 \rightarrow BD_8 \,,
\end{align}
with $\Z_4$ generating rotations. This leads to a class $b$ in degree 1; if $\beta$ denotes the Bockstein
associated to the short exact sequence
\begin{equation}
	\shortexact{\Z_4}{\Z_{16}}{\Z_4} \,,
\end{equation}
then $b$ and $\beta(b)$ generate the cohomology ring~\cite[Example 3.41]{Hat02}
\begin{align}
H^* (B \Z_4; \Z_4) = \frac{\Z_4 [b, \beta(b)]}{b^2 = 2 \beta (b)} \,.
\end{align}
Again, $b$ fully specifies the $\Z_4$ bundle. The reflections send $b$ to $-b$ and again one can associate the elements in the (un)twisted cohomology of $BD_8$ via the pullback to combinations of $b$ and $\beta(b)$ as above. This identification is understood in the following, where we will denote the mod 4 classes in terms of $b$ and $\beta (b)$.
\end{description}
Finally, with the identification of the characteristic class of the duality bundle we can formulate the requirement for a well-defined Spin-$\GL^+(2,\mathbb{Z})$ structure. For an orientable $d$-dimensional spacetime manifold $M$ with tangent bundle $TM$ the existence of a
$\text{Spin-} \GL^+(2,\Z)$ structure demands a correlation between the second Stiefel-Whitney class of the
tangent bundle $w_2 (TM)$ and the characteristic class $w$ of the principal $\GL(2,\Z)$ bundle, namely that
\begin{align}
w_2 (TM) = w \,,
\end{align}
where $w$ is the mod 2 characteristic class described above.\footnote{I.e., a Spin-GL$^+ (2,\mathbb{Z})$ structure encodes information about the trivialization of $w_2 (TM) - w$.}


\section{Duality anomalies of type IIB string theory}
\label{sec:gen}

We will now apply the general discussion of Section \ref{sec:bordanom} to the particular case of type IIB string theory. As reviewed in
Section \ref{sec:dualgroup}, there are three versions of the duality group of type IIB string theory, in
which one successively includes the effects of fermions, and of orientation reversing worldsheet symmetries.
Consequently, there are three bordism groups one could discuss:\footnote{For the computation of these bordism groups and their generators
we used techniques including the Atiyah-Hirzebruch as well as the Adams spectral sequence. We will go into these
computations in detail in the upcoming work \cite{Upcoming}; see also \cite{KPMTD20,2018arXiv180107530B}.}
\begin{equation}
\begin{split}
\Omega_{11}^{\Spin} \big(B\SL(2,\Z)\big) &\cong (\Z_2)^{\oplus 2}\oplus (\Z_8)^{\oplus 2} \oplus \Z_{128} \oplus
\Z_3 \oplus \Z_{27} \,,\\
\Omega_{11}^{\Spin\text{-} \Mp(2,\Z)} &\cong \Z_8\oplus (\Z_2)^{\oplus 2} \oplus\Z_3\oplus \Z_{27} \,,\\
\Omega_{11}^{\Spin\text{-}\GL^+(2, \Z)} &\cong\Z_8 \oplus (\Z_2)^{\oplus 9} \oplus \Z_3 \oplus \Z_{27}.
\end{split}\label{e4588}
\end{equation}
In the above, the notation $\Omega_*^{\Spin\text{-}G}$ means $\Spin\text{-}G$ bordism, which is different
from $\Omega_*^\Spin(BG)$, where there is no twist (in the same sense used above line \ref{32dd})).
Since type IIB string theory contains fermions, the first group is not of direct physical relevance. However, this illustrates how the introduction of fermions already gets rid of many potential anomalies that could have been realized by an $\SL(2,\Z)$-invariant bosonic theory. Enlarging the structure group introduces new equivalence relations between manifolds, thereby reducing the potential anomalies (see \cite{McNamara:2019rup} for more instances of the same phenomenon).

As discussed in full generality above, our task is to determine the anomaly theory $\mathcal{A}$ of type IIB supergravity in terms of the characteristic classes of the spacetime manifold and the duality bundle and evaluate it on 11-manifolds that represent the generators of the relevant bordism groups above. We directly construct the anomaly theory associated to the duality group $\GL^+ (2,\Z)$ and then determine a complete list of generators for the bordism classes in the last entry of \eq{e4588}. This enables us to study the presence of duality anomalies in full generality.

\subsection{The IIB  anomaly theory}
\label{ssec:IIBan}

We are finally in a position to study the duality anomaly of type IIB supergravity. As reviewed in Section \ref{sec:dualgroup}, classical type IIB supergravity---the effective field theory that arises as the low-energy limit of IIB string theory---can be formulated on $\Spin\text{-}\GL^+(2, \Z)$ manifolds, and we would like to see if this feature survives the inclusion of quantum effects. Determining the anomaly theory of a general QFT is not an easy task; and here we have one coupled to gravity.  But because it has a large amount of supersymmetry, type IIB supergravity has a path-integral formulation \cite{Denef:2008wq,Polchinski:1998rr}, and the familiar perturbative formulas for fermion anomalies reviewed in \cite{Witten:2015aba,Garcia-Etxebarria:2018ajm} can be used, even at strong coupling.\footnote{Strictly speaking the path integral prescription also includes integration over the graviton variable, since this is a dynamical field in IIB string theory. As usual, by the low-energy EFT we mean the theory of all fields excluding gravity.} A similar situation takes place for 11-dimensional M-theory, or the worldvolume theory of M2-branes \cite{Witten:1996md,Witten:2016cio}, which have no weak coupling but whose low-energy limit is also controlled by a path integral description.

We should also note the works \cite{Gaberdiel:1998ui,Minasian:2016hoh} which study a duality anomaly of type IIB supergravity/F-theory as well. This anomaly is, however, different from the ones discussed below, and strictly speaking, it only arises after gauging the duality group of type IIB supergravity. We elaborate on the connection between the two anomalies in Appendix \ref{app:SL2R}.

So which supergravity fields can have an anomalous variation under duality transformations? As described in Section
\ref{sec:dualgroup}, the usual chiral fields of type IIB supergravity, namely the gravitini, dilatini, and the
self-dual chiral 4-form, all transform under the duality group. Moreover, $(C_2, B_2)$ are not invariant but
instead transform in the two-dimensional representation of the duality group \eqref{eq:SLtrafo} (see Section
\ref{sec:dualgroup} or e.g.~\cite{Denef:2008wq}, sec. 3.1).

A similar story holds for the dual 6-form fields, as well as the axio-dilaton. Although these fields transform non-trivially under dualities, they do not have an anomalous variation. The easiest way to see this is to exhibit the possibility of turning on a symmetry-preserving mass term in the Lagrangian \cite{Witten:2015aba}. Equivalently, if one can construct a Pauli-Villars regulator for the field in question while preserving the symmetries, the field is not anomalous. Although we will not do it in detail, this turns out to be the case for both the axio-dilaton $\tau$ as well as the 2-form fields $(C_2,B_2)$. This means that the integration over these fields in the path integral will not contribute to the duality anomaly; however, the background values of the fields, which can appear in additional topological couplings can (and do) affect the anomaly. Moreover, the supergravity fields can transform under higher-form symmetries, e.g.,
\begin{align}
C_2 \rightarrow C_2 + d\Lambda_C \,,\quad B_2 \rightarrow B_2 + d \Lambda_B \,.
\label{gre0}
\end{align}
These could also have anomalies of their own, or mixed anomalies with dualities, diffeomorphisms, etc. (see e.g.\ \cite{Apruzzi:2020zot} for a recent example of the phenomenon). For these anomalies we expect all fields to contribute.\footnote{Generically, we expect to find anomalies in most of these cases. This is not an inconsistency, since most of these symmetries are broken explicitly by the various branes in type IIB string theory \cite{Bergshoeff:1995sq, Polchinski:1998rr}.} While we do not include these mixed anomalies, elucidating the full symmetry type of ten-dimensional supergravities at the quantum level is an important open problem. To summarize, in the setups that we consider below, the only fields that contribute to the duality anomaly are the usual suspects: the fermions and the self-dual 4-form.

The full anomaly theory of type IIB supergravity, including duality symmetry, and evaluated on an 11-manifold $X$, is given by a generalization of the theory introduced in \cite{Hsieh:2020jpj}, including the representations under the duality group:
\begin{equation}
\mathcal{A}(X) = \eta^{\text{RS}}_{1}(X) - 2 \eta^{\text{D}}_1(X) - \eta^{\text{D}}_{-3}(X) - \tfrac{1}{8}\eta_-^{\text{Sig}}(X) + \text{Arf}(X) - \tilde{\mathcal{Q}}(\breve{c}) \,.
\label{ath}
\end{equation}
 Here, $\eta^{\text{RS}}_q$ denotes the $\eta$-invariant of the 11d Rarita-Schwinger operator (a Dirac operator coupled to the tensor product of the tangent and Spin bundles), coupled to
${\Spin\text{-}\GL^+(2, \Z)}$ in the representations given by \eq{tfermions} (the subscripts of $+1$ for gravitino
and $-3$ for dilatino denote the effective $U(1)$ charges one would get from the representation \eq{tfermions} by
embedding into $\Mp(2,\mathbb{R})$ as done in \cite{Gaberdiel:1998ui,Minasian:2016hoh}).  Similarly, $ \eta^{\text{D}}_q$ is the $\eta$-invariant of the ordinary Dirac operator coupled to the same representation
\cite{atiyah_patodi_singer_1975}.
Finally, the last three terms $\eta_-^{\text{Sig}}$ (the minus subscript indicates the transformation properties under orientation reversal \eqref{eq:C4trafo}), $\text{Arf}$, and $\tilde{\mathcal{Q}}(\breve{c})$ come from the anomaly theory of the self-dual field coupled to the ${\Spin\text{-}\GL^+(2, \Z)}$ structure background as in \cite{Hsieh:2020jpj}, which is more complicated and will be discussed briefly below.

In order to elucidate the terms on the right-hand side of \eqref{ath} as well as their physical origin in the context of associated index theorems we will describe them individually in the following:

\begin{itemize}
\item{The dilatini comprise a complex Weyl fermion $\lambda$ transforming in the representation \eq{tfermions} of the duality group. The anomaly theory is determined in terms of the 11d $\eta$-invariant $- \eta_{-3}^{\text{D}}$ of a charged fermion.\footnote{The additional minus sign corresponds to the fact that we use conventions in which the gravitino has positive chirality, i.e., the dilatino has negative chirality.}

This $\eta$-invariant can be connected to the APS index theorem \cite{atiyah_patodi_singer_1975} in the following way. Let $Y$ be a 12-dimensional Spin-$\GL^+(2, \Z)$ manifold with boundary $\partial Y = X$; then the $\eta$-invariant on $X$ is related to the index on $Y$ as follows
\begin{equation}
\eta_q^{\text{D}}(X) = \text{Index}^{\text{D}} (Y) - \int_{Y} I^{\text{D}} \,,
\label{i1}
\end{equation}
where $I^{\text{D}}$ is the usual index density; it is the same as in the purely gravitational case (i.e., it is determined in terms of the $\hat{A}$-genus), because the duality group is discrete. In this paper we will be interested in 11-manifolds $X$ which are not boundaries; the computation of the $\eta$-invariants is more subtle in this case.}

\item{The complex gravitino $\Psi_{\mu}$ transforms in the corresponding representation \eq{tfermions} of the duality group. To determine the anomaly theory, we note that an 11-dimensional Rarita-Schwinger operator has a 10-dimensional boundary mode consisting of a 10d Rarita-Schwinger field plus a Dirac fermion of opposite chirality. Moreover, the 10-dimensional Rarita-Schwinger field itself decomposes as a gravitino plus a second Dirac field representation. The anomaly theory associated to the gravitino is then \cite{Alvarez-Gaume:1984zlq,Garcia-Etxebarria:2018ajm,Hsieh:2020jpj}
\begin{equation}
\eta_1^{\text{Gravitino}} = \eta_1^{\text{RS}} - 2 \eta_1^{\text{D}} \,.
\end{equation}
In relating $\eta^{\text{RS}}$ to a 12-dimensional index via the APS theorem, we must take into account that the 12-dimensional Rarita-Schwinger operator reduces to an 11-dimensional Rarita-Schwinger field plus a Dirac fermion. Thus, the correct expression is that on $Y$ with boundary $X$,
\begin{equation}
\eta_q^{\text{RS}}(X) = \text{Index}^{\text{RS}} - \int_{Y} \big( I^{\text{RS}} - I^{\text{D}} \big) \,.
\label{i2}
\end{equation}}

\item Finally, as discussed in Section \ref{sec:dualgroup}, the chiral 4-form $C_4$ with self-dual 5-form field strength picks up a sign under reflections in $\GL(2,\Z)$, see \eqref{eq:C4trafo}. The corresponding anomaly theory has been worked out in \cite{Hsieh:2020jpj}, and it includes three terms:
\begin{equation}
\mathcal{A}_{\text{4-form}}(X)= - \tfrac{1}{8} \eta_-^{\text{Sig}}(X) + \text{Arf}(X)- \tilde{\mathcal{Q}}(\breve{c}) \,.
\label{panom}
\end{equation}
Here, $\eta_-^{\text{Sig}}$ is (modulo an integer) the $\eta$-invariant of the operator appearing as the boundary contribution to the APS index theorem for the 12-dimensional signature operator,
\begin{equation}
\eta_-^{\text{Sig}} = \text{Signature} - \int_{Y} \text{L} \,,
\label{i3}
\end{equation}
where $\text{L}$ is the Hirzebruch L-genus (see e.g.~\cite{Alvarez-Gaume:1984zlq}).

As discussed at length in \cite{Monnier:2011mv,Monnier:2011rk,Monnier:2013kna,Hsieh:2020jpj}, to properly define
the partition function of a self-dual field requires specifying a quadratic refinement $\tilde{\mathcal{Q}}$ of the
bilinear pairing in differential cohomology\footnote{Or a more sophisticated differential cohomology theory. For instance, in perturbative string theory, RR fields are quantized in differential K theory. This theory has a map to differential cohomology, and the discussion in the main text holds, with the caveat that one must restrict to differential cohomology classes in the image from the map in differential K theory. The differential K-theory description is difficult to reconcile with duality \cite{DMW02}; We expect that similar features hold for whichever differential cohomology theory must be used when general duality bundles are turned on.}. But because reflections in the duality group act on $C_4$, this
pairing is actually defined on a twisted differential cohomology group. We take the next few paragraphs to explain
how to define this pairing, the quadratic refinement, and the Arf invariant.

If $X$ is a Spin-$\GL^+(2, \Z)$-manifold, it has a canonical local system $L$, defined to be the associated local
system to the duality $\GL(2, \Z)$-bundle via the determinant $\det\colon \GL(2, \Z)$ $\to\mathrm{Aut}(\Z) = \{\pm
1\}$. In other words, if the monodromy of the duality bundle around a class $\gamma\in\pi_1(X)$ is a reflection,
$\gamma$ acts on $L$ by $-1$; otherwise $\gamma$ acts by the identity. Because reflections in $\GL(2, \Z)$ act by
$-1$ on $C_4$, $C_4$ is actually a cocycle for the twisted differential cohomology group $\check H^5(X; L)$.
$\GL(2, \Z)$ acts by the identity on $L\otimes L$, which means that the product of two twisted cohomology classes
untwists.

The chiral $4$-form field $C_4$ is the boundary mode of a $5$-form field $C_5$, which is what is actually used in the construction of the anomaly theory. The field strength $F_6$ of $C_5$ is a cocycle for $\check H^6(X; L)$. The bilinear pairing is a map
\begin{equation}
	\langle \text{--}, \text{--}\rangle \colon \check H^6(X;L)\times\check H^6(X;L) \to \R/\Z
\end{equation}
given by ``cup product, then integrate.'' Specifically, the product in differential cohomology is a map
\begin{equation}
	*\colon \check H^6(X;L)\times\check H^6(X;L) \to \check H^{12}(X; L\otimes L) \cong \check H^{12}(X; \Z).
\end{equation}
Integration lowers the degree by $\dim(X)$, so when $X$ is $11$-dimensional, the map lands in $\check
H^1(\mathrm{pt})\cong\R/\Z$, as promised.\footnote{The anomaly theory $\mathcal A$ is topological, and therefore
one might expect that this pairing can be defined on $H^6(X; L)\times H^6(X; L)$, rather than on differential
cohomology, and by replacing $F_6$ with its image under the characteristic class map, which lives in $H^6(X; L)$.
This is true: in this case this is the usual torsion pairing as described in~\cite{MA}, albeit with a
twist. Once again the fact that the cup product of two twisted classes untwists means the definition goes through.
Computing with ordinary cohomology or with differential cohomology gives the same value for the anomaly theory. See also the discussion in Appendix \ref{app:Q11}.}

Now suppose $\langle\text{--}, \text{--} \rangle\colon A\times A\to \R/\Z$ is any bilinear pairing on an Abelian
group $A$. A quadratic refinement of $\langle\text{--}, \text{--} \rangle$ is defined to be a map $\tilde{\mathcal
Q}\colon
A\to\R/\Z$ satisfying
\begin{equation}
	\langle v, w \rangle = \tilde{\mathcal Q}(v + w) - \tilde{\mathcal Q}(v) - \tilde{\mathcal Q}(w) +
	\tilde{\mathcal Q}(0) \,,
\end{equation}
for all $v,w\in A$. If $A$ is finite, the Arf invariant of $\tilde{\mathcal Q}$, denoted
$\mathrm{Arf}(\tilde{\mathcal Q})\in\R/\Z$, is defined to satisfy
\begin{equation}
 \text{Arf}(\tilde{\mathcal Q})= \frac{1}{2\pi}\mathrm{arg}\left(\sum_{a\in A} e^{2\pi i
	\tilde{\mathcal{Q}}(a)}\right) \,.
\label{eq:arfcalc}
\end{equation}
Since $\check H^6(X; L)$ need not be finite, we restrict the bilinear pairing to flat differential cohomology
classes which are torsion; the subgroup of such classes is finite when $X$ is compact.

The third term in \eq{panom}, $\tilde{\mathcal{Q}}(\breve{c})$, which involves the quadratic refinement, accounts for the coupling of the self-dual form to a background 5-form field $\breve{c}$ \cite{Hsieh:2020jpj}. This term is actually essential in IIB supergravity, because the Chern-Simons coupling
\begin{equation}
S_{\text{CS}} \sim \int C_4\wedge F_3\wedge  H_3
\end{equation}
implies that the potential $B_2\wedge H_3$ (or more precisely, its differential cohomology version $\breve{c}\sim\breve{B}_2*\breve{C}_2$) acts as a background field for the self-dual form \cite{Monnier:2013kna,Hsieh:2020jpj}. An 11-dimensional background in which $\tilde{\mathcal{Q}}(\breve{B}_2*\breve{C}_2)\neq0$ signals a mixed anomaly involving the 4-form and gauge transformations for the $B_2$, $C_2$ fields. Since the latter transform non-trivially under the duality group, there could also potentially be a mixed anomaly as well. As we will see later on, cancelling anomalies in discrete symmetries involves the addition of topological terms to the action; we expect that, in any background where the sum over $B_2$, $C_2$ induces an anomaly, there will be topological terms that cancel it, rendering IIB supergravity well-defined. Studying these terms would be very interesting on its own, but lies outside of the scope of the present paper.\footnote{In fact, the proper formulation of RR fields involves differential K-theory, and treats $C_4$ and $C_2$ in a unified manner \cite{Freed:2000tt,Freed:2002qp,DMW02,Distler:2009ri,Distler:2010an}. It is not known how to make this formulation compatible with duality, and it is conceivable that doing this would also solve the issues with the choice of quadratic refinement, discussed below. The results of this paper point to natural structures in type IIB string theory, which will hopefully be reproduced by a more delicate analysis.}

So far, we have discussed how the anomaly theory of the self-dual field works, but there is an essential issue we have sidestepped;\footnote{We are indebted to Y. Tachikawa and K. Yonekura for bringing this point to our attention.} we have not discussed the construction of the quadratic refinement $\tilde{\mathcal{Q}}$. In fact, canonical quadratic refinements depend on the particular cohomology theory under study, and do not exist in general for oriented manifolds. Reference \cite{Hsieh:2020jpj} (see also \cite{HS05}), constructed a canonical $\tilde{\mathcal{Q}}$ for ten-dimensional Spin manifolds using differential K-theory, but the construction does not extend to non-Spin manifolds or situations with a non-trivial duality bundle, which are precisely the ones of interest in the present paper.

We do not know how to extend the construction of \cite{Hsieh:2020jpj} to provide a quadratic refinement for general $\text{Spin-}\GL^+(2,\mathbb{Z})$ manifolds, or to provide an alternative. Given this, one could entertain the possibility that there is no canonical choice of quadratic refinement outside of the realm of Spin manifolds considered in \cite{Hsieh:2020jpj}.  In this case, specifying the quadratic refinement would be part of the data needed to make sense of the partition function of a 10d theory with self-dual fields, analogous to how, for example, one must specify a choice of Spin structure in theories with fermions. In this case, the bordism groups we have computed, which ignore this information, would not provide an exhaustive classification of the anomalies, and there would be more global anomalies than the ones we consider in this paper.\footnote{Let us note that at least in the context of 6d chiral 2-forms as defined by their coupling to a bulk 7d Chern-Simons-like theory of three-forms,  additional data such as a Wu structure is needed to properly quantize the edge mode theory (see e.g. \cite{Monnier:2016jlo}). Here, a Wu structure functions as the higher-dimensional analog of specifying a Spin structure for 3d Chern-Simons and its coupling to chiral edge modes. In the present context specified by quantum gravity, this option is less natural because fixing such a choice ``from the start'' is somewhat awkward. For example, in the 2d worldsheet theory of a superstring,
one actually sums over possible Spin structures.}

The other possibility is that there is a canonical choice of  quadratic refinement for each
$\text{Spin-}\GL^+(2,\mathbb{Z})$ manifold, even if we cannot construct it at present. Although we cannot
rigorously prove it, this is the more natural possibility consistent with M-/F-theory duality, since M-theory only
requires an $\mathfrak{m}^c$ structure to make sense \cite{FH21}, and at no point does one need to specify data analogous to
a quadratic refinement. A related comment here is that similar considerations apply to compactifications in any number of dimensions.
Indeed, given an F-theory compactification on an elliptically fibered space $Y_{D}$, reduction on a circle takes us to M-theory on $Y_{D}$ (in the limit of large elliptic fiber), so duality again suggests that this additional structure should not be required to make sense of the
corresponding F-theory backgrounds.\footnote{In the context of F-theory in its original formulation as a 12d theory \cite{Vafa:1996xn} on a background geometry of signature $10+2$, the corresponding graviton supermultiplet contains both a 4-form potential $C_4$ and a privileged 1-form $\mu$ (see e.g. \cite{Castellani:1982ke, Nishino:1997gq}), and as proposed in \cite{Heckman:2017uxe}, this can alternatively be formulated in terms of a chiral 5-form $C_5$ which produces the 4-form of $10+2$ supergravity via $C_4 = \mu \cdot C_5$ as in \cite{Heckman:2017uxe}. Viewing 10d type IIB supergravity as an edge mode of this bulk 12d theory, there is a corresponding topological coupling $\mu \wedge C_5 \wedge dC_5$. Reduction on a timelike circle descends to the 11d topological term we have been discussing, while reduction on a null circle passes directly to the 10d edge mode theory and the theory of a chiral 4-form in ten dimensions.
It would be interesting to make further contact between our current analysis and the more speculative aspects of \cite{Heckman:2017uxe},
but we defer such issues to future work.}

Given this state of affairs, in this paper we will \emph{assume} that there is a canonical choice of quadratic refinement for each $\text{Spin-}\GL^+(2,\mathbb{Z})$ manifold. In fact, in most cases that will be of interest to us later on, we will be able to determine which quadratic refinement should be chosen in each manifold we consider, solely from the requirement of anomaly cancellation. Amazingly, we will find that anomalies can always be cancelled by some (essentially unique) choice of quadratic refinement. Thus, our results should be regarded as a ``bottom-up'' approach, in which we are able to bootstrap the correct quadratic refinement. In turn, this can be interpreted as providing experimental evidence suggesting that the choice of quadratic refinement is indeed unique. However, since we are only interested in anomalies involving the duality bundle, we will set $\breve{c}=0$ for the time being. That being said, the term $\tilde{\mathcal{Q}}(\breve{c})$ will make an important appearance later on.

\end{itemize}

Putting the above contributions together, we recover \eq{ath}. As a cross-check of the above, one can evaluate the
anomaly theory on a manifold $X$ with $[X] = 0$ in $\Omega_{11}^{\Spin\text{-}\GL^+(2, \Z)}$, i.e., $X$ bounds
a Spin-$\GL^+(2, \Z)$ $12$-manifold $Y$. The APS index theorems \eq{i1}, \eq{i2} and \eq{i3} (after taking into account the Arf invariant contribution, too) allow one to rewrite \eq{ath} as
 \begin{equation}
 I^{\text{RS}} - 4 I^{\text{D}} - \tfrac{1}{8} \text{L} = 0 \,,
 \label{ewe}
 \end{equation}
 which is the celebrated type IIB anomaly cancellation identity \cite{Alvarez-Gaume:1983ihn}.

\subsection{Computation of the anomaly}
\label{ssec:computation}

We now turn to the central question of this paper: is the theory \eq{ath} non-trivial for some Spin-$\GL^+(2, \Z)$ manifolds? Equation \eq{ewe} shows that the anomaly theory \eq{ath} is a bordism invariant. We have computed the relevant bordism group, which is
\begin{equation}
\Omega_{11}^{\Spin\text{-}\GL^+(2, \Z)} \cong \Omega_{11}^{\Spin\text{-}D_{16}}\oplus \Omega_{11}^{\Spin}(BD_{24})
= \Z_8 \oplus (\Z_2)^{\oplus 9} \oplus \Z_{27} \oplus \Z_3 \,,
\label{wee}
\end{equation}
where again, the notation $\Omega_*^{\Spin\text{-}G}$ means $\Spin\text{-}G$ bordism, which is different from $\Omega_*^{\text{Spin}}(BG)$, where there is no twist (in the same sense used above line (\ref{32dd})).
The result \eq{wee} comes from a combination of Adams spectral sequence techniques and computations of $\eta$-invariants, which we will report (alongside bordism groups of lower degree) in a separate publication \cite{Upcoming}. For our considerations here, the importance of having \eq{wee} is to guarantee that there are no more anomalies than the ones that we will study momentarily.

To check for anomalies we need to find representatives for the generators of each of the factors in \eq{wee} and evaluate
the anomaly theory \eq{ath} on each of them.\footnote{The generators we use are natural from a mathematical perspective, but they may not be the most natural possibilities from a physics standpoint. For instance, we do not care about how many supercharges are preserved etc. This will be more relevant in our upcoming work \cite{Upcoming}, where we will study lower-dimensional bordism groups and their bordism defects (which are generalizations of S-folds).} One of the generators, $X_{11}$ which we will call ``\href{https://letstalktarot.wordpress.com/2011/01/12/arcanum-xi-the-enchantress/}{\textcolor{black}{Arcanum XI}}'', seems to not have been discussed in the mathematical or physics literature, and we describe it in Section \ref{sec:x11} as well as in more detail in Appendix \ref{app:A}. Moreover, we relegate the details of the calculations of the anomalies to Appendix \ref{app:Anomcomp}, where several useful formulas including the $\eta$-invariants of spin-$\tfrac32$ fermions on lens spaces are derived. The results are summarized in the following table, where we list, for each of the factors in \eq{wee}, a generator, a cohomology class or $\eta$-invariant that detects it (using the notation in Section \ref{sec:dualgroup}), and the value of the anomaly theory on each of them:

\begin{equation}
\begin{array}{c|c|c|c}
\text{Factor} & \text{Generator} &\text{Detector} & \mathcal{A}(\text{gen.})\\\hline
\Z_{27} & L_3^{11}&\eta_1^{\text{D}}-\eta^{\text{D}}_3&\tfrac13\\
\Z_{3} & \HP^2\times L_3^{3}& \eta_{1}^{\text{RS}} - \eta_3^{\text{RS}} &\tfrac13\\
\Z_{8} & Q^{11}_4 &\eta_1^{\text{D}}-\eta^{\text{D}}_3& \tfrac{k}{4}  \\
\Z_{2} & \HP^2\times L_4^{3}&\tilde{\eta}_1^{\text{RS}}- 2 \tilde{\eta}_1^{\text{D}} - \tilde{\eta}_{-3}^{\text{D}}&\tfrac12 \\\hdashline
\Z_{2} & \RP^{11} &x^{11}&0\\
\Z_{2} & \widetilde{\RP^{11}}&y^{11}&0\\
\Z_{2} & \HP^2 \times \RP^{3}&w_4^2\, x^3&0\\
\Z_{2} & \HP^2 \times \widetilde{\RP^{3}}& w_4^2\,y^3&0\\
\Z_{2} & X_{10}\times S^1& w_4\, w_6\,x&0\\
\Z_{2} & X_{10}\times \widetilde{S^1}&w_4\,w_6\,y&0\\
\Z_{2} & X_{11}&w_2^4\,x^3&0\,\text{or}\,\frac12\\
\Z_{2} & \widetilde{X_{11}}&w_2^4\,y^3&0\,\text{or}\,\frac12
\end{array}\label{results}\end{equation}

\noindent Here, $\tilde{\eta}$ are reduced $\eta$-invariants introduced in \eqref{eq:reducedeta} in Appendix \ref{app:Anomcomp}.
We now describe some of the manifolds in the second column of \eq{results}:
\begin{itemize}

\item $L^{2k-1}_n$ denotes the lens space $S^{2k-1}/\Z_{n}$, where $\Z_n$ acts as
\begin{equation}
(z_1,z_2,\ldots z_k) \in \mathbb{C}^{2k} \rightarrow e^{\frac{2\pi i}{n}}(z_1,z_2,\ldots z_k) \,,
\end{equation}
and $S^{2k-1}$ is regarded as the unit sphere in $\mathbb{C}^{2k}$. Principal $\Z_n$ bundles over $L^{2k-1}_n$ are
classified by $H^1(L^{2k-1}_n,\Z_n)$. The lens spaces for all the entries in \eq{results} are equipped with the
$\Z_n$ bundle $S^{2k-1}\to L_n^{2k-1}$; the class of this bundle is a generator of the cohomology group $H^1(L_n^{2k-1};\Z_n)$. This $\Z_n$ bundle specifies the associated principal $\GL(2,\Z)$ bundle over the lens space, via the embeddings $\Z_4 \rightarrow \GL(2,\Z)$, $\Z_3 \rightarrow \GL(2,\Z)$ sending the generators to $S$ and $U$, respectively. In these cases, the lift from the $\GL(2,\Z)$ bundle to a $\Spin\text{-}\GL^+(2, \Z)$ bundle is unique, so the solutions are specified completely.

\item $\mathbb{HP}^2$ is the quaternionic projective plane, one of the two generators of $\Omega_8^{\Spin}$ \cite{bams/1183527786}, with a trivial duality bundle over $\mathbb{HP}^2$. As a cross-check, we also computed anomalies for the other generator of $\Omega_8^{\Spin}$, the Bott manifold, although an Adams spectral sequence argument (which we will explain in \cite{Upcoming}) shows that the anomaly on products of Bott manifolds and lens spaces is linearly dependent with the anomaly on $\mathbb{HP}^2 \times L^3_n$ as written here. For completeness, we recall that a Bott manifold is defined as a Spin 8-manifold with unit Dirac index; we consider here the particular example with $p_1=0$ discussed in \cite{FH21}. We take a trivial duality bundle over the Bott manifold. With these choices, all anomalies involving products of Bott manifolds vanish. Although we do not use this fact here, it is worth noting that there are examples of Bott manifolds with $\Spin(7)$ exceptional holonomy \cite{Joyce} which therefore preserve two real supercharges when used as compactification spaces of type II string theory and M-theory.

\item $Q_4^{11}$ is a lens space bundle with fiber $L^9_4$ over the 2-sphere, see e.g.\ \cite{10.4310/jdg/1214459973}. The lens space bundle is obtained by a quotient of the sphere bundle embedded into the rank 5 complex vector composed out of four trivial line bundles and the tensor square of the Hopf line bundle over $S^2$. The group action of $\mathbb{Z}_4$ on the fiber is the same as the action above for lens spaces, with the duality bundle determined by this group action. This manifold is not Spin, but it admits a Spin-$\mathbb{Z}_8$ (and in fact a Spin$^{c}$) structure. Although there is no fundamental obstacle to evaluating the anomaly theory in this background, the non-trivial fibration over the base $S^2$ complicates the computation, which we have not performed. Instead, we study anomalies in this bordism class indirectly by computing the anomalies in the 11-dimensional lens space $L_4^{11}$, which is Spin and generates a $\mathbb{Z}_2$ subgroup of the full $\mathbb{Z}_8$ factor generated by $Q_{4}^{11}$. With the techniques developed in Appendix \ref{app:Anomcomp} we can evaluate the anomaly theory in this background to be $\mathcal{A}(L_4^{11}) = \tfrac12$. Additionally, we can use indirect arguments involving the anomaly cancellation mechanism we discuss below  to establish that $\mathcal{A} (Q_4^{11})=k/4$ for some $k$ an integer as indicated in Table \eqref{results}.  While we were not able to fully determine $k$, we check the anomaly cancellation for $L_4^{11}$ and present some arguments that suggest that also the anomaly associated to $Q_4^{11}$ is cancelled.

\item The generators we have discussed so far, above the dashed line, are in the image of the natural map $\Omega_{11}^{\text{Spin-Mp}(2,\mathbb{Z})}\rightarrow \Omega_{11}^{\text{Spin-GL}^+(2,\mathbb{Z})}$. Equivalently, the duality bundles only involve duality transformations of determinant $+1$ (see Section \ref{sec:dualgroup}). This is not the case for the generators below the dashed line; they are manifolds with duality bundle involving $\GL^+(2,\mathbb{Z})$ reflections. As explained in Section \ref{sec:dualgroup}, a Spin-$D_{16}$ manifold has a $\text{Spin-GL}^+(2,\mathbb{Z})$ structure in a canonical way; the generators below the dashed line in \eq{results} are in fact all Spin-$D_{16}$ manifolds. Additionally, all the examples turn out to have $\Spin\text{-}{D_8}$ structures, which are defined analogously to $\Spin\text{-}{D_{16}}$ structures (see Section \ref{sec:bordanom}). There are two embeddings $i, \widetilde\imath\colon D_8\to D_{16}$, as illustrated in Figure \ref{square_in_octagon}, and therefore a $\Spin\text{-}{D_8}$ manifold $M$ has two associated $\Spin\text{-}{D_{16}}$ structures, which we denoted $M$ and $\widetilde M$, respectively, in the table above.

To describe these embeddings in more detail, consider $D_8$ as the group of symmetries of the square $[-1,
1]\times[-1,1]\subset\R^2$. Let $r$ be rotation by $\pi/2$ counterclockwise and $s$ be a reflection through the
$x$-axis; likewise, consider $D_{16}$ as the group of symmetries of a regular octagon in a plane, oriented such
that four of its sides are parallel to the $x$- and $y$-axes. Then $i\colon D_8\to D_{16}$ sends $r$ to a
counterclockwise rotation by $\pi/2$ and $s$ to reflection through the $x$-axis; $\widetilde\imath\colon D_8\to
D_{16}$ sends $r$ to the same rotation, but sends $s$ to the reflection through the line $y = x/2$, which meets two
vertices of the octagon. In Figure \ref{square_in_octagon}, $i$ corresponds to the embedding on the left and
$\widetilde\imath$ corresponds to the embedding on the right.

\begin{figure}[ht!]
\centering
\resizebox{0.4\textwidth}{!}{
	\begin{tikzpicture}
		\draw[thick] node[regular polygon,regular polygon sides=8,draw,minimum size=4.5cm] {};
		\draw[very thick, Blue] (-1.475, 1.475) -- (1.475, 1.475) -- (1.475, -1.475) -- (-1.475, -1.475)
		-- cycle;
		\fill[Blue] (-1.475, 1.475) circle (0.6mm);
		\fill[Blue] (-1.475, -1.475) circle (0.6mm);
		\fill[Blue] (1.475, -1.475) circle (0.6mm);
		\fill[Blue] (1.475, 1.475) circle (0.6mm);

		\draw[thick] node[regular polygon,regular polygon sides=8,draw,minimum size=4.5cm] at (6, 0) (p8){};
		\draw[very thick, Blue] (p8.corner 1) -- (p8.corner 3) -- (p8.corner 5) -- (p8.corner 7) -- cycle;
		\foreach \y in {1, 3, 5, 7} {
			\fill[Blue] (p8.corner \y) circle (0.6mm);
		}
	\end{tikzpicture}}
\caption{The two embeddings $i,\widetilde\imath\colon D_8\to D_{16}$ of the symmetries of a square
into the symmetries of an octagon.}
\label{square_in_octagon}
\end{figure}
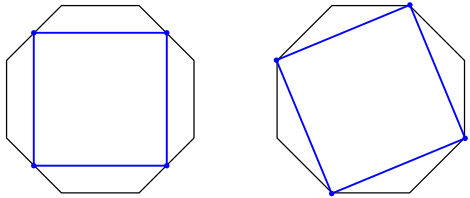

Much like a Spin\textsuperscript{$c$} manifold has an associated principal $U(1)$ bundle, a $\Spin\text{-}{D_8}$ manifold has an associated $D_4 = \Z_2\times\Z_2$ bundle. For the generators $X$ we list below, we will specify this principal $\Z_2\times\Z_2$ bundle, and any $\Spin\text{-}{D_8}$ structure on $X$ with this associated bundle can be chosen.

\item $\RP^n$ is the usual real projective space. The $\Z_2\times \Z_2$ bundle is non-trivial in its first factor only.

\item $X_{10}$ is the Milnor hypersurface that is one of the generators of $\Omega_{10}^{\Spin}= (\Z_2)^{\oplus 3}$
\cite{DMW02}. The Dirac and Rarita-Schwinger indices on this manifold vanish, and it is detected by an integral of Stiefel-Whitney classes of the tangent bundle, $\int w_4 w_6$. The $\Z_2\times \Z_2$ bundle over $X_{10}$ is trivial, and the $\Z_2\times \Z_2$ bundle over $S^1$ is specified by demanding that going around the circle picks up an action of the first $\Z_2$. The two embeddings of this $\Z_2$ into $D_8$ correspond to $S^1$ and $\widetilde{S^1}$, as above.

\item The ``Arcanum XI'' manifold $X_{11}$ and its $\Spin\text{-}{D_8}$ structure do not seem to have appeared in
the literature before, and are described briefly in Section \ref{sec:x11} and also in Appendix \ref{app:A}. Here,
we will only comment that $X_{11}$ is constructed as a quotient of $S^6\times S^5$ by a $\Z_2\times \Z_2$ action, and that it can also be understood as a non-trivial fibration of $\RP^5$ over $\RP^6$. $X_{11}$ is the only class whose anomaly we were unable to compute (or even connect to another closely related computation, as in the case of $Q_{4}^{11}$); because $2 [X^{11}] = 0$ in the bordism group, its anomaly is listed as ``$0$ or $\frac12$'' in \eq{results}. We will also explain the difficulties in computing the Arcanum anomaly in Section \ref{sec:x11}.
\end{itemize}

There are special points in the upper half plane which are invariant under finite subgroups of the duality symmetry group $\Spin\text{-}\GL^+(2, \Z)$. These special points are $\tau=i$, where a $\Z_8$ is restored, and $\tau=e^{2\pi i/3}$, where a $\Z_6$ is restored. The $D_8$ generated by (the Pin$^+$ lift of) reflections leaves any purely imaginary $\tau$ invariant.\footnote{See e.g. reference \cite{Dierigl:2020wen}
for further discussion on the physical significance of this special locus in the context of 4d QFTs with interfaces at strong coupling.}
 As explained in Section \ref{sec:dualgroup}, the full duality group is an amalgam of the finite groups restored at these special points, which suggests that the full duality anomaly can be reconstructed from the anomalies in these finite generating subgroups. The results in \eq{results} show that this is indeed the case: for the generators above the dashed line, the axio-dilaton is constant and fixed to one of the special values  $\tau=e^{2\pi i/3}$ (first two cases) or $\tau=i$ (last two), and the transition functions of the duality bundle are contained within $\Z_6$ and $\Z_8$, respectively. Below the dashed line, the axio-dilaton can take any imaginary value, but the duality bundle is contained within the $D_8$ subgroup which includes reflections.

\subsection{Physical interpretation and anomaly cancellation}
\label{subsec:cancel_anomaly}

The results in \eq{results} indicate that there is no consistent way to define pure type IIB supergravity on an arbitrary manifold. Importantly, none of the generators of the bordism group where the anomaly is non-vanishing is a mapping torus, i.e.\ an 11-manifold of the form $(X \times [0,1])/\sim$, where $\sim$ acts as a diffeomorphism on $X$ and identifies the two ends of the interval. Indeed, mapping tori are precisely the type of manifolds associated to anomalies in the traditional sense of the word \cite{Witten:2015aba}. The anomalies we have uncovered are more subtle, being of the ``Dai-Freed type'' which signify an inconsistency of the theory if one allows for topology changes, as explained in Section \ref{sec:bordanom}. This is a natural thing to do in the context of a theory of gravity, so as discussed in Section \ref{sec:bordanom} we take the point of view that the anomalies \eq{results} signal a pathology of the theory and must be cancelled.

An analogous situation occurs for $\mathcal{N}=1$ supersymmetric theories in 10 dimensions. There, the fermion content of the theory leads to a perturbative anomaly in a gauge symmetry, which is cancelled by the contribution of a topological term via what is known as the Green-Schwarz mechanism \cite{Green:2012pqa}. As we will now see, the anomalies in \eq{results}  above the dashed line can be cancelled by a similar mechanism. Specifically, we will show that a subtle modification of the Chern-Simons term of type IIB supergravity is enough to cancel the anomalies. As discussed above, we check anomaly cancellation in the $\mathbb{Z}_8$ factor for a $\mathbb{Z}_2$ subgroup generated by $L_4^{11}$ explicitly, and present an argument why we believe that this cancellation extends to $Q_4^{11}$. On classes below the dashed line the anomaly cannot be cancelled by the mechanism we are about to describe. However, all of the possible anomalies vanish, except possibly for those associated to $X_{11}$, which we were not able to compute.

To see this, recall that in Subsection \ref{ssec:IIBan} we explained that a background connection can be coupled to the self-dual field $C_4$, and that this is connected with mixed anomalies involving $C_4$, $C_2$, and $B_2$. The proper formalism, developed in \cite{Hsieh:2020jpj}, represents the background connection $\breve{c}$ as an element in differential cohomology, and the anomaly theory when taking into account this background connection contains a term of the form
\begin{equation}
\mathcal{A} \supset -\tilde{\mathcal{Q}}(\breve{c}) \,,
\label{am567}
\end{equation}
in terms of the quadratic refinement involved in the construction of the anomaly of the self-dual field. For the
particular case of Spin manifolds, a canonical choice of quadratic refinement exists \cite{HS05, Hsieh:2020jpj}, and in Appendix \ref{app:A} we determine it from anomaly cancellation in lens spaces without duality bundle. Using these techniques, we can construct the following table, involving the four Spin-Mp$(2,\Z)$ entries in \eq{results}\footnote{An additional line involving manifolds of the form $\text{Lens}\times\text{Bott}$ would show vanishing anomalies, if we take the particular Bott manifold constructed in \cite[\S 5.3]{FH21}.} as well as $L_4^{11}$ (which is bordant to four copies of $Q_{4}^{11}$):

\begin{equation}\begin{array}{c|c|c|c|c|c|c|c}
\text{Class}&\mathcal{A}&\text{Arf}&\tilde{\mathcal{Q}}&\beta(a)^2\cup a &\frac{(p_1)_3}{2}\cup a&\beta(b)^2\cup b&\frac12\left[(p_1)_4-\mathcal{P}(w)\right]\cup b\\\hline
L_3^{11}&1/3&1/4&n^2/3&1&0&0&0\\
L_3^{3}\times\mathbb{HP}^2&1/3&1/4&n^2/3&0&1&0&0\\
L_4^{11}&1/2&3/8&3n^2/8&0&0&0&2\\
L_4^{3}\times\mathbb{HP}^2&1/2&3/8&3n^2/8&0&0&0&2\end{array}\label{anoccc}
\end{equation}

The first column lists representatives of the relevant bordism classes, and the second column lists their anomalies. The third column gives the Arf invariant associated to the corresponding space, determined from the requirement that anomalies without a duality bundle turned on should cancel; see Appendix \ref{app:Anomcomp}. Note that since $Q_4^{11}$ is not Spin, one cannot turn off the duality bundle and derive the Arf invariant in this way. The numbers in the last four entries of the table denote entries in torsion cohomology in terms of the characteristic classes of the tangent bundle of the spacetime manifold as well as the duality bundle discussed in Section \ref{subsec:dualback}. To evaluate these for the given manifolds we use that the Pontryagin class of the lens space $L_n^{2k-1}$ is given by \cite{bams/1183527946,10.2307/1994874}
\begin{equation}
p=(1+x^2)^k \,,
\label{pclasslens}
\end{equation}
where $x$ is a generator of $H^2(L_n^{2k-1},\Z)=\Z_n$, together with the Pontryagin classes of $\mathbb{HP}^2$ (discussed in Appendix \ref{app:Anomcomp}, or also e.g.\ in \cite{10.2307/2373152,FH21}). These classes should be understood as integers modulo $n$ for $\Z_n$. This means that the first two classes are elements of $\mathbb{Z}_3$ and the latter two are classes in $\mathbb{Z}_4$.  The fourth column lists the only quadratic refinement compatible with the given Arf invariant, which can be determined using \eqref{eq:arfcalc}. Finally, in the last entry, $\mathcal{P}$ is the Pontryagin square operation \cite{gamkrelidze2019selected}.

To compute the entries in the last two columns of the table, it is important to take into account that $L_4^{11}$ and $L_4^3$ are Spin manifolds, and so, per the general considerations of Section \ref{sec:bordanom}, the principal $\mathbb{Z}_4$ bundle associated to the $\text{Spin-}{\mathbb{Z}_8}$ structure has a class which is an even number of times the generator of $H^1$ with $\mathbb{Z}_4$ coefficients.

Notice the peculiarity that all the anomalies that we can explicitly determine can be cancelled by the term associated to the quadratic refinement for a particular choice of $\breve{c}$, where we further set the background fields $B_2$ and $C_2$ to zero. A bit of guesswork reveals
that the following combinations can cancel the anomalies:
\begin{equation}
\breve{c}_0 = Y_5 = \left(\lambda_1\beta(a)^2+\lambda_2\frac{(p_1)_3}{2}\right)\cup a + \frac{\lambda_3}{2}\left[(p_1)_4-\mathcal{P}(w)\right]\cup b + \kappa \, \beta(b)^2 \cup b \,,
\label{Y5def}
\end{equation}
where $\lambda_i$ are signs (they can only be $\pm 1$), which we leave undetermined for now. Here, $(p_1)_j$ denotes the reduction of the first Pontryagin class modulo $j$. The coefficient $\kappa$ cannot be determined from the anomalies in \eqref{anoccc}, but potentially contributes for $Q_4^{11}$ as we will argue below.

This opens up a very simple and elegant possibility to cancel all anomalies. Suppose that the full action of IIB string theory contains a term of the form
\begin{equation}
(C_4,Y_5)\approx \int F_5 \cup\left[  \left(\lambda_1\beta(a)^2+\lambda_2\frac{(p_1)_3}{2}\right)\cup a+  \frac{\lambda_3}{2}\left[(p_1)_4-\mathcal{P}(w)\right]\cup b + \kappa \, \beta(b)^2 \cup b \right],
\label{fstr}
\end{equation}
where the second term should be understood as taking values in cohomology with $U(1)$ coefficients. Since
$\lambda_3 [(p_1)_4 - \mathcal P(w)]$ lives in mod $4$ cohomology, dividing it by $2$ is not automatic, and indeed
there are closed, oriented manifolds for which $\lambda_3 [(p_1)_4 - \mathcal P(w)]$ is not equal to twice another
mod $4$ class.\footnote{This is a combination of a calculation of $\mathcal P(w_2)$ in~\cite{Wu54, Wu59} and a
description of $H^4(B\mathrm{SO}; \Z_4)$ in~\cite[Theorem 1]{CV95}.} On Spin-$\Mp(2, \Z)$ manifolds, though, we can
divide, as proven in Appendix \ref{app:divi}.

Having the term~\eqref{fstr} in the action ensures that $\breve{c}$ is turned on according to Table \eq{anoccc}, so that anomalies cancel in all accessible cases. The topological coupling we discuss here mixes discrete and continuous fields in type IIB supergravity, realizing a version of the Green-Schwarz mechanism different from both the ordinary one in \cite{Green:2012pqa} and the topological Green-Schwarz mechanism described in \cite{Garcia-Etxebarria:2017crf}, whose possible application we discuss in Section \ref{subsec:topoGS}.

We will now explain why \eq{Y5def} is the right kind of class to couple to the chiral 4-form. In particular, it is possible to interpret it as a differential cohomology class with coefficients twisted by the determinant representation of $\text{GL}^+(2,\Z)$.  That the twisting must appear was shown in Section \ref{sec:dualgroup}, since reflections map $a\rightarrow-a$, $b\rightarrow-b$. The characteristic classes on the right hand side of \eq{Y5def} are elements in $H^5(X,\tilde{\Z}_3)$ and $H^5(X,\tilde{\Z}_4)$, respectively. Both of these can be regarded as elements of $H^5(X,\widetilde{U(1)})$ via the natural maps instead. As explained in \cite{Hsieh:2020jpj}, elements of this group precisely define a flat twisted differential  cohomology class. Thus, the coupling above is well-defined.

The value of the coefficients $\lambda_1$ and $\lambda_3$ can be fixed by an indirect, heuristic argument demanding agreement with the charge quantization properties of S-fold backgrounds \cite{Garcia-Etxebarria:2015wns,Aharony:2016kai}, which we now present. It would be desirable to check this argument rigorously; this would involve dimensional reduction of the anomaly theory of the self-dual field with fixed boundary conditions.

Consider a stack of $N$ D3-branes in flat space. The type IIB duality symmetry acts as a global symmetry of the worldvolume theory. As studied in \cite{HTY19}, this symmetry is anomalous, and the anomaly can be related to the quantization properties of the charges in generalized S-fold backgrounds. These are type IIB backgrounds of the form $S^5/\Z_n$, for $n=2,3,4,6$, which also involve a non-trivial duality bundle on the lens space.

As shown in \cite{TY19,Hsieh:2020jpj}, a stack of $N$ D3-branes moving in a closed loop around this space picks up a total phase given by
\begin{equation} \int_{S^5/\Z_n} F_5 + \mathcal{A}\in\Z,\label{543}\end{equation}
where $\mathcal{A}$ is the duality anomaly of the Maxwell theory in the S-fold background on the worldvolume theory of the brane stack. The value of this anomaly was computed in \cite{HTY19} for several cases, it is fractional, and matches the values of $\int F_5$ determined by an M-theory computation.

Thus, consistency forces the background $S^5/\Z_n$ to have fractional $\int F_5$ flux threading it. While the
argument involving Dirac quantization is solid, in principle the quantization condition on $F_5$ should come
directly from an analysis of the 10-dimensional theory. We will now see how the term \eq{fstr} may give such a
fractional contribution, and in doing so, we will fix the value of $\lambda_1$ and $\lambda_3$. Consider the
near-horizon geometry of the $N$ D3-branes. This is described by the familiar $\mathrm{AdS}_5\times S^5$ geometry. In the holographic context, we expect anomalies of the boundary field theory to arise as topological sectors in the action of the bulk dual geometry \cite{Aharony:1999rz}. In particular, this means that  dimensional reduction of 10d type IIB supergravity should produce a 5d action including a term of the form
\begin{equation}
S_{\text{5d}} \supset \int_{\mathrm{AdS}_5} \mathcal{A} \,,
\end{equation}
involving the duality bundle. A naive reduction of the 10d term \eq{fstr} on a sphere with $N$ units of five-form flux produces
\begin{equation}
S_{\text{5d}} \supset N \int_{\mathrm{AdS}_5}   \left[\left(\lambda_1\beta(a)^2+\lambda_2\frac{(p_1)_3}{2}\right)\cup a+  \frac{\lambda_3}{2}\left[(p_1)_4-\mathcal{P}(w)\right]\cup b + \kappa \, \beta(b)^2 \cup b\right] \,,
\label{3ed}
\end{equation}
which has the right form to match the duality anomaly of the boundary theory. We will see this is the case for the $\Z_3$ and $\Z_4$ part, separately.

For $\Z_3$, consider the 5d theory on $S^5/\Z_3$. The duality anomaly has a contribution of $- \tfrac29$ coming
from the gauge fields, and a second contribution of $- \tfrac19$ per complex fermion. Since there are four
of these, we get a total anomaly of $- \tfrac69 \sim + \tfrac13$.  Since according to formula \eq{pclasslens} the
Pontryagin classes of $S^5/\Z_3$ are trivial, the $\lambda_2$ term does not contribute and we get agreement if we
set $\lambda_1=+1$. This matching is not entirely trivial; had the anomaly not been a multiple of 3, it would not have been possible to capture it with a term of the form \eq{3ed}. As we will see in the next Section, this is also related to the fact that the $\mathbb{Z}_3$ part of anomaly theory \eq{eq:3anomalt} is in fact a bosonic tQFT, even though it is the anomaly of a theory which includes fermions.

For the $\Z_4$ S-folds, things do not work out so simply. The Pontryagin class is $p_1 = 3 x^2$, where $x$ is the generator of $H^2(S^5/\Z_4,\Z)=\Z_4$. Since the space $S^5/\Z_4$ is not Spin, $p_1$ is not divisible by 2; the term $\mathcal{P}(w)=\beta(b)^2$ solves this, with the result that the class $\tfrac12[(p_1)_4-\mathcal{P}(w)]= x^2$. Adding the possible contribution from the $\kappa \beta(b)^2\cup b$ term in \eq{Y5def}, this becomes $(\lambda_3+\kappa) \, x^2\cup b$. This is a modulo 4 class, yet the correct S-fold charge is $3/8$. The discrepancy, which is just a factor of two, is possibly related to the fact that we did not perform the dimensional reduction of the theory of the self-dual field properly, but rather treated it as if it was an ordinary bilinear pairing. This seems to be fine at prime three, but not at prime 2. It would be very interesting to solve this problem and provide a more rigorous consistency check of our proposal.

So far we have (partially) reproduced the duality anomaly of the D3-brane worldvolume theory in the case where no background fields have been turned on. The $U(N)$ worldvolume theory (including the center of mass degrees of freedom) also has electric and magnetic $U(1)$ 1-form symmetries for which 2-form backgrounds can be turned on. This is what happens in some S-fold backgrounds, and we should account for their charges too. In the following we will do this, again heuristically. Analogous to the procedure above, we would expect that the difference of $F_5$ charges between these two backgrounds should be given by a term of the form
\begin{equation}
N \int B_2 \cup F_3 = \frac{N}{2} \int (B_2 F_3 - C_2 H_3) = \frac{N}{2} \int (B_2,C_2) \,,
\label{342}
\end{equation}
where in the last equality we have replaced the product in cohomology by $\tfrac12$ of the $\SL(2,\Z)$ duality invariant differential cohomology pairing constructed in \cite{Hsieh:2020jpj}. This division by two does not make sense in general, but it can be replaced by its quadratic refinement
\begin{equation}
\frac{N}{2} \int (B_2,C_2) \,\rightarrow \, N \mathcal{Q}(\breve{c}_2) \,,
\label{3423}
\end{equation}
where $\breve{c}_2$ is the differential cohomology class encoding the discrete flux on the S-fold background. As was proven in \cite{HTY19,Hsieh:2020jpj}, this term correctly accounts for the charge differences between the different S-folds, provided the appropriate quadratic refinement is chosen in each example.

Thus, we have reproduced, at least heuristically, part of  the duality and 1-form anomalies of the worldvolume theory of D3-branes, providing some support that the term \eq{fstr}, which also cancels the ten-dimensional anomalies, is indeed correct. Nevertheless, a full consistency check involving additional constraints that can further fix the coefficients $\lambda_2$ and $\kappa$ would be desirable.

 Another check we have not performed comes from the anomaly on $Q_4^{11}$, which we did not evaluate due to technical difficulties. If the mechanism we propose is correct, the anomaly on $Q^{11}$ should cancel as well once the contribution of the duality bundle is taken into account. We will now explain the entry for $Q_4^{11}$ in \eq{results}, justifying why the anomaly should be of the form $k/4$, for $k$ an integer. The actual anomaly theory of IIB, including the $\mathcal{Q}(\breve{c})$ piece, should be an invertible tQFT, and hence a $\text{Spin-}\GL^+(2,\Z)$ bordism invariant. As we have just seen, it evaluates to zero on the lens space $L_4^{11}$ with nontrivial duality bundle, which is in turn bordant to four copies of $Q_4^{11}$. That means that the anomaly theory satisfies
 \begin{equation}\mathcal{A}(Q_4^{11})-\mathcal{Q}(\breve{c}_{Q_4^{11}})=\frac{k'}{4},\end{equation}
 for some integer $k'$. On the other hand, as reviewed in Appendix \ref{app:Q11}, any possible quadratic refinement of the torsion pairing on $H^6 (Q_4^{11}, \mathbb{Z}) = \mathbb{Z}_4 \oplus \mathbb{Z}_4$ evaluates to $k''/4$ for some integer $k''$. Taking $k=k'-k''$ gives the advertised result. It would be extremely interesting to check this explicitly.

The anomaly cancellation mechanism we outlined here can also be presented in a different light. As explained in Section \ref{ssec:IIBan}, we have assumed the existence of a canonical quadratic refinement for a given $\Spin\text{-}\GL^+(2, \Z)$ structure. For most of the generators of the bordism group, we were able to determine this canonical quadratic refinement indirectly, by demanding that anomalies without a duality bundle are cancelled. Our results can be recast as the statement that anomalies are cancelled by an appropriate choice of quadratic refinement, even with a duality bundle. Then, the term \eq{fstr} encodes the shift in the quadratic refinement that takes place when the duality bundle is turned on. It may very well be that, once an appropriate top-down general prescription for the quadratic refinement is found, it will reproduce the effects of \eq{fstr}. Doing this would be extremely interesting on its own, and would also provide yet another non-trivial check of our proposal.

We end this section with a final comment. We have seen how the duality symmetry $\mathcal{G}_{IIB}$ of type IIB has an
anomaly that can be cancelled, rendering the theory consistent. Of course, we expected this, since the fact that
this symmetry is gauged is the essential ingredient in F-theory. In fact, the duality bundle is interpreted in
terms of the Spin cover of the group of large diffeomorphisms of the elliptic fiber. One could imagine that this
trivializes the duality anomaly, i.e.\ perhaps an elliptic fibration cannot produce the most general
$\Spin\text{-}\GL^+(2, \Z)$ bundle, and so F-theory would automatically be anomaly free. This is not the case, as
all the bundles and generators in our table \eq{results} can be explicitly shown to correspond to an elliptic
fibration in this way.


\section{Alternative type IIB theories}
\label{sec:altIIB}

In the previous section we described a simple way to cancel the duality anomalies by modifying the topological couplings of the theory. Importantly, this did not require the introduction of new fields in the supergravity theory. However, there are other possibilities to cancel the anomaly using a topological version of the Green-Schwarz mechanism \cite{Garcia-Etxebarria:2017crf}. If these have a UV completion they would lead to alternative versions of type IIB string theories. Another natural question is whether anomaly considerations of the type above can rule out certain alternative duality groups of type IIB, associated to certain subgroups of $\SL(2,\mathbb{Z})$. We will discuss these two classes of different realizations of type IIB theories in the following.

\subsection{Cancellation via a topological Green-Schwarz mechanism}
\label{subsec:topoGS}

Since the duality anomaly is discrete, in some cases it can be cancelled via the introduction of anomalous \emph{gapped} degrees of freedom. Following \cite{Garcia-Etxebarria:2017crf}, where a similar mechanism was introduced to match anomalies of 8d theories, we will call this the \emph{topological Green-Schwarz mechanism}. In this scenario, the correct low-energy limit of IIB string theory would not only contain the ordinary type IIB supergravity, but it would also include a (non-invertible) tQFT denoted $\Xi$, with no local degrees of freedom, and the property that its partition function on a 10-manifold $M$ which is the boundary of an 11-dimensional manifold $X$ is given by
\begin{equation}
Z_{\Xi}(M) = e^{- 2 \pi i \mathcal{A}(X)} \,.
\end{equation}
Together with the explicit expressions for the anomalous phase of the path integral of IIB supergravity given in Table \eqref{anoccc}, the combination
\begin{equation}
Z_{\text{IIB}}(M)\,Z_{\Xi}(M) \,,
\end{equation}
has vanishing anomaly. Notice that the quadratic refinement of the self-dual field (the term indicated in \eq{am567}) is not turned on in this alternate anomaly cancellation mechanism.

Another way of saying that $\Xi$ is a tQFT is that the anomaly theory $\mathcal{A}$ admits a (symmetry-preserving) gapped boundary condition. The topological Green-Schwarz mechanism only works if this is indeed the case.
However, not every tQFT admits a gapped boundary condition, and we are presently lacking a classification concerning that property (see, however, \cite{Cordova:2019bsd,Cordova:2019jqi} for partial results). The situation is better for \emph{bosonic} tQFT's, those which can be completely written in terms of cohomology classes.

We now demonstrate that at least the $\mathbb{Z}_3$ part of the duality anomaly of type IIB supergravity is of this form, and then comment on the $\mathbb{Z}_4$ part of the anomaly. The $\mathbb{Z}_3$ part of the anomaly can be recast using the characteristic classes of the duality as well as the tangent bundle in the following way
\begin{align}
\mathcal{A}_{\mathbb{Z}_3} = \int [\beta(a)^4 + (p_2)_3] \cup \beta (a) \cup a] \,.
\label{eq:3anomalt}
\end{align}
Here, $(p_2)_j$ denotes the mod $j$ reduction of the second Pontryagin class. The fact that this is possible guarantees that the anomaly can be cancelled by a bosonic tQFT, which is special for the following reason. Since perturbative anomalies cancel, $\mathcal{A}_{\mathbb{Z}_3}$ is a bordism invariant, and so in a class like the generator of $\Z_{27}$ one would have expected it to take a generic value mod 27. A value like, e.g., $\tfrac{1}{27}$, can be captured by a fermionic tQFT, but not by a bosonic theory. The exact value that $\mathcal{A}$ takes, 9 mod 27, is \textit{precisely} such that it can be represented by a cohomology class.

Using the fact that the Pontryagin class of the lens spaces is given by \eqref{pclasslens}, one can check that the right hand side of \eq{eq:3anomalt} equals the value of $\mathcal{A}$ in the first two bordism classes above the dashed line in \eq{results}, which we will demonstrate explicitly for $L_3^{11}$. The duality bundle is specified by an element in $H^1 (L_3^{11}, \mathbb{Z}_3) = \mathbb{Z}_3$ under pullback from the classifying space $B\mathbb{Z}_3$, which can be identified with the characteristic class $a$ of the bundle. Similarly, we can pullback the class in $H^{11} (B\mathbb{Z}_3, \mathbb{Z}_3)$ which, using the cohomology ring \eqref{eq:Z3cohom}, can be identified with $\beta(a)^5 \cup a$. This means that if we integrate the class $\beta (a)^5 \cup a$, or more precisely its pullback over the lens space with the given duality bundle, we obtain the class 1 mod 3 in $H^{11} (L_3^{11}, \mathbb{Z}_3)$. From \eqref{pclasslens} we further know that the second Pontryagin class of $L_3^{11}$ is given by $15 x^4 = 0 \text{ mod } 3$, and the second term in \eqref{eq:3anomalt} vanishes in this background. Therefore, we find
\begin{align}
\mathcal{A}_{\mathbb{Z}_3} (L_3^{11}) = \int_{L_3^{11}} \beta(a)^5 \cup a = \tfrac13 \,,
\end{align}
which reproduces the desired result \eqref{results}. Similarly, $\mathcal{A}_{\mathbb{Z}_3}$ reproduces the anomaly for $L_3^3 \times \mathbb{HP}^2$.

Bosonic tQFTs always admit families of gapped boundary conditions which can be explicitly constructed, as described in detail in \cite{Garcia-Etxebarria:2017crf}. These provide alternatives to the anomaly cancellation mechanism outlined in the previous section, which involved a modification to the background coupling of an existing massless field (the self-dual form). These alternative anomaly free versions are completely fine from the point of view of the low-energy supergravity. However, the equations of motion of the new fields often impose constraints for the allowed field backgrounds of the type IIB supergravity fields, which forbid configurations we know are present in ordinary type IIB string theory.  Thus, these variant possibilities cannot correspond to the low-energy limit of type IIB string theory. This opens up two possibilities:
\begin{itemize}
	\item{These alternative versions of type IIB supergravity are in the Swampland \cite{Vafa:2005ui}. That is, they cannot be UV completed to consistent theories of quantum gravity. It would be nice to understand exactly why.}
	\item{The alternative theories are consistent, and correspond to different UV completions 	of the ordinary IIB supergravity. Although these possibilities are indistinguishable at the level of massless fields, the completeness principle  \cite{Polchinski:2003bq,Banks:2010zn,Harlow:2018tng} demands that they differ in their spectrum of extended objects, as we will discuss below. In this case, the cobordism conjecture \cite{McNamara:2019rup} would require that these different UV completions are connected to ordinary IIB string theory via dynamical domain walls.}
\end{itemize}
It would be highly interesting to investigate which of the two possibilities is actually realized.

We will now explain how to cancel the $\mathbb{Z}_3$ part of the discrete anomalies in Section \ref{sec:gen} using the topological Green-Schwarz mechanism, and we start by summarizing the discussion in \cite{Garcia-Etxebarria:2017crf}. Consider a higher-form version of the topological $BF$ theory, where the explicit degrees of freedom are a $(d-p)$-form field $B$ and a $(p-1)$-form field $A$, with field strength $F$. The Lagrangian of the theory is given explicitly by\footnote{Strictly speaking, this should be done at the level of differential cohomology, replacing the pairing $B\wedge F$ by its differential-cohomology version $B*A$ described in \cite{Hsieh:2020jpj}.}
\begin{equation}
Z_{\Xi}(M) = e^{-S_{BF}} \,,\quad S_{BF} = 6 \pi i  \int_M B \wedge F \,.
\label{act5}
\end{equation}
This topological theory can be shown to be equivalent in a different duality frame to pure $(p-2)$-form $\Z_3$ gauge theory \cite{Banks:2010zn}. The theory has manifest discrete $(p-1)$-form electric and $(d-p)$-form magnetic symmetries, which we can couple to background fields $X_{p}$ and $Y_{d-p+1}$, modifying \eq{act5} to
\begin{equation}
S_{BF}= 2\pi i \int_M \left( 3 \, B \wedge F + B\cup X_p + A\cup Y_{d-p+1} \right) \,.
\label{act6}
\end{equation}
This theory has a mixed anomaly between the electric and magnetic higher-form symmetries, which is captured by the anomaly theory $\mathcal{A}_{BF}$ in $(d+1)$ dimensions that can be written in terms of the background fields $X_p$ and $Y_{d-p+1}$,
\begin{align}
\mathcal{A}_{BF} = \tfrac{1}{3} (-1)^{d-p+1} \int_X X_p \cup Y_{d-p+1} \,.
\label{brer}
\end{align}
It takes values in $\mathbb{Z}_3$. It follows that if a given anomaly theory can be written as the product of two $\Z_3$ cohomology classes the corresponding anomaly can be cancelled by coupling to the $BF$ theory discussed above, thus realizing the topological Green-Schwarz mechanism. Due to the expression \eq{eq:3anomalt}, at least the $\mathbb{Z}_3$ part of the duality anomaly theory of IIB supergravity is of such a form that it can be cancelled by the inclusion of additional discrete $\mathbb{Z}_3$ fields in type IIB string theory. Note that they are not part of the massless fields in the low-energy supergravity description. Further, because the anomalies they take part in are at strong coupling, it is difficult to imagine they can be detected in perturbative string theory as well.

 $BF$ theories like the one above also contain non-trivial extended operators which are the higher-dimensional generalization of ``Wilson and 't Hooft'' lines. These are given by
\begin{equation}
\exp\Big( 2\pi i \int_{\Sigma_{p-1}} A\Big) \,, \quad \exp\Big( 2\pi i \int_{\Sigma_{d-p}} B\Big) \,.
\label{topo}
\end{equation}
The completeness principle \cite{Polchinski:2003bq,Banks:2010zn,Harlow:2018tng} requires the existence of extended objects on which these operators can end  (see \cite{Rudelius:2020orz,Casini:2020rgj} for recent extensions involving non-invertible symmetries). These objects have a $(p-1)$- and $(d-p)$-dimensional worldvolume, respectively, and are the electrically and magnetically charged objects of the higher-form symmetries discussed above. The fact that they are mutually non-local objects, i.e., they have a non-trivial Dirac pairing, has to be remedied by the introduction of localized degrees of freedom on the brane worldvolumes  \cite{TY19,Hsieh:2020jpj}. Moreover, the associated anomaly theories are determined in terms of the background fields of the higher-form symmetries and read
\begin{equation}
\mathcal{A}_{A} = \int X_p \,, \quad \mathcal{A}_{B} = \int Y_{d-p+1} \,.
\end{equation}
 These objects are very similar to Nielsen-Olesen strings and the
$\Z_n$ charged particles in ordinary four-dimensional $BF$ theory \cite{Nielsen:1973cs,Banks:2010zn,Heidenreich:2021tna}. Note also that the topological operators \eqref{topo} coupling to these objects are torsional. This implies that a sufficient number of the charged states, three in the examples discussed above, can decay to the vacuum since they do not carry charge.

In our particular case, the inclusion of a term \eqref{act5} to cancel the discrete duality anomalies in connection with the completeness hypothesis demands the existence of branes with localized degrees of freedom charged under the duality group. Clearly, the specific realization of objects depends on the choice of the Green-Schwarz mechanism. We now briefly discuss different possibilities that cancel the $\mathbb{Z}_3$ part of the duality anomaly as well as the correlated spectrum of extended objects:\footnote{Note that this is a far from exhaustive list of possibilities.}
\begin{itemize}
\item $X_3=\beta(a)\cup a$, $Y_8=\beta(a)^4+(p_2)_3$. The electrically charged extended objects are strings, and their magnetic duals are 7-branes.
\item $X_4=\beta(a)^2$, $Y_7=\beta(a)^3\cup a$, together with $X_4'=(p_1)_3$ and  $Y'_7=(p_1)_3 \cup \beta(a) \cup a$ (this requires two copies of the topological GS mechanism). The electric objects are 2-branes, and their magnetic duals are 6-branes.
\end{itemize}

Finally, the topological Green-Schwarz mechanism also places restrictions on the allowed backgrounds. This can be seen from the equations of motion of the theory \eq{act5} which require that $X_{p}=Y_{d-p+1}=0$. Integrating out these fields we hence end up with the original theory with additional constraints imposed as restrictions on the duality background. These are enough to show that type IIB supergravity with the topological Green-Schwarz mechanism \eq{act5} does not uplift to type IIB string theory. For instance the lens space $S^3/\mathbb{Z}_3$ with $\Z_3$ duality bundle is obstructed, but in type IIB string theory, this is just the near-horizon geometry of a $\mathbb{C}^2/\mathbb{Z}_3$ singularity. This shows that the first example in the list above is incompatible with the known string dualities; similar caveats apply to other examples of the form \eq{act5}.

So far, we have focused on the cancellation of the $\mathbb{Z}_3$ part of the anomaly above, since for this case we have a simple presentation of the anomaly theory in terms of cohomology classes of the duality bundle. Our understanding of the $\mathbb{Z}_4$ part of the anomaly is at the moment more limited, since in this case, the evaluation of the 11d anomaly theory on manifolds such as $Q_{4}^{11}$ suggests that a fermionic tQFT would be necessary to realize a topological Green-Schwarz mechanism. The main complication here is that as far as we know, a general treatment about the conditions necessary to arrange for gapped boundary conditions in fermionic tQFTs are less well understood when compared with their bosonic counterparts. If such a boundary condition is available, then we can repeat a quite similar analysis to what we presented in the $\mathbb{Z}_3$ case. That being said, one could also imagine a ``hybrid scenario'' in which the $\mathbb{Z}_3$ part is cancelled by the topological Green-Schwarz mechanism, while the $\mathbb{Z}_4$ part is cancelled by the additional Chern-Simons-like coupling we proposed for type IIB supergravity.\footnote{If one could establish that a gapped boundary condition for the $\mathbb{Z}_4$ part exists, then one could also contemplate a hybrid situation where the $\mathbb{Z}_3$ part is cancelled by our Chern-Simons-like coupling, while the $\mathbb{Z}_4$ part is cancelled by a topological Green-Schwarz mechanism.}

To summarize, we argued that there are several distinct versions of the topological Green-Schwarz mechanism that cancel at least part of the duality anomaly found in Section \ref{sec:gen}. These correspond to seemingly consistent supergravity theories but, as far as we can tell, are incompatible with F-/M-theory duality and so cannot be the low energy limit of the type IIB string theory we know.
As emphasized in Section \ref{ssec:IIBan}, we assume the existence of a canonical quadratic refinement for each $\text{Spin-}\GL^+(2,\mathbb{Z})$ manifold, depending solely on its $\text{Spin-}\GL^+(2,\mathbb{Z})$ structure, but we do not have a way to construct it.  It may very well be that a top-down prescription for the quadratic refinement in general $\text{Spin-}\GL^+(2,\mathbb{Z})$ manifolds is incompatible with the variant IIB theories outlined before, explaining why they are inconsistent. For instance, it may be that the term \eq{fstr} is included as part of the prescription for the ``correct'' quadratic refinement when a duality bundle is turned on. We will further discuss possible fates for these variant IIB theories in Section \ref{sec:concl}.

\subsection{Comment on congruence subgroups and the Swampland}

An anomaly can signal an inconsistency of the theory, and so it can be a great tool to place effective field theories in the Swampland. This is particularly true of theories with high supersymmetry, where anomalies and Swampland considerations can become extremely powerful \cite{Lee:2019skh,Kim:2019ths,Katz:2020ewz,Cvetic:2020kuw,Montero:2020icj,Hamada:2021bbz,Tarazi:2021duw}. Here, we briefly report on what we could say about type IIB and its duality symmetry by thinking along these lines.

When one says that type IIB string theory has a
$\Spin\text{-}\GL^+(2, \Z)$ symmetry, what is meant is that the spectrum of non-perturbative D-brane states is
compatible with such a symmetry. There is ample evidence coming from dualities that this is the right answer. But
who is to say that there is not a different version of type IIB string theory, with the same low-energy
supergravity limit, and a different duality group? For instance, one could replace the role of $\SL(2,\Z)$ by a
such as $\Gamma_0(n)$ even though there is evidence that these theories are in the
Swampland.\footnote{Replacing $\SL(2, \Z)$ with
$\Gamma_0(n)$ for $n$ sufficiently large leads to non-contractible cycles in the corresponding moduli space;
see also the discussion in \cite{Dierigl:2020lai}.}
One may wonder if anomalies and and their cancellation might put constraints or
even rule out these possibilities. Unfortunately, the answer is no. Suppose we replace $\GL^+(2,\Z)$ by a finite
index subgroup $\mathcal{D}$. Then, any $\Spin\text{-}\mathcal{D}$ manifold is also a $\Spin\text{-}\GL^+(2, \Z)$
manifold in a natural way. Therefore we cannot gain any new information about the anomaly by examining
Spin-$\mathcal D$ manifolds. The anomaly cancellation mechanisms discussed above still work, provided the characteristic classes that define the duality bundle survive. Thus, it does not seem possible to constrain the duality group via these anomalies, at least in a straightforward way.


\section{The manifold \texorpdfstring{$X_{11}$}{X11}}
\label{sec:x11}

The purpose of this section is to briefly discuss $X_{11}$, ``Arcanum XI,'' and summarize some of its properties.\footnote{Following \cite{ArcanumXI}, Arcanum XI, or Arcanum 11 is the tarot card of Force, Strength, or Fortitude, but is also often referred to as the ``Enchantress''. Quoting from \cite{ArcanumXI}:

\begin{centering}
``According to the Egyptian tradition, this describes the transmuting of energies though [sic] directed thinking and induced emotions.  What does this mean?  Sometimes we don’t have control over our circumstances, but we do have control over how we decide to respond to our circumstances.''
\end{centering}

But to some, there is also another side to Arcanum XI. Again quoting from \cite{ArcanumXI}:

\begin{centering}
``Unfortunately, there is another side to this card, too.  The Enchantress has a nickname among a small group of people who belong to the Church of Light, which is the 10\% solution card.  The reason for this nickname is that, more often than not in a reading, this card does not represent spiritual struggle but represents a con artist, or someone that it pulling the wool over the eyes of the querent; for example a beguiling lover who is not truthful or faithful, or an agent who is not honest.''
\end{centering}

Let us also note that the arch nemesis of the Swamp Thing (and thus indirectly the entire Swampland program)
is Anton Arcane, the evil necromancer genius (first appearance in Swamp Thing \#1 (November 1972),
first full appearance in Swamp Thing \#2 (January 1973)).} We provide the corresponding proofs in \cref{app:A,app:Anomcomp}.

Most of the manifolds appearing in our list of generators are familiar objects, such as lens spaces and projective spaces; the kinds of manifolds one would expect to find as generators of a Spin bordism group.\footnote{For example, compare with the lists of generators in~\cite{gilkey1989geometry, Garcia-Etxebarria:2018ajm, Hsi18, FH21}.} Though a Spin-$\GL^+(2, \Z)$ manifold need not be Spin, these familiar generators with the exception of $Q_{4}^{11}$ admit Spin structures, making the calculation of the anomaly theory for the self-dual field easier, partly because there is a canonical choice of quadratic refinement.

However, it is not possible to represent all elements of $\Omega_{11}^{\Spin\text{-}\GL^+(2, \Z)}$ below the dashed line in Table \eqref{results} with Spin manifolds: as a consequence of the Adams spectral sequence computation we undertake in~\cite{Upcoming}, the bordism invariants $\int w_2^4 x^3$ and $\int w_2^4 y^3$, thought of as homomorphisms $\Omega_{11}^{\Spin\text{-}\GL^+(2, \Z)}\to\Z_2$, are linearly independent, and in particular non-zero (which also implies that the manifold is not Spin, since $w_2\neq0$). Therefore, any set of generating manifolds for the bordism group below the dashed line must contain at least two non-Spin manifolds. Moreover, many of the standard examples one might use to try to realize these last two generators do not work, and the generators we found -- the same underlying manifold, called $X_{11}$, equipped with two different Spin-$\GL^+(2, \Z)$ structures -- are unwieldy to work with; in particular, we were not able to calculate the anomaly theory for the self-dual field on these generators, though we can constrain it.

In the following we will briefly outline the construction of $X_{11}$. Label the coordinates in $\R^{13}$ by $(x_1,\dotsc,x_7, y_1,\dotsc,y_6)$, and consider $S^6\times S^5\subset\R^{13}$ as the locus $\lvert{\vec x\rvert} = \lvert{\vec y\rvert} = 1$. The Klein four-group $\Z_2\times\Z_2 = \{1,\alpha,\beta, \alpha\beta\}$ acts freely on $S^6\times S^5$ as follows:
\begin{equation}
\begin{aligned}
	\alpha(x_1,\dotsc,x_7, y_1,\dotsc,y_6) &= (-x_1,\dotsc,-x_7, -y_1,y_2, y_3\dotsc,y_6)\\
	\beta(x_1,\dotsc,x_7, y_1,\dotsc,y_6) &= (x_1,\dotsc,x_7, -y_1,\dotsc, -y_6).
\end{aligned}
\end{equation}
$X_{11}$ is defined to be the quotient
\begin{equation}X_{11}=\frac{S^6\times S^5}{\Z_2\times\Z_2}.\end{equation}
 This is a closed, oriented $11$-manifold. Projecting onto the $x_i$ coordinates defines a map $X_{11}\to\RP^6$; this is a fiber bundle with fiber $\RP^5$.

The quotient map $S^6\times S^5\to X_{11}$ is a principal $\Z_2\times\Z_2$ bundle. We use this to define two
different principal $\GL(2, \Z)$ bundles on $X_{11}$, via two different embeddings $i, \widetilde\imath\colon
\Z_2\times\Z_2\to \GL(2, \Z)$.\footnote{These are the two embeddings $D_4\to D_8$ analogous to the ones depicted in Figure \ref{square_in_octagon} followed by the standard inclusion $D_8\hookrightarrow \GL(2, \Z)$.} In
\ref{X11_w2}, we show that these two duality bundles give rise to Spin-$\GL^+(2, \Z)$ structures, which we denote
$X_{11}$ and $\widetilde X_{11}$; \ref{X11_w2} also shows that $X_{11}$ is not Spin.

For our purposes in this paper, we need to know the following facts about $X_{11}$:
\begin{itemize}
	\item For $X_{11}$, $\int w_2^4x^3 = 1$ and $\int w_2^4y^3 = 0$; for $\widetilde X_{11}$, $\int w_2^4x^3 = \int
	w_2^4y^3 = 1$ (see \ref{mod_2_coh_invariants_X11}). This means that we can use $X_{11}$ and $\widetilde X_{11}$ as the
	two remaining non-Spin manifolds in our generating set.
	\item For both $X_{11}$ and $\widetilde X_{11}$, the $\eta$-invariants associated to the Dirac, Rarita-Schwinger,
	and signature operators vanish (\ref{eta_X11_vanish}). Therefore, $\mathcal A(X_{11})$ and $\mathcal
	A(\widetilde X_{11})$ depend solely on the quadratic refinement associated to the self-dual field.
	\item The torsion subgroup of $H^6(X_{11}; L)$, where $L$ is the local system that $C_4$ transforms in, is
	isomorphic to $\Z_2\oplus\Z_2$ (\ref{twisted_calc}). Therefore, (\ref{twisted_implies}), the (exponentiated)
	Arf invariant of any quadratic refinement of the torsion pairing must be either $\pm 1$ or $\pm i$. Moreover, the fact that $X_{11}$
	generates a $\mathbb{Z}_2$ factor suggests that the Arf invariant is constrained to be $\pm 1$. The same is
	true for $\widetilde X_{11}$.
\end{itemize}
Unlike in most of the previous examples,  we were not able to completely determine the quadratic refinement on $X_{11}$ or $\widetilde X_{11}$.  In the end, this is the reason why we cannot determine its anomaly. In fact, it is
not clear that there is a canonical quadratic refinement: as discussed briefly in Section \ref{ssec:IIBan}, the constructions given in \cite{HS05} and \cite{Hsieh:2020jpj} do not apply (at least in a straightforward way) to Spin-$\GL^+(2, \Z)$ manifolds
for which $w_2\ne 0$. The point of view that we took elsewhere in this paper (that a canonical quadratic refinement exists, even if we do not know to construct it generally) is not helpful now, since we cannot switch off the duality bundle on $X_{11}$. Showing that a canonical quadratic refinement exists in general Spin-$\GL^+(2, \Z)$ manifolds and providing a prescription to compute it are the main obstacles in evaluating the anomaly for Arcanum XI.


\section{Conclusions and future directions}
\label{sec:concl}

Type IIB string theory has an exact $\GL^+(2,\Z)$ symmetry, which is supposed to be gauged. This is not only demanded by general Swampland considerations about the absence of global symmetries \cite{Vafa:2005ui,Banks:2010zn,Harlow:2018tng,vanBeest:2021lhn}; it is also a fundamental ingredient in the web of dualities in string theory, and in particular is mapped to a geometric (and thereby manifestly gauged) symmetry under F-theory and its duality with M-theory.

Yet for this picture to be consistent, $\GL^+(2,\Z)$ must be free from anomalies. The goal of this paper was to figure out whether it actually is. Of course, in the end the answer turns out to be yes, but in a very interesting way. We determined the bordism group controlling the anomaly, the precise anomaly theory of type IIB supergravity (relying heavily on the results of the seminal works \cite{TY19,HTY19,Hsieh:2020jpj}), and evaluated it to find that the theory indeed has $\Z_4$- and $\Z_3$-valued anomalies. The anomalous backgrounds involve three and 11-dimensional versions of lens spaces, and are natural generalizations of the S-fold backgrounds discussed in the literature \cite{Garcia-Etxebarria:2015wns,Aharony:2016kai}.

Amazingly, the anomalies we found are exactly of the right kind to be cancelled by a subtle modification of the Chern-Simons coupling of type IIB supergravity. The anomaly cancellation mechanism we describe only works because the anomalies discussed above take very particular values (concretely, the values the anomaly theory takes are always the quadratic refinement of the torsion linking pairing of the manifolds where the anomaly is evaluated). This is particularly clear for the $\mathbb{Z}_{27}$ factor of the bordism group, for which only one possible value can be cancelled by this mechanism -- precisely the one realized by string theory. We checked that the same mechanism works for all the other $\text{Spin-}\Mp(2,\mathbb{Z})$ anomalies, with the caveat that we were able to explicitly check the cancellation only for a $\mathbb{Z}_2$ subgroup of the $\mathbb{Z}_8$ factor. Even in this case, we gave arguments why we expect this to extend to the full $\mathbb{Z}_8$. Furthermore, we provided some heuristic arguments that the modification that we construct is also related to known quantization conditions on the RR charges of S-fold backgrounds, which gives some support that the terms are indeed part of the ten-dimensional action. It would be very interesting to provide further checks that can help us elucidate whether the terms we propose are correct and explicitly evaluate the anomaly theory on $Q_4^{11}$. In particular, we should explore its ramifications under the duality web, and pinpoint a concrete counterpart in the M-theory dual. This is true more generally, as a comment on S-fold backgrounds. For instance, the $S^5/\Z_3$ background must have a fractional $F_5$ flux. In the dual M-theory perspective, where the background is $(S^5\times T^2)/\Z_3$, the $F_5$ flux becomes a quantization condition on $G_7$ flux. This is directly related to a diffeomorphism anomaly of M5-branes in this background, as was shown in \cite{Hsieh:2020jpj}.

One of the results of our investigations is particularly intriguing. We have constructed several families of type IIB supergravities in which the usual duality group $\text{GL}^+(2,\Z)$ can be gauged, but which at the same time cannot be the low-energy limit of IIB string theory, because some familiar orbifold backgrounds are forbidden. The symmetry type of these theories is different from ordinary type IIB supergravity, and so maybe one can show some of them are inconsistent, by studying compactifications in generalized backgrounds required by the cobordism conjecture as in \cite{McNamara:2019rup}. Yet another, exciting possibility is that in fact these theories are just fine quantum gravities, different UV completions of the same IIB supergravity theory, in a similar fashion to the possible global forms that a gauge group can have for a given algebra. There are hints of similar discrete choices in 11-dimensional M-theory \cite{FH21} and type I strings \cite{Sethi:2013hra}. Of course, we would expect all of these possibilities to be part of a single landscape, connected by domain walls \cite{McNamara:2019rup}.

Strictly speaking, our results involved extending the anomaly theory of the self-dual field in \cite{Hsieh:2020jpj} beyond its realm of validity. This anomaly theory depends on a crucial way on a certain quadratic refinement of a differential cohomology theory pairing, and  \cite{Hsieh:2020jpj}  constructed it only for Spin manifolds and the particular case of differential K theory. An essential assumption of the present paper is that the construction can be extended to general Spin-$\GL^+(2, \Z)$ manifolds in some way, which we left unspecified. Interestingly, this assumption together with anomaly cancellation seems to be enough to ``bootstrap'' the quadratic refinement in all cases discussed above, which correspond to  ``Spin-Mp'' anomalies, for which the
corresponding duality bundles only involve elements in the subgroup $\Mp(2,\Z)\subset \GL^+(2,\Z)$ (up to some subtleties with the $\mathbb{Z}_8$ generator $Q_{4}^{11}$). We have also explored all nine possible non-perturbative genuine $\GL^+(2,\Z)$ anomalies. For seven cases, we could show that the anomalies vanish identically. Interestingly, it is not possible to modify the topological term that cancels the $\Mp(2,\Z)$ anomalies so that it cancels genuine $\GL^+(2,\Z)$ anomalies as well. In this sense, it is good news for our proposed mechanism that these anomalies cancel by themselves.

There are just two bordism classes, both generated by the manifold we called $X_{11}$, Arcanum XI, for which we could not constrain the anomaly. The manifold $X_{11}$, which we constructed explicitly, is a non-Spin manifold with a $\Spin\text{-}{D_8}$ structure, where $D_8$ is the dihedral group with eight elements. The anomaly of type IIB supergravity in this background is at most a sign, but we could not determine which one. Doing so will likely involve an extension of the formalism for self-dual fields in \cite{Hsieh:2020jpj}, since as outlined above one needs a canonical  choice of a quadratic refinement using tools other than the Spin structure. Studying a general prescription for the quadratic refinement and computing this anomaly are natural directions for future research.

Another direction is the full determination of the duality bordism groups of type IIB supergravity in any number of dimensions. Via the cobordism conjecture \cite{McNamara:2019rup}, they naturally lead to a classification of generalized S-fold backgrounds and defects in type IIB string theory. Furthermore, the bordism groups should be extended to take into account backgrounds of RR fields. We will report on this in an upcoming publication \cite{Upcoming}.

All in all, our work raises more questions than answers -- on the mathematical framework underlying self-dual
fields, topological terms in type IIB supergravity, and the uniqueness of string theory. As a final question that we
attacked but could not solve with anomalies alone, we could not exclude that there are versions of type IIB
supergravity where the duality group is replaced by e.g.\ a congruence subgroup of $\SL(2,\Z)$. Together with the alternate anomaly free versions of type IIB involving topological Green-Schwarz terms we described above, we regard it as a very important question to determine whether they exist as new phases of string theory with the same low-energy dynamics as type IIB string theory, or they belong to the Swampland. We must find out whether there is a (discrete) Landscape of duality groups, or if, on the contrary, the $\GL^+(2,\Z)$ duality symmetry was truly meant IIB.

\subsection*{Acknowledgements}

We thank M. Del Zotto, D. Freed, I. Garc\'{i}a Etxebarria, J. McNamara and C. Vafa for helpful discussions.
We thank D. Freed and E. Torres for comments on an earlier draft.
We especially thank Y. Tachikawa and K. Yonekura for correspondence and comments on an earlier draft.
JJH and MM thank the 2021 Simons summer workshop at the Simons Center for Geometry and Physics
for kind hospitality during the completion of this work.
The work of JJH is supported in part by the DOE (HEP) Award DE-SC0013528.
MM was supported by a grant from the Simons Foundation (602883, CV).


\appendix


\section{Varying axio-dilaton and the \texorpdfstring{$\SL(2,\R)$}{SL2R} anomaly}
\label{app:SL2R}

This paper is not the first to study non-perturbative duality anomalies in type IIB string theory. References \cite{Gaberdiel:1998ui,Minasian:2016hoh} explore a duality anomaly which is only present when a singular axio-dilaton profile is turned on. The two anomalies are different, and cancelled by different couplings (though many of the techniques used to study both of them are related). We explain the precise relation in this Appendix. Since the anomalies in \cite{Gaberdiel:1998ui,Minasian:2016hoh} involve singular axio-dilaton profiles, they really only make sense in the context of F-theory, in contrast with the IIB anomalies discussed above. In hindsight, studying anomalies of F-theory backgrounds before checking that duality anomalies vanish as we just did above is a bit putting the cart before the horse; we only know that F-theory is consistent because $\Spin\text{-}\GL^+(2, \Z)$ anomalies cancel in the first place.

The duality group of type IIB supergravity that we discussed so far embeds in a perturbative $\Spin\text{-}\Mp(2,\mathbb{R})$ symmetry group, broken to the duality group $\Spin\text{-}\Mp(2,\Z)$ by the effect of massive states (the reflections in $\GL^+(2,\Z)$ remain unbroken, but will not play a role in the discussion). Since $\Spin\text{-}\Mp(2,\mathbb{R})$ is not a symmetry, type IIB supergravity does not provide a dynamical gauge field, and anomalies for this perturbative symmetry need not cancel.

Even though we have no dynamical connection, type IIB supergravity provides the next best thing: the low energy $\Spin\text{-}\Mp(2,\mathbb{R})$ is also spontaneously broken by the axio-dilaton $\tau$, down to a $\Spin^c$ subgroup. Because of this, the axio-dilaton target space $\mathcal{M}$ has a natural $U(1)$ bundle over it, and the pullback of any connection $A_{\mathcal{M}}$ via $\tau$ defines a $\Spin^c$ connection on spacetime (see also \cite{Seiberg:2018ntt}):
\begin{equation}
A_{\mathrm{Spacetime}}=\tau^* A_{\mathcal{M}} \,.
\end{equation}
The connection $A_{\mathrm{Spacetime}}$ encodes the correct holonomies of fermions (which follow from \eq{tfermions})
for a non-constant axio-dilaton background. So when computing the fermion path integral in a non-trivial spacetime
background, we need to couple the fermions to $A_{\mathrm{Spacetime}}$, even if there is no corresponding dynamical
field in the type IIB string theory. The field strength of $A_{\mathrm{Spacetime}}$ has a Chern-Weil representative given by \cite{Gaberdiel:1998ui,Minasian:2016hoh}
\begin{equation}
F= i\frac{d\tau\wedge d\bar{\tau}}{4\tau_2^2} \,.
\label{chwer}
\end{equation}
The crucial point in \cite{Gaberdiel:1998ui,Minasian:2016hoh} is that there is some ambiguity in this procedure, since it depends on the particular choice of connection $A_{\mathcal{M}}$. In physical terms, if we shift  $A_{\mathrm{Spacetime}}$ by a gauge transformation, we get another equally valid connection for the fermions, yielding the same holonomies. The actual phase of the partition function should be independent of this choice, which means that the corresponding $\Spin^c$ fermion anomalies must cancel. These are described by a twelve-dimensional anomaly polynomial, $P_{12}$. There is a purely gravitational contribution to $P_{12}$, but it is famously cancelled by the chiral 4-form \cite{Alvarez-Gaume:1983ihn}; so we will not consider it here.

Up to now, the discussion was the same whether we take $\Mp(2,\Z)$ to be gauged or not, but this matters in the end, since the allowed axio-dilaton backgrounds are very different. A supergravity person, who has no reason to gauge $\Mp(2,\Z)$, would take the axio-dilaton target space to be the upper half plane $\mathcal{H}$. Since $\mathcal{H}= \SL(2,\mathbb{R})/U(1)$, we have a natural fibration
\begin{equation}
U(1)\rightarrow \SL(2,\mathbb{R}) \rightarrow \mathcal{H} \,,
\label{fibby}
\end{equation}
and because $\mathcal{H}$ is contractible, the $U(1)$ bundle is trivial, so the Chern class of the connection
$A_{\mathrm{Spacetime}}$ obeys the constraint that $\int F=0$ over any two-cycle. Said differently, the connection behaves more like an
$\mathbb{R}$-connection. This constraint must be satisfied in the auxiliary space where we compute anomalies. The anomaly theory of the fermions, which is obtained from $P_{12}$ via descent, will then vanish identically. There is no anomaly to discuss, because if we do not gauge $\Mp(2,\Z)$, only non-singular axio-dilaton profiles are allowed, and the resulting connection admits a canonical trivialization.

On the other hand, after including the anomaly cancellation mechanism, the duality symmetry is gauged in IIB supergravity. In this case, the axio-dilaton moduli space is replaced by $\mathcal{H} / \SL(2,\mathbb{Z})$, the moduli space of complex structures of a torus, and the fibration \eq{fibby} descends to a $U(1)$ fibration over this space (it is just the canonical $U(1)$ bundle of any complex manifold). Strictly speaking, we should take the one-point compactification of this space, since the axio-dilaton can run all the way to $\tau=i\infty$ at the core of D7-branes. The one-point compactification is topologically a Riemann sphere, whose canonical $U(1)$ bundle is non-trivial.

The connection $A_{\mathcal{M}}$ now must have has non-trivial fieldstrengths, such that $\int F$ can be non-vanishing. All in all, we are simply saying that \eq{chwer} can have a $\delta$-function singularity at the core of a D7-brane, but not in non-singular axio-dilaton profiles. So now we can really demand anomaly cancellation in the standard sense. References \cite{Gaberdiel:1998ui,Minasian:2016hoh} found that one can indeed make anomalies cancel, but only after introducing a coupling
\begin{equation}
S''=\int  \ln g(\tau) \wedge W_{10} \,,
\label{gaberd}
\end{equation}
in the 10d type IIB action. Here, $W_{10}$ is a particular linear combination of characteristic classes which can be found in \cite{Gaberdiel:1998ui} and whose precise form will not matter to us, and $g(\tau)$ is an explicit function of the axio-dilaton, designed to exactly cancel the shift of the fermion path integral under gauge transformations. To achieve this, $g(\tau)$ must shift under infinitesimal $\Spin^c$ transformations as in  \eq{tfermions}, and more concretely, \eq{eq:fermtrafo} along the self-dual points. One possibility suggested in \cite{Gaberdiel:1998ui,Minasian:2016hoh} is
\begin{equation}
g(\tau) = \left(\frac{\eta(\tau) j^{1/12}(\bar{\tau})}{\bar{\eta}(\bar{\tau}) j^{1/12}({\tau})}\right) \,,
\label{seol}
\end{equation}
but $g(\tau)$ is intrinsically ambiguous, since it could be multiplied by any function of $\tau$ which does not transform under infinitesimal $\Spin^c$ gauge transformations. A final subtlety is that the coupling \eq{gaberd} is  anomalous under $\Mp(2,\Z)$ transformations; one could say that \eq{gaberd} has a mixed $\SL(2,\Z)\text{-U}(1)$ anomaly. The duality anomaly must cancel in all allowed backgrounds, since $\SL(2,\Z)$ was gauged to begin with; this puts a non-trivial constraint on F-theory backgrounds. This constraint can be demystified by expanding \eq{seol} in terms of the RR axion and dilaton, to find that \eq{gaberd} induces a coupling
\begin{equation}
S''\supset \int C_0 \wedge F \wedge \tilde{X}_8+\ldots \,,
\label{rrf}
\end{equation}
where $\tilde{X}_8$ only involves gravitational couplings. As discussed in \cite{Gaberdiel:1998ui,Minasian:2016hoh}, these couplings are related to tadpole cancellation in non-trivial ways.

To recapitulate: Type IIB supergravity fermions in non-trivial $\tau$ backgrounds must couple to a composite $\Spin^c$ connection, which should be non-anomalous. If the duality group is not gauged, anomalies cancel automatically. If it is gauged, anomaly cancellation requires the introduction of a counterterm like \eq{gaberd}, which forces constraints related to tadpole conditions on allowed F-theory backgrounds.

As we have emphasized, cancellation of the anomalies we studied in Section \ref{sec:gen} are a precondition to
discuss the ones in \cite{Gaberdiel:1998ui,Minasian:2016hoh}. The two anomalies are independent: A term like
\eq{gaberd} will not contribute to any anomaly that one can detect at constant axio-dilaton.


\section{Representations of the dihedral group}
\label{app:dihedralrep}

In this Appendix we briefly describe the irreducible representations of the dihedral group with $2n$ elements denoted by $D_{2n}$. We further argue that fermions on $\Spin\text{-}D_{2n}$ manifolds have to transform in two-dimensional representations of a certain kind.

The group structure and elements of $D_{2n}$ are defined as follows
\begin{align}
D_{2n} = \langle R, S | S^n = R^2 = 1 \,, \enspace RSR = S^{-1} = S^{n-1} \rangle \,.
\end{align}
Here, $R$ acts via reflections and $S$ via discrete rotations. Since all one-dimensional representations are commutative they are representations of the Abelianization of $D_{2n}$ given by
\begin{align}
\text{Ab} (D_{4k + 2}) = \Z_2 \,, \quad \text{Ab} (D_{4k}) = \Z_2 \oplus \Z_2 \,,
\end{align}
generated by $\{ R \}$ and $\{ R , S \}$, respectively. In the case of interest to us, the fermions of type IIB transform under $D_{16}$, i.e., $n$ even. Moreover, for a non-trivial $\Spin\text{-}D_{16}$ structure to be possible they have to transform non-trivially under $S^4$. However, for one-dimensional representations one has $S^4 = 1$ and they are not suitable for the fermions in the system.

The real two-dimensional representations are given by
\begin{align}
R = \begin{pmatrix} 0 & 1 \\ 1 & 0 \end{pmatrix} \,, \quad S = \begin{pmatrix} \text{cos} (2 \pi k / n) & - \text{sin} (2 \pi k / n) \\ \text{sin} (2 \pi k / n) & \text{cos} (2 \pi k/ n)\end{pmatrix} \,,
\end{align}
which clearly demonstrates their interpretation as reflection and rotation. Setting again $n = 8$ for the case relevant to us we see that fermions need to transform in representations with $k \in \{ 1, 3\} \text{ mod } 4$ on manifolds with a $\Spin \text{-} D_{16}$ structure. This is clearly the case for type IIB fermions as can be seen from \eqref{eq:fermtrafo}.


\section{More details on \texorpdfstring{$X_{11}$}{X11}}
\label{app:A}

Most of $\Omega_{11}^{\Spin\text{-}\GL^+(2, \Z)}$ can be generated by lens spaces (or lens space bundles) and their products with $\HP^2$ and $X_{10}$ -- the standard examples to try in this sort of bordism problem. But these manifolds only generate a subgroup of $\Omega_{11}^{\Spin\text{-}\GL^+(2, \Z)}$ whose complement is isomorphic to $\Z_2\oplus\Z_2$. In this Appendix, we discuss the two missing generators, defining them and showing that they represent the remaining $\Z_2\oplus\Z_2$ summand.

\subsection{Narrowing the search space}
\label{ssec:narrow}
Searching for missing generators of a bordism group can be quite difficult in general. We approached this problem
by applying techniques that narrowed the space we had to search over: calculations that imply that we can
choose generators which have a relatively simple form. Once we have enough information, it is easier to make a good
guess as to what the generators should be. In this subsection only, we assume some familiarity with Thom spectra
and the Steenrod algebra as discussed in~\cite{2018arXiv180107530B}.

Because we need to account for a $\Z_2\oplus\Z_2$ subgroup of $\Omega_{11}^{\Spin\text{-}\GL^+(2, \Z)}$, we can
approximate Spin-$\GL^+(2, \Z)$ bordism with a simpler notion of bordism which captures all $2$-torsion information
in bordism groups. This simpler notion is Spin-$D_{16}$ bordism, in which the duality bundle is associated to a
principal $D_8$-bundle via the standard embedding $D_8\hookrightarrow \GL(2, \Z)$.

The reason that we know to look at Spin-$D_{16}$ manifolds comes down to the way $\GL^+(2, \Z)$ factors as an
amalgamated product, as we saw in~\eqref{eq:GL+amalg}: this implies the map $D_{16}\hookrightarrow\GL^+(2,
\Z)$ is an isomorphism on $\Z_2$ cohomology. The Thom isomorphism implies that the induced map on Thom spectra for
Spin-$D_{16}$ and Spin-$\GL^+(2, \Z)$ bordism also induces a mod $2$ cohomology isomorphism, and by the stable
Whitehead theorem, the map $\Omega_*^{\Spin\text{-}D_{16}}\to\Omega_*^{\Spin\text{-}\GL^+(2, \Z)}$ is an
isomorphism on $2$-torsion subgroups.

Let $V\to BD_8$ be the associated vector bundle for the standard two-dimensional real representation of $D_8$ as
the symmetries of a square, and let $\mathit{MT}(\Spin\text{-}D_{16})$ denote the Thom spectrum for Spin-$D_{16}$
bordism: the homotopy groups of this spectrum are the Spin-$D_{16}$ bordism groups. In \cite{Upcoming}, we will
prove an equivalence of Thom spectra
\begin{equation}
	\mathit{MT}(\Spin\text{-}D_{16}) \simeq \mathit{MTSpin}\wedge (BD_8)^{V\oplus 3\mathrm{Det}(V) - 5}.
\end{equation}
For brevity, let $M_8= (BD_8)^{V\oplus 3\mathrm{Det}(V) - 5}$. The mod 2 cohomology of $M_8$ is easy to
understand -- it is a free $H^*(BD_8;\Z_2)$ module on a single generator $U$, which is in degree zero -- so we
would like to convert information in the cohomology of $M_8$ to information about Spin-$D_{16}$ bordism.

First suppose $B$ is a space and $M$ is a Thom spectrum over a space $B'$ such that some notion of ``Spin-$B$
bordism'' is computed as the Spin bordism of $M$. For example, we are interested in Spin-$D_{16}$ bordism, for
which $M = M_8$ and $B' = BD_8$. Then elements of $H^n(B';\Z_2)$ define characteristic classes for manifolds with a
Spin-$B$ structure, and the integrals of these classes are $\Z_2$-valued bordism invariants. Let $\Sq^2\colon
H^*(\text{--};\Z_2)\to H^{*+2}(\text{--};\Z_2)$ denote the second Steenrod square, a mod $2$ stable cohomology
operation.
\begin{thm}
\label{margolis}
Let $c\in H^n(B';\Z_2)$ and suppose the corresponding class $Uc\in\widetilde H^n(M;\Z_2)$ is such that
$\Sq^2\Sq^2\Sq^2(Uc)\ne 0$. Then there is an $n$-dimensional Spin-$B$ manifold $W$ with $\int_W c\ne 0$, and $W$
generates a $\Z_2$ summand of $\Omega_n^{\Spin\text{-}B}$.
\end{thm}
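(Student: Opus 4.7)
The plan is to exploit the Frobenius algebra structure of $\mathcal{A}(1)$ together with the Anderson-Brown-Peterson (ABP) splitting of $\mathit{MSpin}$ at the prime $2$ to detect the claimed $\Z_2$ summand. The initial step is a pure algebra computation: I want to see that $\omega := \Sq^2\Sq^2\Sq^2$ is (up to scalar) the unique nonzero element in the top degree $6$ of the subalgebra $\mathcal{A}(1) = \langle \Sq^1, \Sq^2\rangle$. Indeed, Adem relations give $\Sq^2\Sq^2 = \Sq^3\Sq^1$ and $\Sq^1\Sq^2 = \Sq^3$, so $\omega = \Sq^3\Sq^1\Sq^2 = \Sq^3\Sq^3 = \Sq^5\Sq^1$, which is the generator of $\mathcal{A}(1)_6$.

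The key module-theoretic input is that $\mathcal{A}(1)$ is a Frobenius algebra whose one-dimensional socle is spanned by $\omega$. Consequently, the hypothesis $\omega \cdot Uc \neq 0$ is equivalent to the left annihilator of $Uc$ in $\mathcal{A}(1)$ being trivial: any nonzero $\alpha$ with $\alpha\cdot Uc = 0$ would admit some $\beta$ with $\beta\alpha = \omega$, contradicting the hypothesis. Hence the submodule generated by $Uc$ is free, $\mathcal{A}(1)\cdot Uc \cong \Sigma^n\mathcal{A}(1)$, and by self-injectivity of $\mathcal{A}(1)$ it splits off as a direct summand, $H^*(M;\Z_2) \cong \Sigma^n\mathcal{A}(1) \oplus N$.

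My next step is to transport this splitting to a topological one. Smashing with $\mathit{ko}$ and using $H^*(\mathit{ko};\Z_2) \cong \mathcal{A}//\mathcal{A}(1)$ together with the Hopf-algebraic shearing identity $\mathcal{A}//\mathcal{A}(1) \otimes H^*(M) \cong \mathcal{A}\otimes_{\mathcal{A}(1)} H^*(M)$, I find that $H^*(\mathit{ko}\wedge M;\Z_2)$ decomposes as $\mathcal{A}$-modules into $\Sigma^n\mathcal{A} \oplus \mathcal{A}\otimes_{\mathcal{A}(1)} N$. The $\Sigma^n\mathcal{A}$ summand is the cohomology of $\Sigma^n H\Z_2$, and the corresponding cohomology class in $H^n(\mathit{ko}\wedge M;\Z_2)$ provides a map $\mathit{ko}\wedge M \to \Sigma^n H\Z_2$ whose induced map on mod-$2$ cohomology is split surjective. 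A standard spectra-level splitting argument of Whitehead type then yields a $2$-local equivalence $\mathit{ko}\wedge M \simeq \Sigma^n H\Z_2 \vee A_2$. Taking $\pi_n$ gives $\mathit{ko}_n(M) \cong \Z_2 \oplus \pi_n(A_2)$, and the ABP splitting $\mathit{MSpin}_{(2)} \simeq \mathit{ko} \vee Y$ exhibits $\mathit{ko}_n(M)$ as a direct summand of $\Omega_n^{\Spin\text{-}B} = \pi_n(M\wedge\mathit{MSpin})$. Tracing through definitions, the $\Z_2$ thus identified is detected precisely by the pairing $W\mapsto\int_W c$, which produces the sought Spin-$B$ manifold $W$.

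I expect the main technical obstacle to be the verification that the $\mathcal{A}$-module splitting of $H^*(\mathit{ko}\wedge M;\Z_2)$ genuinely lifts to a wedge splitting of spectra $2$-locally, rather than merely to a splitting at the level of Ext groups; equivalently, one must ensure no phantom maps or convergence issues from the complementary summand obstruct realizing the retraction onto $\Sigma^n H\Z_2$. Granting this (which is standard for $2$-locally finite-type spectra such as ours), the remainder of the argument is a formal combination of the Frobenius-algebra manipulation in the first paragraphs with the change-of-rings identity and the ABP decomposition.
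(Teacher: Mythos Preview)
Your proposal is correct and follows essentially the same strategy the paper sketches: the paper cites (i) the characterization of free rank-one $\mathcal{A}(1)$-modules as those cyclic modules on which $\Sq^2\Sq^2\Sq^2$ is nonzero (Freed--Hopkins, Lemma D.8), (ii) the change of rings from $\mathcal{A}$ to $\mathcal{A}(1)$ via Anderson--Brown--Peterson/Stong, and (iii) Margolis' theorem lifting free $\mathcal{A}$-summands in cohomology to wedge summands of $H\Z_2$ at the spectrum level. Your Frobenius-algebra argument is exactly a proof of (i), your shearing identity is (ii), and the ``Whitehead-type'' splitting you invoke is precisely (iii); the technical obstacle you flag is therefore not a gap but a named theorem (Margolis, 1974), valid for bounded-below finite-type spectra such as $\mathit{ko}\wedge M$.
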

This is a combination of several theorems: Margolis' theorem~\cite{Mar74} relating free summands over the Steenrod
algebra $\mathcal A$ to spectrum-level splittings; \cite{Sto63, bams/1183527786} allowing us to change rings from
$\mathcal A$ to $\mathcal A(1)$; a result on using mod $2$ cohomology classes detect bordism classes (see~\cite[\S
8.4]{FH21}); and the fact that free, rank-one $\mathcal A(1)$-modules are characterized as the cyclic $\mathcal
A(1)$-modules on which $\Sq^2\Sq^2\Sq^2$ is non-zero~\cite[Lemma D.8]{FH16}. We will discuss \cref{margolis} in more
detail in \cite{Upcoming};
the point is that information about bordism (generally hard) can be deduced from computations in cohomology (usually easier).

Recall from Section \ref{sec:dualgroup} the cohomology classes $x$, $y$, and $w$ in the ring $H^*(BD_8;\Z_2)$.
In $\widetilde H^*(M_8;\Z_2)$, $\Sq^2\Sq^2\Sq^2(Uw^4x^3)$ and $\Sq^2\Sq^2\Sq^2(Uw^4y^3)$ are linearly independent, so by \cref{margolis} the characteristic classes $w^4x^3$ and $w^4y^3$ detect two $\Z_2$ summands in $\Omega_{11}^{\Spin\text{-}{D_{16}}}$. Furthermore, both of these characteristic classes vanish on the generators we built using lens spaces, $\mathbb{HP}^2$, and $X_{10}$. Therefore we need to find two Spin-${D_{16}}$ manifolds, one on which $w^4x^3\ne 0$ and $w^4y^3 = 0$, and the other on which $w^4x^3 = 0$ and $w^4y^3\ne 0$.

The last simplification we make replaces $D_{16}$ with a simpler subgroup. An oriented manifold $M$ admits a
Spin-${D_{16}}$ structure if it has a principal $D_8$ bundle $P\to M$ with $w(P) = w_2(M)$. If the principal $D_8$
bundle is induced from a principal $D_4 = (\Z_2\times \Z_2)$ bundle, we call a Spin-${D_{16}}$ structure on $M$ a
Spin-${D_8}$ structure. Because $D_4$ is Abelian, Spin-${D_8}$ structures are easier to construct and study than
Spin-${D_{16}}$ structures.

If $r,s$ denote the standard generators of $D_4$ and $r',s'$ denote the standard generators of $D_8$, consider the
two maps $i,\widetilde\imath\colon D_4\to D_8$ defined by $i(r) = \widetilde\imath(r) = r'$, $i(s) =
s'$, and $\widetilde\imath(s) = r's'$; see Figure \ref{square_in_octagon} for an analogous picture for $D_8 \to D_{16}$. There is an isomorphism $H^*(BD_4;\Z_2)\cong\Z_2[a, b]$ with $|a| = |b| = 1$ such that
\begin{itemize}
	\item $i^*(w^4x^3) = a^7b^4 + a^3b^8$ and $i^*(w^4y^3) = 0$,
	\item $\widetilde\imath^*(w^4x^3) = \widetilde\imath^*(w^4y^3) = a^7b^4 + a^3b^8$, and
	\item $i^*(w) = \widetilde\imath^*(w) = ab + b^2$.
\end{itemize}
We are thus led to look for an 11-dimensional manifold $X_{11}$ with a principal $D_4=\Z_2\times\Z_2$ bundle with characteristic classes $a,b\in H^{1}(X_{11},\Z_2)$ which satisfy $w_2(TX_{11})=ab+b^2$ and $a^7b^4 + a^3b^8\ne 0$ -- and one can show that in the Thom spectrum for Spin-${D_8}$ bordism,
\begin{equation}
	\Sq^2\Sq^2\Sq^2(U(a^7b^4 + a^3b^8))\ne 0,
\end{equation}
so we know such a Spin-${D_8}$ manifold actually exists. Now we go and find it.
\subsection{Finding the manifold}
\label{ssec:finding_X11}
The Klein group $\Z_2\times\Z_2$ does not have a free action on a sphere of any dimension \cite{10.2307/87918,free-groups}; the next best thing is to look at products of spheres. In particular, we have
\begin{equation}
X_{11}=\frac{S^6\times S^5}{\Z_2\times\Z_2} \,,
\end{equation}
where  $\Z_2\times \Z_2$ acts as follows. Take $\mathbb{R}^{13}$, and label coordinates there as $(x_1,\ldots x_7, y_1,\ldots y_6)$.  Then $S^6\times S^5$ is  the locus specified by the equations
\begin{equation}
\vert \vec{x}\vert^2=1,\quad \vert \vec{y}\vert^2=1 \, \subset \mathbb{R}^{13} \,.
\end{equation}
The $\Z_2\times \Z_2$ is given by generators $\alpha$ and $\beta$ given by
\begin{equation}
\label{Z2_act_X11}
	\alpha:\, (\vec{x}, \vec{y})\,\rightarrow\, (-\vec{x},-y_1,y_2,\ldots,y_6),\quad \beta:\, (\vec{x}, \vec{y})\,\rightarrow\, (\vec{x},-\vec{y}).
\end{equation}
The second action corresponds to antipodal mapping on the $S^5$, and that projection onto the first factor is well-defined up to a sign. This means that $X_{11}$ is a fiber bundle over $\RP^6$, with fiber $\RP^5$. It is also orientable, since both $\alpha$ and $\beta$ preserve the induced volume form.

We have $ H^{1}(X_{11},\Z_2)\cong\Z_2\oplus\Z_2$. The quotient map $S^6\times S^5\to X_{11}$ is a principal
$D_4$ bundle, and its characteristic classes $a,b$
generate $H^{1}(X_{11},\Z_2)$. We can construct associated vector bundles in the sign representation of each $\Z_2$, sections of these vector bundles are represented by real-valued functions on $S^6\times S^5$ that are even or odd under the actions of $\alpha$ or $\beta$, respectively.

\begin{prop}
\label{mod_2_coh_invariants_X11}
On $X_{11}$, $\int a^3b^8 = 1$ and $\int a^7b^4 = 0$.
\end{prop}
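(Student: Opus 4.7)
The plan is to recognize $X_{11}$ as a real projective bundle over $\RP^6$ and then apply the Leray--Hirsch theorem. Specifically, projection onto the first factor gives a map $X_{11}\to\RP^6$, $[(\vec x,\vec y)]\mapsto[\vec x]$, and since the residual fiber is obtained by quotienting $S^5$ by $\beta$ (antipodal) this is an $\RP^5$-bundle. Unpacking the $\alpha$-action on the $S^5$-fibers as the reflection $(y_1,\dots,y_6)\mapsto(-y_1,y_2,\dots,y_6)$, which is the sign representation on the first coordinate and trivial on the last five, one identifies $X_{11}\cong \mathbb{P}_\R(L\oplus\underline{\R}^5)\to\RP^6$, where $L\to\RP^6$ is the tautological line bundle.

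Next I would identify the characteristic classes $a,b\in H^1(X_{11};\Z_2)$ in this language. The class $a$ classifies the double cover with deck transformation $\alpha$; this cover is $(S^6\times S^5)/\{1,\beta\}\cong S^6\times\RP^5$, which is precisely the fiber product of $X_{11}$ with the double cover $S^6\to\RP^6$. Hence $a$ is the pullback of the generator of $H^1(\RP^6;\Z_2)=w_1(L)$ along $X_{11}\to\RP^6$. Similarly, $b$ classifies the double cover with deck transformation $\beta$, which is $(S^6\times S^5)/\{1,\alpha\}$. Viewing this as the sphere bundle $S(L\oplus\underline{\R}^5)$ gives $b=w_1(\mathcal O_{\mathbb{P}(L\oplus\underline{\R}^5)}(-1))$, i.e.\ $b$ is the tautological class of the projectivization.

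With these identifications in hand the projective bundle formula reads
\begin{equation}
H^*(X_{11};\Z_2)\;\cong\;\Z_2[a,b]\big/\bigl(a^7,\;b^6+w_1(L\oplus\underline{\R}^5)\,b^5\bigr)\;=\;\Z_2[a,b]/(a^7,\;b^6+ab^5),
\end{equation}
since $w(L\oplus\underline{\R}^5)=1+a$ kills all $w_i$ with $i\ge 2$. In particular $H^{11}(X_{11};\Z_2)=\Z_2$ is spanned by the top monomial $a^6b^5$, whose integral equals $1$ by Leray--Hirsch (equivalently, by Poincar\'e duality applied to the fundamental class of this orientable $11$-manifold).

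Finally, the two integrals follow by elementary manipulation in this ring. Since $a^7=0$, immediately $a^7b^4=0$ and $\int a^7b^4=0$. For the other, the relation $b^6=ab^5$ gives $b^{5+k}=a^kb^5$, so $b^8=a^3b^5$ and thus $a^3b^8=a^6b^5$, whence $\int a^3b^8=1$. The only non-routine step is pinning down the identification of $b$ with the tautological $\mathcal O(-1)$-class rather than $b+a$; I expect that to be the main place where one must be careful with conventions, but the sphere-bundle description above makes it essentially mechanical.
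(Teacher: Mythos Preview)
Your proof is correct and takes a genuinely different route from the paper's. The paper argues by Poincar\'e duality and direct intersection counting: it chooses explicit $\alpha$- and $\beta$-equivariant functions on the cover $S^6\times S^5$ (coordinate functions $x_i$, $y_j$ and products like $y_1 x_i$) whose zero loci are Poincar\'e dual to $a$ and $b$, and then counts common zeros modulo the group action. For $a^3b^8$ the resulting system has four solutions on $S^6\times S^5$ (one $\Z_2\times\Z_2$-orbit), giving $\int a^3b^8=1$; for $a^7b^4$ the equations $x_1=\cdots=x_7=0$ are already incompatible with $|\vec x|=1$, so $\int a^7b^4=0$.

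Your approach via the projective bundle formula is more structural: it determines the entire ring $H^*(X_{11};\Z_2)$ at once and reads off the two integrals from the relations $a^7=0$ and $b^6=ab^5$. The identification $X_{11}\cong\mathbb{P}_\R(L\oplus\underline\R^5)$ and the matching of $a,b$ with the base generator and the tautological class are carried out correctly; your caveat about $b$ versus $b+a$ is in fact already settled by your own argument, since the double cover classified by $b$ is $(S^6\times S^5)/\langle\alpha\rangle=S(L\oplus\underline\R^5)$, which is precisely the tautological $S^0$-bundle over the projectivization. The paper's method is more elementary and faster if one only wants these two numbers; yours yields the full ring structure as a by-product, which streamlines the subsequent Stiefel--Whitney computation $w(TX_{11})=(1+a)^7(1+b)^5(1+a+b)$ that the paper carries out separately.
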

That is, $X_{11}$ indeed represents the bordism classes we are searching for.
\begin{proof}
We can compute these cup products dually, by passing to intersections in homology. Taking $k$ generic functions
$f_i$ on $S^6\times S^5$ which correspond to class $a$, and $\ell$ functions $g_i$ that correspond to class $b$,
with $k + \ell=11$, the number of isolated solutions of the equations
\begin{equation}
f_1=f_2=\ldots=g_1=g_2=\ldots=0 \,,
\end{equation}
divided by four, is the intersection pairing $\int a^k b^\ell$. We can obtain some such functions simply from the transformation properties under $\alpha$ and $\beta$ of the coordinate functions in the ambient $\mathbb{R}^{13}$, which are
\begin{equation}
\begin{array}{c|cccc}
\text{Action}&x_i&y_1&y_{i>1}\\\hline
\alpha&-&-&+\\
\beta&+&-&-\end{array}
\label{trre}
\end{equation}
From this, we see that $\int a^3 b^8$ corresponds to the number of solutions of the equations
\begin{equation}
x_1=x_2=x_3=y_1\cdot x_4=y_1\cdot x_5=y_1\cdot x_6=y_2=y_3=y_4=y_5=y_6=0\subset S^6\times S^5 \,.
\end{equation}
There are four such points, with $y_1=\pm1$, $x_7=\pm1$. So the pairing $\int a^3 b^8$ is 1. On the other hand, to compute $\int a^7 b^4$, we count solutions of
\begin{equation}
x_1=x_2=x_3=x_4=x_5=x_6=x_7=y_2=y_3=y_4=y_5=0\subset S^6\times S^5 \,,
\end{equation}
which has no solutions.
\end{proof}

All that is left is to compute $w_2(X_{11})$. This can be done via techniques similar to those used to compute the cohomology of $\RP^n$.
\begin{lem}
\label{X11_w2}
$w_2(X_{11}) = ab + b^2$, and therefore $X_{11}$ has a Spin-$D_8$ structure.
\end{lem}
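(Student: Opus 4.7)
The plan is to compute the total Stiefel-Whitney class of $TX_{11}$ from the stable trivialization coming from the ambient embedding $S^6\times S^5\subset\R^{13}$, read off $w_2$ by a short binomial expansion mod $2$, and then check that this matches the pullback of the $D_8$-class $w$ from Section \ref{subsec:dualback}.

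First, the ambient embedding gives the canonical stable isomorphism $T(S^6\times S^5)\oplus\nu_{S^6}\oplus\nu_{S^5}\cong\underline{\R^{13}}$, where the two rank-one normal bundles are trivialized by the outward radial sections $\vec x$ and $\vec y$. Since $\alpha$ and $\beta$ act on $\R^{13}$ by orthogonal transformations preserving each sphere, a direct check shows that both $(\vec x,\vec y)\mapsto\vec x$ and $(\vec x,\vec y)\mapsto\vec y$ are equivariant sections, so $\nu_{S^6}$ and $\nu_{S^5}$ descend to trivial line bundles on $X_{11}$. By contrast, the trivial bundle $\underline{\R^{13}}\to S^6\times S^5$ descends to the associated bundle for the $\Z_2\times\Z_2$ representation on $\R^{13}$, and from the character table \eqref{trre} this representation is isomorphic to $7L_a\oplus L_{a+b}\oplus 5L_b$, where $L_a$, $L_{a+b}$, $L_b$ are the line bundles with first Stiefel-Whitney classes $a$, $a+b$, $b$. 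Putting these together yields the stable isomorphism
\begin{equation*}
TX_{11}\oplus\underline{\R^2}\;\cong\;7L_a\oplus L_{a+b}\oplus 5L_b.
\end{equation*}

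Next, the Whitney product formula gives $w(TX_{11})=(1+a)^7(1+a+b)(1+b)^5$. Extracting the degree-two part mod $2$, the contributions are $\binom{7}{2}a^2=a^2$ from $(1+a)^7$ alone; the cross terms $7\cdot 5\,ab=ab$ between $(1+a)^7$ and $(1+b)^5$, $7a(a+b)=a^2+ab$ between $(1+a)^7$ and $(1+a+b)$, and $5b(a+b)=ab+b^2$ between $(1+b)^5$ and $(1+a+b)$; note that $\binom{5}{2}=10\equiv 0$ so $(1+b)^5$ alone contributes no $b^2$. Summing, the two copies of $a^2$ cancel, three copies of $ab$ reduce to one, and $b^2$ survives, giving $w_2(TX_{11})=ab+b^2$.

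Finally, it remains to verify that the principal $\Z_2\times\Z_2$-bundle $S^6\times S^5\to X_{11}$ promotes to a Spin-$D_8$ structure, i.e.\ that $ab+b^2$ coincides with the pullback of $w\in H^2(BD_8;\Z_2)$ along a classifying map $X_{11}\to BD_4\to BD_8$ factoring through either $i$ or $\widetilde\imath$. This is a short representation-theoretic computation: the restriction of the standard two-dimensional real representation of $D_8$ along either embedding splits as a sum of two one-dimensional characters of $\Z_2\times\Z_2$ whose first Stiefel-Whitney classes multiply, after the appropriate identification of $a$ and $b$ with generators of $H^1(B(\Z_2\times\Z_2);\Z_2)$, to $ab+b^2$. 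The main obstacle is precisely this bookkeeping step: aligning the labels $a,b$ appearing in the action \eqref{Z2_act_X11} with those induced on $H^1$ by the chosen embedding $D_4\hookrightarrow D_8$, so that $w_2(TX_{11})$ really matches the pullback of $w$ rather than $w$ with $a\leftrightarrow b$ swapped. Once this alignment is pinned down the equality $w_2(TX_{11})=w$ holds on the nose, and the existence of the Spin-$D_8$ structure follows.
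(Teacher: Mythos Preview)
Your proof is correct and follows essentially the same approach as the paper: both arguments descend the stable trivialization $T(S^6\times S^5)\oplus\underline{\R^2}\cong\underline{\R^{13}}$ through the $\Z_2\times\Z_2$ quotient, identify the right-hand side as $7L_a\oplus L_{a+b}\oplus 5L_b$ via the character table~\eqref{trre}, and read off $w(TX_{11})=(1+a)^7(1+a+b)(1+b)^5$ by the Whitney product formula. Your added details---the explicit check that the radial normal sections are equivariant and the term-by-term extraction of $w_2$---are correct and fill in steps the paper leaves implicit; the final identification $i^*(w)=\widetilde\imath^*(w)=ab+b^2$ that you flag as the bookkeeping obstacle is simply asserted in the paper (in the list preceding the lemma) rather than proved inside the lemma itself.
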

\begin{proof}
The normal bundle of $S^6\times S^5\subset\mathbb{R}^{13}$ is trivial, since the unit normal vectors to the two spheres provide two everywhere non-vanishing sections, so $T(S^6\times S^5)\oplus\mathbb{R}^2= T\mathbb{R}^{13}$. After quotienting by $\Z_2\times \Z_2$, the sections of the normal bundle still survive, so the left hand side becomes the sum of $T X_{11}\oplus\mathbb{R}^2$ while the right hand side becomes a particular sum of line bundles over $X_{11}$ as indicated in table \eq{trre}. Taking the Stiefel-Whitney class of both bundles, we obtain the formula
\begin{equation}
w(TX_{11})=( 1+a)^7(1+b)^5(1+a+b) \,,
\end{equation}
and in particular $w_2=ab+b^2$, proving that $X_{11}$ has a $\Spin\text{-}{D_8}$ structure.
\end{proof}


\section{Computation of the anomaly}
\label{app:Anomcomp}

In this Appendix we compute the $\eta$-invariants involved in evaluating the anomaly theory \eq{ath} in the bordism classes in \eq{results}.

As described in Section \ref{sec:gen}, the generators of the bordism groups are products of Spin manifolds and lens
spaces, apart from a couple exotic cases. To evaluate $\mathcal{A}$ on product backgrounds, we will use repeatedly the formula for the $\eta$-invariant of a Dirac operator on a product space $A\times B$ with $A$ odd-dimensional and $B$ even-dimensional \cite{atiyah_patodi_singer_1976,gilkey_1988,gilkey1989geometry,FH21},
\begin{equation}
\eta^{\text{D}} (A\times B) = \eta^{\text{D}}(A) \times \text{Index}^{\text{D}}(B) \,,
\label{jejeje}
\end{equation}
and its generalization for the Rarita-Schwinger operator,
\begin{equation}
\eta^{\text{RS}} (A \times B) = \eta^{\text{RS}}(A) \times \text{Index}^{\text{D}}(B) + \eta^{\text{D}}(A) \times \text{Index}^{\text{RS}}(B) \,.
\label{jejeje2}
\end{equation}
These formulas just encode the physics of dimensional reduction; reducing the theory of a Dirac fermion on $B$
produces $ \text{Index}^{\text{D}}(B)$ chiral fermion zero modes, whose anomaly is then captured by
$\eta^{\text{D}}(A)$. Similarly, reduction of a vector-spinor $\psi_M$ on $B$ produces vector-spinor modes
$\psi_\mu\otimes \lambda$, where $\lambda$ is a Dirac zero mode, and Dirac modes $\psi\otimes \lambda_i$, where
$\lambda_i$ is a vector-spinor zero mode. As a cross-check of the Rarita-Schwinger formula, we note that a 10d
chiral vector-spinor reduced on $K3$ produces $-40$ modes\footnote{In reduction of the heterotic theory to 6d, these
$-40$ real modes are grouped into $-20$ complex ones, as the relevant $6$-dimensional spinors are not real. Adding
an opposite chirality 10d Dirac fermion to account for the constraint on the gravitino we obtain an index of $-42$,
and a total of $-21$ complex modes, the superpartners of the K3 moduli. In our applications in this Appendix, we
will always reduce on an $8$-manifold, and so the reality properties of the fermions remain the same.}, which is
the correct quantity \cite{degeratu_wendland_2010}.

The inclusion
\begin{equation}
\frac{\Mp(2,\Z) \times \Spin}{\Z_2} \,\rightarrow \, \frac{\GL^+(2,\Z)\times\Spin}{\Z_2} \,,
\end{equation}
induces a group homomorphism of the corresponding bordism groups,
\begin{equation}
\Omega_{11}^{\Spin \text{-} \Mp(2,\Z)}\,\rightarrow\, \Omega_{11}^{\Spin \text{-} \GL^+(2,\Z)} \,.
\label{eq:bordhom}
\end{equation}
The finite factors in $\Omega_{11}^{\Spin \text{-} \GL^+(2,\Z)}$ naturally fall into two classes, those in the image of the homomorphism in \eq{eq:bordhom} and those in the complement of the image. They are separated by the dashed line in \eq{results}. In physical terms, classes in the image from $\Omega_{11}^{\Spin \text{-} \Mp(2,\Z)}$ have transition functions involving only elements in the duality group of determinant $+1$; while the others involve reflections. We will consider both cases separately.

\subsection{Anomalies for Spin-Mp\texorpdfstring{$(2,\Z)$}{(2,Z)} manifolds}

As listed in \eq{results}, most of these bordism classes are generated by products of lens spaces with Spin manifolds. As
described in the main text, for each of these generators the $\Mp(2,\Z)$ bundle is an associated bundle for a
principal $\Z_3$ or $\Z_8$ bundle. Consequently, the fermions transforming in representations \eq{tfermions} behave as if they had $\Z_3$ or $\Z_8$ charge only, and the self-dual form does not see the duality bundle at all, as discussed in Section \ref{sec:dualgroup}. Thus, we need to know the $\eta$-invariants of fermions and the signature operator coupled to $\Z_n$ bundles on lens spaces. In the following we will focus only on the $\mathbb{Z}_2$ subgroup of the $\mathbb{Z}_8$ factor, which is generated by $L_4^{11}$, which is Spin and allows for an application of the techniques below.\footnote{The formula for $\eta$-invariants of Dirac fermions on spaces of the form $Q_4^{11}$ is provided in \cite{10.4310/jdg/1214459973}. However, at present we were not able the extend these to the other contributions to the anomaly, which would be needed to explicitly evaluate the anomaly theory in $Q_4^{11}$.}

Consider a lens space $L^{2k-1}_n\equiv S^{2k-1}/\Z_{n}$ with a principal $\Z_n$ bundle whose class
is given by that of a generator of
$H^1(L^{2k-1}_n; \Z_n)$ associated to $S^{2k-1} \rightarrow L^{2k-1}_n$. The $\eta$-invariant of a fermion of charge $q$ under this bundle is \cite{gilkey1989geometry,Hsieh:2020jpj}
\begin{equation}
\eta^{\text{D}}_{q} \big( L^{2k-1}_n \big) = - \frac{1}{n \, i^k} \sum_{j=1}^{n-1}\frac{e^{-2\pi i q j/n}}{\big(2\sin (\pi j/n)\big)^k} \,.
\label{eff}
\end{equation}
The $\eta$-invariant for the same fermion but a bundle whose class is $m$ times the generator of  $H^1(L^{2k-1}_n,\Z)$ is given by the above formula under the replacement $q \rightarrow m q$. For the applications in this paper, where $k$ is always even, $q$ is directly the $\mathbb{Z}_n$ charge of the fermion (see \cite{Hsieh:2020jpj} for the general case).
 Finally, when $n$ is even and $k$ is odd, the space $L^{2k-1}_n$ above is not Spin, but it admits a $\Spin \text{-} \Z_{n}$
 structure, hence a Spin-$\Mp(2, \Z)$ structure for $n \in \{2, 4 \}$.

We also need the $\eta$-invariant of the Rarita-Schwinger operator in a lens space $L_n^{2k-1}$, which for $n>2$
can be determined as follows \cite{Witten:2015aba,Hsieh:2020jpj}. In this case, $\eta^{\text{RS}}_q$ is simply the
$\eta$-invariant of a Dirac operator coupled to $TL^{2k-1}_n \otimes \mathcal{L}^{q}$, where $\mathcal{L}$ is the
complex line bundle associated to the $\Z_n$ bundle by the representation of $\Z_n$ on $\C$ by rotations.
The complexification of the tangent bundle  $TL^{2k-1}_n$ satisfies (see Lemma 1.1 of
\cite{EWING1977177})
\begin{equation}
\big( TL^{2k-1}_n \otimes \mathbb{C} \big) \oplus \mathbb{C} \approx k(\mathcal{L}\oplus \mathcal{L}^{-1}) \,.
\end{equation}
Tensoring with $\mathcal{L}^{q}$, we get
\begin{equation}
\big( TL^{2k-1}_n \otimes \mathcal{L}^{q} \big) \oplus \mathcal{L}^{q} \approx k (\mathcal{L}^{q+1} \oplus \mathcal{L}^{q-1}) \,,
\end{equation}
so that
\begin{equation}
\begin{split}
{\eta}^{\text{RS}}_q \big( L^{2k-1}_n \big) &= k \, \eta^{\text{D}}_{q+1} \big( L^{2k-1}_n \big) + k \, \eta^{\text{D}}_{q-1} \big( L^{2k-1}_n \big) - \eta^{\text{D}}_{q} \big( L^{2k-1}_n \big) \\
&= - \frac{1}{n \, i^k} \sum_{j = 1}^{n-1} \frac{e^{- 2 \pi i q j/n}}{\big( 2 \sin (\pi j / n) \big)^k} \big( 2 k \cos (2 \pi j/n) - 1 \big).
\end{split}
\label{gfor1}
\end{equation}
This is also what one gets from direct application of the equivariant index theorem \cite{Hsieh:2020jpj,10.2307/24892232}. We also record the $\eta$-invariant for the signature operator \cite{Hsieh:2020jpj}:
\begin{equation}
\tfrac18\eta_s^{\text{Sig}} \big(L^{2k-1}_n\big) = - \frac{1}{8 n \, i^{k}} \sum_{j =1}^{n-1} \frac{e^{- 2 \pi i j s/ n}}{\big( \tan (\pi j/n) \big)^k}.
\label{effsd}
\end{equation}

With these formulas, we can compute the anomaly theory directly for the 11-dimensional lens spaces, save for the
term involving the Arf invariant in \eq{ath}. To compute it, we need to determine a canonical quadratic refinement
for the torsion pairing
\begin{equation}
H^6 (X, \Z) \times H^6 (X, \Z) \rightarrow \mathbb{Q} / \Z \,,
\end{equation}
of the $11$-dimensional manifold \cite{Hsieh:2020jpj}. While there is a canonical quadratic refinement for
11-dimensional Spin manifolds \cite{Hsieh:2020jpj}, it is not easy to determine. In fact, it is easier to use
anomaly cancellation of type IIB supergravity to determine the Arf invariant. Take for instance the 11-dimensional
lens space $L_3^{11}$, but with trivial duality bundle; we will denote it as $\widetilde{L}_3^{11}$. This manifold is Spin, and since $\Omega_{11}^{\Spin}=0$, its total anomaly must cancel. One can compute, using \eq{eff}, \eq{gfor1}, and \eq{effsd} that
\begin{equation}
\eta^{\text{RS}}_{1}(\widetilde{L}_3^{11}) - 2 \eta^{\text{D}}_1(\widetilde{L}_3^{11}) - \eta^{\text{D}}_{-3}(\widetilde{L}_3^{11}) - \tfrac{1}{8} \eta_+^{\text{Sig}}(\widetilde{L}_3^{11}) = -\tfrac14 \,,
\label{ar5}
\end{equation}
which implies $\text{Arf}(\widetilde{L}_3^{11})= \tfrac{1}{4}$. The relevant cohomology group is $\Z_3$, and indeed, the two possible quadratic refinements for a binary form over $\Z_3$ have Arf invariants $\pm \tfrac14$, so this is consistent. One can similarly consider $L_4^{11}$ without duality bundle and conclude that the Arf invariant is $\tfrac38$, which is one of the possible values for a quadratic refinement of the torsion pairing of $H^6(L_4^{11}, \mathbb{Z}) = \Z_4$.

While these are nice consistency checks, we need a way to bypass an explicit evaluation of the Arf invariant if we are to find anomalies when duality bundles are introduced. We will do so by rewriting the anomaly of the self-dual field in terms of fictitious fermion fields. Note that aside from $Q_{4}^{11}$, all the generators above the dashed line in table \eq{results} are Spin manifolds, and $Q_{4}^{11}$ is a $\Spin\text{-}\mathbb{Z}_8$ manifold (and in fact $\Spin^c$). This can be checked by explicit computation of the Stiefel-Whitney classes, which for a lens space with $n>2$ are just the mod 2 reduction of the Chern classes of the complexified tangent bundle \cite{milnor1974characteristic}. It also follows directly from group theory arguments \cite{Garcia-Etxebarria:2015wns,Hsieh:2020jpj}. As a result, for the Spin manifolds we can subtract from \eq{ath} the anomaly theory of a fictitious gravitino, dilatino, and self-dual field, which transform trivially under the $\Mp(2,\Z)$ part of the duality group:
\begin{equation}
\mathcal{A}^{\text{aux}}= \eta^{\text{RS}}_{0} - 3 \eta^{\text{D}}_0 - \tfrac{1}{8}\eta_{+}^{\text{Sig}} +
\text{Arf} = 0 \,.
\label{we45}
\end{equation}
This is because, as we saw in~\eqref{IFT_class}, the group of anomaly theories is an extension of the group
$\Hom(\Omega_{12}^\Spin, \Z)$ (the perturbative anomalies) by the torsion subgroup of $\Hom(\Omega_{11}^\Spin,
\C^\times)$. Because the
type IIB anomaly polynomial vanishes, the image of this anomaly theory in $\Hom(\Omega_{12}^\Spin, \Z)$ is $0$, and
since $\Omega_{11}^\Spin = 0$, the anomaly~\eqref{we45} must vanish.
As a result, we can rewrite \emph{for a Spin manifold},
\begin{equation}
\tilde{\mathcal{A}} = \mathcal{A} - \mathcal{A}^{\text{aux}} = \tilde{\eta}^{\text{RS}}_{1} - 2 \, \tilde{\eta}^{\text{D}}_1 - \tilde{\eta}^{\text{D}}_3  = 0 \,,
\label{athr}
\end{equation}
where we have defined reduced $\eta$-invariants, denoted by $\tilde{\eta}$,
\begin{equation}
\tilde{\eta}^{\text{RS}}_{q} \equiv \eta^{\text{RS}}_{q} - \eta^{\text{RS}}_{0} \,, \quad \tilde{\eta}^{\text{D}}_{q} \equiv \eta^{\text{D}}_{q} - \eta^{\text{D}}_{0} \,.
\label{eq:reducedeta}
\end{equation}
Unlike their unreduced counterparts, these $\eta$-invariants are bordism invariants in the group $\Omega^{\Spin \text{-} \GL^+(2,\Z)}$. Another way of looking at this procedure is that we have used the fact that the self-dual field does not see the duality bundle to compute the relevant Arf invariant implicitly, as in \eq{ar5}, and use it to evaluate the anomaly when a nontrivial duality bundle is turned on.

For the cases involving products with an 8-manifold we can use the formula
\begin{equation}
\text{Index}^{\text{RS}} (X_8) = 24 \, \text{Index}^{\text{D}}(X_8) - \sigma (X_8) \,,
\end{equation}
where $\sigma$ is the signature of the 8-manifold (see e.g.~\cite{Alvarez-Gaume:1984zlq} for a discussion in the context of anomalies). Using this, we obtain (for the Bott manifold constructed in \cite{FH21})
\begin{equation}
\text{Index}^{\text{D}} (\text{Bott}) = 1 \,, \quad \sigma (\text{Bott}) = -224 \,, \quad \text{Index}^{\text{RS}}(\text{Bott}) = 248 \,,
\label{botindex}
\end{equation}
as well as
\begin{equation}
\text{Index}^{\text{D}} (\mathbb{HP}^2) = 0 \,, \quad \sigma (\mathbb{HP}^2) = 1 \,, \quad \text{Index}^{\text{RS}}(\mathbb{HP}^2) = -1.
\label{hp2index}
\end{equation}
We now compute the anomalies in the corresponding cases:

\begin{itemize}
\item The manifold $L_3^{11}$ is Spin, so we can use the modified anomaly theory \eq{athr}.  Doing this, we have that
\begin{equation}
\tilde{\mathcal{A}} \big(L_3^{11} \big) = \tilde{\eta}^{\text{RS}}_{1} \big( L_3^{11} \big) - 2 \, \tilde{\eta}^{\text{D}}_{1} \big(L_3^{11} \big),
\label{ditt}
\end{equation}
where the other terms vanish because both the dilatino and self-dual field transform trivially under the $\Z_3$ subgroup of the duality group.
We can now evaluate \eq{ditt} using \eq{gfor1} and \eq{eff}, obtaining
\begin{equation}
\tilde{\mathcal{A}} \big( L_3^{11} \big) = \tfrac{1}{3} \,.
\end{equation}
 \item The manifold $L_3^3$ can be treated in a similar manner as the one above, using the $\tilde{\eta}$-invariants. Using again \eq{ditt} and \eq{gfor1}, together with \eq{eff} \eq{botindex} and \eq{hp2index}, we get
\begin{equation}
\tilde{\mathcal{A}} \big( \text{Bott} \times L_3^3 \big)=0,\quad  \tilde{\mathcal{A}} \big( \mathbb{HP}^2 \times L_3^3 \big)= \tfrac{1}{3} \, \text{mod} \, 1 \,.
\end{equation}
Again, the anomaly does not cancel in $\mathbb{HP}^2$. The Arf invariant in this case is $\tfrac14$.

\item We now switch to manifolds involving an action of $\Z_8$. In general, these manifolds are not Spin, and only
have a $\Spin\text{-}{\Z_8}$ structure as is the case for $Q_4^{11}$; but the lens spaces we use as generators of a $\mathbb{Z}_2 \oplus \mathbb{Z}_2$ subgroup of the full $\mathbb{Z}_8 \oplus \mathbb{Z}_2$ bordism group are actually
Spin, and may be regarded as classes in $\Omega_{11}^{\Spin}(B\Z_4)$ for the purposes of computing the anomaly.
Related to this, the gravitino and dilatino charges under the $\Z_4$ principal bundle are $1$ and $-3$ with respect to the associated principal $\Z_4$ bundle, respectively, and the self-dual form is uncharged so it drops out of the reduced anomaly theory \eq{athr}. The $\Spin \text{-} {\Z_8}$ structure requires that the principal $\Z_4$ bundle has a classifying class given by twice that of the generator of $H^1(L_4^{11},\Z_4)=\Z_4$, in order for the fermion transition functions to be well-defined \cite{Hsieh:2020jpj}. This effectively multiplies the charges of every field above by two, and so, using again \eq{gfor1}, \eq{ditt} and \eq{eff}
\begin{equation}
\tilde{\mathcal{A}} \big( L_4^{11} \big) =  \tfrac{1}{2} \, \text{mod} \, 1 \,.
\end{equation}

\item The anomaly on $\text{Bott}\times L_4^{3}$ as well as $\mathbb{HP}^2\times L_4^{3}$ can be computed by now familiar techniques, obtaining
\begin{equation}
\tilde{\mathcal{A}} \big( \text{Bott} \times L_4^3 \big)=0 \, \text{mod} \, 1 \,,\quad\tilde{\mathcal{A}} \big( \mathbb{HP}^2 \times L_4^3 \big)= \tfrac12 \, \text{mod} \, 1 \,.
\end{equation}

\end{itemize}
This completes the calculation for $\Spin\text{-}\Mp(2,\Z)$ manifolds.

\subsection{Divisibility of \texorpdfstring{$(p_1)_4-\mathcal{P}(w)$}{p14-Pw} by two}
\label{app:divi}
In Section \ref{subsec:cancel_anomaly}, we introduced a class $\frac12[(p_1)_4-\mathcal{P}(w)]$ in the anomaly theory of IIB supergravity, which is necessary to cancel anomalies via the quadratic refinement. For this class to make sense, we need to show that $(p_1)_4-\mathcal{P}(w)$ is twice another cohomology class. We will now show that this is the case for $\text{Spin-}\Mp(2,\Z)$ manifolds.

Let $V\to X$ be a vector bundle with a Spin-$\Mp(2, \Z)$ structure. Then there is a canonical class $\mu(V)\in
H^4(X;\Z_4)$ such that
\begin{equation}
	2\mu(V) = p_1(V)\bmod 4 - \mathcal P(w) \,.
\end{equation}
From~\cite[Theorem 1]{CV95} we learn that $H^4(B\mathrm{SO}; \Z_4)\cong \Z_4\oplus\Z_2$, where $p_1\bmod 4$
generates the $\Z_4$ summand and the $\Z_2$ summand is generated by a class $\theta(w_4)$, which is the
image of $w_4$ under the multiplication-by-$2$ map $\theta\colon H^*(\text{--};\Z_2)\to H^*(\text{--}; \Z_4)$. For
any oriented vector bundle $V\to X$, we have~\cite{Wu54, Wu59} (see also~\cite[Theorem C]{Tho60} and
\cite[(6.1)]{AST13})
\begin{equation}
\label{wu_thm}
	p_1(V)\bmod 4 - \mathcal P(w_2(V)) = \theta(w_4(V)) \,.
\end{equation}
A Spin-$\Mp(2, \Z)$ vector bundle has an associated real rank-$2$ oriented vector bundle $E\to X$, and there is
an induced Spin structure on $V\oplus E$.\footnote{This is an analogue of the more familiar fact that if $V$ has a
Spin$^c$ structure with determinant line bundle $L$, $V\oplus L$ has a canonical Spin structure.}
Apply~\eqref{wu_thm} to $V\oplus E$; since this bundle is Spin, $w_2(V\oplus E) = 0$. Therefore
\begin{equation}
	p_1(V\oplus E) \bmod 4 = \theta(w_4(V\oplus E)) \,.
\end{equation}
This is the class we want to divide by $2$.

Since $w_1(E) = 0$, $w_2(E) = w_2(V)$, and $w_4(E) = 0$, $w_4(V\oplus E) = w_4(V) + w_2(V)^2$, and therefore
\begin{equation}
	\theta(w_4(V)) = p_1(V\oplus E)\bmod 4 + \theta(w_2(V)^2) \,.
\end{equation}
Since $V\oplus E$ is Spin, it has a characteristic class $\lambda(V\oplus E)\in H^4(X;\Z)$ with $2\lambda(V\oplus
E) = p_1(V\oplus E)$. And $\theta(w_2^2) = 2(p_1\bmod 4)\in H^4(B\mathrm{SO};\Z_4)$. Therefore we can choose
\begin{equation}
	\mu = (\lambda(V\oplus E) + p_1(V))\bmod 4 \,.
	\qedhere
\end{equation}
We note, however, that it does not seem possible to do this on Spin-$\GL^+(2, \Z)$ manifolds in general.

\subsection{The anomaly for \texorpdfstring{$Q_4^{11}$}{Q411}}
\label{app:Q11}

Though we cannot completely determine the value of the anomaly theory on the generator $Q_4^{11}$ of the
$\mathbb{Z}_8$ factor, we collect some useful properties of this manifold.

Recall from Section \ref{ssec:computation} that $Q_4^{11}$ is a $L_4^9$-bundle over $S^2$, specifically the ``lensification'' of the bundle
$V := \underline\C^4\oplus\mathscr O(2)\to\CP^1 = S^2$. That is, $m\in \Z_4$ acts on $V$ fiberwise by
\begin{equation}
	(z_1, \dotsc, z_5)\mapsto i^m\left(z_1, \dotsc, z_5\right).
\end{equation}
If  $S(V)\to S^2$ denotes the unit sphere bundle for $V$ with respect to a $\Z_4$-invariant Hermitian metric, then
$Q_4^{11}:= S(V)/\Z_4$.\footnote{The name ``lensification'' is by analogy with projectivization; if we took
the quotient by all of $\mathrm U(1)$ instead of just $\Z/4$, we would have obtained $\mathbb P(V)\to S^2$. The
condition that the Hermitian metric be $\Z_4$-invariant is no obstacle: if $h$ is any Hermitian metric on $V$,
average it over the $\Z_4$-action to obtain an invariant one.}

Since $V$ is a sum of line bundles over $\CP^1$, $S(V)\to S^2$ admits a section. Composing with the quotient map
$S(V)\to Q_4^{11}$, we obtain a section $\sigma\colon S^2\to Q_4^{11}$ of $\pi$.
\begin{prop}
\label{Q411_coh}
For $A = \Z$ and $A = \Z_4$, $H^*(Q_4^{11}; A)\cong H^*(L_4^9;A)\otimes H^*(S^2;A)$. That is,
\begin{enumerate}
	\item $H^*(Q_4^{11};\Z)\cong\Z[x, z, t]/(4x, x^5, xz, z^2, y^2)$, where $\abs x = 2$, $\abs z = 9$, and $\abs t
	= 2$; and
	\item $H^*(Q_4^{11};\Z_4) \cong\Z[b, \beta(b), \tilde t]/(b^2 = 2\beta(b), \beta(b)^5, \tilde t{}^2)$,
	where $\abs b = 1$, $\abs{\beta(b)} = 2$, and $\abs{\tilde t} = 2$.
\end{enumerate}
Reduction mod $4$ sends $x\mapsto\beta(b)$ and $t\mapsto\tilde t$. In the above $\vert \cdot \vert$ refers to the degree of a
class.
\end{prop}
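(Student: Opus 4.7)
The plan is to run the Serre spectral sequence for the fiber bundle $\pi\colon Q_4^{11}\to S^2$ with fiber $L_4^9$. Since $S^2$ is simply connected there are no local coefficient issues, and the $E_2$-page is
\[
E_2^{p,q}=H^p(S^2; A)\otimes H^q(L_4^9; A),
\]
concentrated in the columns $p=0$ and $p=2$, so only the differential $d_2$ can be non-zero. The proof thus reduces to (i) showing that $d_2=0$, (ii) resolving the resulting extension at the module level, and (iii) pinning down the multiplicative relations.

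For (i), I would exhibit explicit global lifts of the fiber generators. Beyond the section $\sigma$ recalled in the excerpt (which already splits $\pi^*$), the key input is the classifying map $\phi\colon Q_4^{11}\to B\Z_4=L_4^\infty$ of the principal $\Z_4$-bundle $S(V)\to Q_4^{11}$: its restriction to any fiber is the standard inclusion $L_4^9\hookrightarrow L_4^\infty$, which is a cohomology isomorphism in degrees $\le 8$. Pulling back the generators of $H^*(B\Z_4; A)$ via $\phi$ yields permanent cycles that cover every fiber class below degree $9$, killing the corresponding $d_2$'s. This handles the entire mod-$4$ spectral sequence. For $A=\Z$ the remaining class is the top fiber class $z\in H^9(L_4^9;\Z)\cong\Z$, which cannot come from $B\Z_4$ since $H^9(L_4^\infty;\Z)=0$. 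I would treat it by comparison with the mod-$4$ sequence: the reduction of $z$ is $b\beta(b)^4$ (up to a unit in $\Z_4$), a product of permanent cycles and hence itself permanent; and the coefficient-change map on $E_2^{2,8}$ is the obvious isomorphism $\Z_4\to\Z_4$, forcing $d_2(z)=0$ integrally too. With $E_\infty=E_2$, these lifts together with $\tilde t:=\pi^*t$ provide a Leray-Hirsch basis, giving the stated module isomorphism $H^*(Q_4^{11}; A)\cong H^*(L_4^9; A)\otimes H^*(S^2; A)$.

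It remains to verify the ring relations. The relations $4x=0$, $b^2=2\beta(b)$, and $\tilde t{}^2=0$ are pulled back from $B\Z_4$ and $S^2$ via $\phi^*$ and $\pi^*$. Two further relations are automatic from dimension counting: $z^2\in H^{18}(Q_4^{11})=0$, and $xz\in H^{11}(Q_4^{11};\Z)\cong\Z$ is $4$-torsion (since $4x=0$) and therefore zero in a torsion-free group. The hard part, which I expect to be the main technical obstacle, is the relation $x^5=0$ (equivalently $\beta(b)^5=0$): the pullback $\phi^*(y^5)$ need not vanish, since $y^5$ generates $H^{10}(B\Z_4;\Z)\cong\Z_4$, so a priori $x^5=c\,\tilde t\,x^4$ in $H^{10}(Q_4^{11};\Z)\cong\Z_4$ for some $c\in\Z_4$. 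This is rectified by a change of generators: using $\tilde t{}^2=0$ and $4x^4=0$, one computes $(x-c\tilde t)^5=x^5-5c\,\tilde t\,x^4=-4c\,\tilde t\,x^4=0$, and an identical adjustment handles $\beta(b)^5$. The claim about mod-$4$ reduction then follows from naturality, since $y\in H^2(B\Z_4;\Z)$ reduces to $\beta(b)$ and $t$ reduces to $\tilde t$ by construction.
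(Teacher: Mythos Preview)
Your overall strategy --- run the Serre spectral sequence, use the section and the classifying map $\phi\colon Q_4^{11}\to B\Z_4$ to produce permanent cycles, and compare $\Z$ and $\Z_4$ coefficients to kill the remaining $d_2$ on the top fiber class --- is sound and parallels the paper's approach (the paper uses a Poincar\'e--duality argument for $d_2(z)=0$ rather than your coefficient comparison, but both work).

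The gap is in your treatment of the relation $x^5=0$. Your substitution $x\mapsto x'=x-c\tilde t$ does kill the fifth power, but it destroys the other defining relation: $4x' = 4x - 4c\tilde t = -4c\tilde t$, and since $\tilde t$ generates a \emph{free} $\Z$-summand of $H^2(Q_4^{11};\Z)$ this is nonzero whenever $c\ne 0$. No change of generators can repair this: the $4$-torsion in $H^2$ is exactly $\Z_4\cdot x$, and for any unit $n\in(\Z_4)^\times$ one has $(nx)^5 = n\cdot c\,\tilde t\,x^4$, which vanishes only if $c=0$. In fact $c=2$: realising $Q_4^{11}$ as the circle bundle of $\mathscr O(-4)$ over $\mathbb P(V)$, the hyperplane class $\xi$ satisfies $\xi^5 + c_1(V)\xi^4 = \xi^5 + 2t\xi^4 = 0$, and $p^*\xi = x$ because $p^*\xi$ is $4$-torsion (killed by the Euler class) and restricts to the fiber generator. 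So the actual relation is $x^5 = 2tx^4$, not $x^5=0$, and the ring presentation in the Proposition is slightly off as literally stated. The paper's own proof is also cursory at this step --- the assertion that a multiplicative extension ``would be visible on the $E_\infty$-page'' does not apply to extensions into higher filtration --- though the subsequent torsion-pairing computation is unaffected up to a change of basis in $H^6$.
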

\begin{proof}
Set up the Serre spectral sequence for the fiber bundle $L_4^9\to Q_4^{11}\to S^2$ with $\Z$ and $\Z_4$
coefficients. We draw this in Figure \ref{serre_SS_Q}. This uses $H^*(L_4^9;\Z)\cong\Z[x, z]/(4x, xz, z^2, x^5)$ with $x$
in degree $2$ and $z$ in degree $9$; the mod $4$ cohomology of $L_4^9$ is the same as that of $B\Z_4$ except with
the additional relation $\beta(b)^5 = 0$.
\begin{figure}[ht!]
\centering
\includegraphics{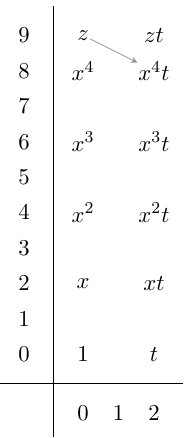}
\qquad\qquad
\includegraphics{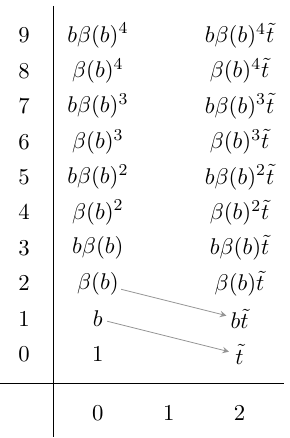}
\caption{The Serre spectral sequence for the cohomology of $Q_4^{11}$. Left: $\Z$ coefficients. Right: $\Z_4$
coefficients. We will show the three indicated differentials vanish, which implies both spectral sequences
collapse.}
\label{serre_SS_Q}
\end{figure}

In the spectral sequence for $\Z$ coefficients, all differentials save $d_2(z)$ vanish because either their domain
or codomain is $0$. Since $\pi_1(Q_4^{11})\cong\Z_4$, the Hurewicz theorem tells us $H_1(Q_4^{11};\Z)\cong\Z_4$.
Poincaré duality says $H^{10}(Q_4^{11};\Z)\cong\Z_4$ too, forcing $d_2(z) = 0$. Therefore the spectral sequence for
$\Z$ cohomology collapses, which means that as a $H^*(S^2;\Z)$-module, $H^*(Q_4^{11};\Z)\cong H^*(S^2;\Z)\otimes
H^*(L_4^9;\Z)$. This splits all extension questions: we know the extension question in $H^2$ splits, because
$\pi\colon Q_4^{11}\to S^2$ has a section. We know the $H^*(S^2;\Z)$-module structure on $H^*(Q_4^{11};\Z)$, which
uniquely constrains multiplication by $t$. A priori, it is possible that multiplication by $x$ sees an extension,
but if this were true, we would see it on the $E_\infty$-page, and we do not. This finishes the proof for $\Z$
cohomology.

The proof for $\Z_4$ cohomology is similar. By degree reasons, all differentials $d_r$ with $r > 2$ vanish, and
since this spectral sequence is multiplicative, $d_2$ is determined by its values on $b$ and $\beta(b)$. The
section of $\pi\colon Q_4^{11}\to S^2$ forces $d_2(b) = 0$. Since $\beta(b) = x\bmod 4$, $d_2(\beta(b)) =
d_2(x)\bmod 4 = 0$. Therefore this spectral sequence also collapses. The argument that there are no multiplicative
extensions is similar to the case with $\Z$ coefficients.

Finally, we knew $x\bmod 4 = \beta(b)$ because this is true for $L_4^9$, and we know $t\bmod 4 = \tilde t$ because
this is true for $S^2$.
\end{proof}
We also want to know the Arf invariant of the quadratic refinement of the torsion pairing.
Since $Q_4^{11}$ is not Spin, the tricks we used in the previous subsection do not work, and we are unable to
determine the quadratic refinement completely, though we are able to constrain it somewhat. Since $Q_4^{11}$ is
Spin-$\Mp(2, \Z)$, the torsion pairing is on untwisted cohomology. Recall from \cref{Q411_coh} that
$H^6(Q_4^{11};\Z)\cong\Z_4\oplus\Z_4$, generated by $x^3$ and $x^2t$.
\begin{prop}
The torsion pairing
\[\ang{\text{--}, \text{--}}\colon \mathrm{Tors}(H^6(Q_4^{11};\Z))\otimes
\mathrm{Tors}(H^6(Q_4^{11};\Z))\to\Q/\Z\]
has the following values: $\ang{x^3, x^3} = \ang{x^2t, x^2t} = 0$ and $\ang{x^3,
x^2t} = 1/4$.
\end{prop}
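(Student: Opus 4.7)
\medskip

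\noindent\textit{Proof proposal.} The plan is to use the standard description of the torsion linking pairing on an oriented $11$-manifold via Bockstein lifts: if $\widetilde\beta\colon H^{5}(M;\Q/\Z)\to H^{6}(M;\Z)$ is the connecting homomorphism of $0\to\Z\to\Q\to\Q/\Z\to0$ and $\alpha=\widetilde\beta(\widetilde\alpha)$, then $\langle\alpha,\gamma\rangle=\int_{M}\widetilde\alpha\cup\gamma\in\Q/\Z$. Since $\mathrm{Tors}(H^{6}(Q_{4}^{11};\Z))$ is annihilated by $4$, it is enough to lift through $\Z_{4}\hookrightarrow\Q/\Z$, i.e.\ to exhibit classes in $H^{5}(Q_{4}^{11};\Z_{4})$ whose integer Bockstein equals $x^{3}$ and $x^{2}t$ respectively, and then to evaluate the resulting cup products using the ring structure recorded in \cref{Q411_coh}.

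First I would produce the Bockstein lifts. Because $H^{6}(Q_{4}^{11};\Z)$ is $4$-torsion, any element is determined by its mod $4$ reduction, so it suffices to find $\widetilde\alpha\in H^{5}(\cdot;\Z_{4})$ with mod $4$ Bockstein $\beta(\widetilde\alpha)$ equal to the mod $4$ reduction of the target class. Using $\beta(b)=r_{4}(x)$ and $\widetilde t=r_{4}(t)$ from \cref{Q411_coh}, together with the derivation property of $\beta$ and $\beta(\beta(\cdot))=0$, the natural candidates are
\begin{equation}
\widetilde{x^{3}}=b\,\beta(b)^{2},\qquad \widetilde{x^{2}t}=b\,\beta(b)\,\widetilde t,
\end{equation}
whose mod $4$ Bocksteins are $\beta(b)^{3}$ and $\beta(b)^{2}\widetilde t$, i.e.\ the reductions of $x^{3}$ and $x^{2}t$, as required.

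Next I would evaluate the three cup products in $H^{11}(Q_{4}^{11};\Z_{4})$ using the ring presentation $\Z_{4}[b,\beta(b),\widetilde t]/(b^{2}=2\beta(b),\beta(b)^{5},\widetilde t^{2})$. Two vanish immediately:
\begin{equation}
\widetilde{x^{3}}\cup r_{4}(x^{3})=b\,\beta(b)^{5}=0,\qquad \widetilde{x^{2}t}\cup r_{4}(x^{2}t)=b\,\beta(b)^{3}\widetilde t^{\,2}=0,
\end{equation}
because $\beta(b)^{5}=0$ and $\widetilde t^{\,2}=0$ respectively. The cross term is
\begin{equation}
\widetilde{x^{3}}\cup r_{4}(x^{2}t)=b\,\beta(b)^{4}\widetilde t,
\end{equation}
which is the unique non‑trivial monomial in degree $11$ and hence a generator of $H^{11}(Q_{4}^{11};\Z_{4})\cong\Z_{4}$ (by the Poincar\'e duality isomorphism between $H^{11}$ and the dual of $H^{0}$, or equivalently by the non‑degeneracy of the cup pairing in mod $4$ cohomology of an oriented closed manifold). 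Therefore its evaluation on the mod $4$ fundamental class is a unit in $\Z_{4}$, and dividing by $4$ gives $\pm 1/4\in\Q/\Z$, yielding the advertised values $0$, $0$, and $\pm 1/4$.

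The only subtle point, and the expected main obstacle, is pinning down the overall sign: Poincar\'e duality determines $\int b\,\beta(b)^{4}\widetilde t$ only up to multiplication by a unit of $\Z_{4}$, i.e.\ $\pm 1$. To fix the sign as $+1/4$ rather than $3/4$ I would compute the pairing explicitly in a convenient model, e.g.\ by pulling back from the fiber $L_{4}^{9}$ combined with the $S^{2}$ base using the Serre spectral sequence collapse from \cref{Q411_coh}: on the fiber, the class $b\,\beta(b)^{4}$ integrates to $+1\in\Z_{4}$ with the standard orientation of $L_{4}^{9}$, and the remaining factor $\widetilde t$ integrates to $+1$ on $S^{2}$; combined with the product orientation of $Q_{4}^{11}$ (which agrees with the orientation induced from $\pi\colon Q_{4}^{11}\to S^{2}$ and the section $\sigma$), this fixes the sign to $+1/4$.
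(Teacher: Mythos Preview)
Your proposal is correct and follows essentially the same approach as the paper: both reduce the torsion pairing to mod $4$ cohomology via the Bockstein, identify the lifts $b\beta(b)^{2}$ and $b\beta(b)\widetilde t$, and evaluate the three cup products using the relations $\beta(b)^{5}=0$ and $\widetilde t^{\,2}=0$. You are in fact slightly more careful than the paper about the sign of $\int b\beta(b)^{4}\widetilde t$, which the paper simply asserts equals $1/4$ without further comment.
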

\begin{proof}
Though we defined the torsion pairing using differential cohomology in Section \ref{ssec:IIBan}, it can also be defined
in ordinary cohomology. Specifically, let $X$ be an oriented $11$-manifold and $\beta_{\Q/\Z}\colon H^*(X;
\Q/\Z)\to H^{*+1}(X;\Z)$ be the Bockstein for the short exact sequence
\begin{equation}
	\shortexact{\Z}{\Q}{\Q/\Z}.
\end{equation}
Let $a,b\in\mathrm{Tors}(H^6(X; \Z))$. Since $b$ is torsion, its image in $H^6(X;\Q)$ vanishes, and therefore there
is some class $\overline b\in H^5(X;\Q/\Z)$ such that $\beta_{\Q/\Z}(\overline b) = b$. Then $a\cup \overline b\in
H^{11}(X;\Q/\Z)$, and the torsion pairing of $a$ and $b$ is defined to be
\begin{equation}
	\ang{a, b}:= \int_X a\cup\overline b\in\Q/\Z \,.
\end{equation}
This value does not depend on the choice of $\overline b\in\beta_{\Q/\Z}^{-1}(b)$.

Because $4\mathrm{Tors}(H^*(Q_4^{11};\Z)) = 0$, we can do something easier. There is a commutative diagram of short
exact sequences
\begin{equation}
\label{coeff_change}
\begin{gathered}
\begin{tikzcd}
	0 & \Z & \Z & {\Z_4} & 0 \\
	0 & \Z & \Q & {\Q/\Z} & 0
	\arrow[from=1-1, to=1-2]
	\arrow[from=2-1, to=2-2]
	\arrow["{1\mapsto 4}", from=1-2, to=1-3]
	\arrow[from=2-2, to=2-3]
	\arrow[from=1-3, to=1-4]
	\arrow[from=2-3, to=2-4]
	\arrow[from=1-4, to=1-5]
	\arrow[from=2-4, to=2-5]
	\arrow["{1\mapsto 1/4}", from=1-4, to=2-4]
	\arrow["{1\mapsto 1/4}", from=1-3, to=2-3]
	\arrow[Rightarrow, no head, from=1-2, to=2-2]
\end{tikzcd}
\end{gathered}
\end{equation}
Everything in the above definition of the torsion pairing is natural in maps of short exact sequences, so if we
replace $\Q/\Z$ with $\Z_4$, we obtain the same value of the torsion pairing. And we can take advantage of the ring
structure on mod $4$ cohomology: if $x\in H^*(X;\Z)$ and $y\in H^*(X;\Z_4)$, then $x\cup y = (x\bmod 4)\cup y$.
This is compatible with the maps in~\eqref{coeff_change}.
\begin{lem}
\label{4bock_lemma}
Let $\beta_{0,4}\colon H^*(Q_4^{11};\Z_4)\to H^{*+1}(Q_4^{11};\Z)$ denote the Bockstein associated to the short
exact sequence
\begin{equation}
	\shortexact{\Z}{\Z}{\Z_4} \,.
\end{equation}
Then $\beta_{0,4}(b\beta(b)^k) = x^{k+1}$ and $\beta_{0,4}(b\beta(b)^k\tilde t) = x^{k+1}t$.
\end{lem}
\begin{proof}
Let $\beta_4$ be the Bockstein for the short exact sequence $0\to\Z_4\to\Z_{16}\to\Z_4\to 0$. Then
$\beta_{0,4}(\alpha)\bmod 4 = \beta_4(\alpha)$. \Cref{Q411_coh} shows us that the mod $4$ map $H^*(Q_4^{11};\Z)\to
H^*(Q_4^{11};\Z_4)$ is injective on $4$-torsion, so it suffices to understand $\beta_4$, then pull back to $\Z$
cohomology. Because $H^3(Q_4^{11};\Z) = 0$, $\beta_{0,4}(\tilde t) = \beta_{0,4}(\beta(b)) = 0$, and therefore
$\beta_4(\tilde t) = \beta_4(\beta(b)) = 0$, and we know $\beta_4(b) = \beta(b)$. Now, $\beta_4$ satisfies the
Leibniz rule~\cite[\S 3.E, $(*)$]{Hat02}
\begin{equation}
	\beta_4(\alpha_1\alpha_2) = \beta_4(\alpha_1)\alpha_2 + (-1)^{\abs{\alpha_1}}\alpha_1\beta(\alpha_2) \,,
\end{equation}
so we know the value of $\beta_4$, hence also of $\beta_{0,4}$, on products including $b\beta(b)^k$ and
$b\beta(b)^k\tilde t$.
\end{proof}
Now we directly compute the torsion pairing.
\begin{subequations}
\begin{itemize}
	\item For $\ang{x^3, x^3}$, Lemma \ref{4bock_lemma} tells us $\beta_{0,4}(b\beta(b)^2) = x^3$, and Proposition \ref{Q411_coh}
	tells us $x^3\bmod 4 = \beta(b)^3$, so
	\begin{equation}
		\ang{x^3, x^3} = \int_{Q_4^{11}} b\beta(b)^5 = 0 \,.
	\end{equation}
	\item For $\ang{x^3, x^2t}$, Lemma \ref{4bock_lemma} says $\beta_{0,4}(b\beta(b)\tilde t) = x^2t$, and we know
	$x^3\bmod 4 = \beta(b)^3$. Therefore
	\begin{equation}
		\ang{x^3, x^2t} = \int_{Q_4^{11}} b\beta(b)^4\tilde t = \tfrac 14 \,.
	\end{equation}
	\item Finally, for $\ang{x^2t, x^2t}$, we know $\beta_{0,4}(b\beta(b)\tilde t) = x^2t$ and Proposition \ref{Q411_coh}
	gives us $x^2t\bmod 4 = \beta(b)^2\tilde t$, so
	\begin{equation}
		\ang{x^2t, x^2t} = \int_{Q_4^{11}} b\beta(b)^3\tilde t^2 = 0 \,.
		\qedhere
	\end{equation}
\end{itemize}
\end{subequations}
\end{proof}
For this torsion pairing, the possible quadratic refinements' Arf invariants are of the form $k/4$ with $k\in\Z$. Let us see how this comes about. Let us denote a general element of $\mathbb{Z}_4 \times \mathbb{Z}_4$ by a pair of integers modulo 4, $(n_1,n_2)$. As explained in \cite{Hsieh:2020jpj}, a quadratic refinement of the bilinear pairing $(A,B)$ is defined by the property that
\begin{equation} \mathcal{Q}(A+B)-\mathcal{Q}(A)-\mathcal{Q}(B)+\mathcal{Q}(0)= \ang{A,B} \,. \label{ee0}\end{equation}
We are interested in quadratic refinements with $\mathcal{Q}(0)=0$, so we drop the last term on the left hand side. Evaluating \eq{ee0} for $A=B=(1,0)$ or $(0,1)$, we get
\begin{equation} \mathcal{Q}(2A)=2\mathcal{Q}(A),\quad \mathcal{Q}(3A)=3\mathcal{Q}(A) \,. \label{ee1}\end{equation}
We will denote $\mathcal{Q}((0,1))=k_1/4$ and $\mathcal{Q}((1,0))=k_2/4$ for some integers $k_1,k_2$. Finally, we also have
\begin{equation} \mathcal{Q}((n_1,n_2))= k_1n_1+k_2n_2+\tfrac14 n_1n_2 \,. \label{ee2}\end{equation}
This is the most general possible quadratic refinement. The Arf invariants can now be computed explicitly, and they turn out to be multiples of $\tfrac14$ in all cases.

\subsection{Anomalies for the rest of the classes}
Below the dashed line in \eq{results}, the duality bundle only involves the action of reflections and half-turns;
the corresponding relevant representations are real, and the bundles are generated by an inclusion $D_4\rightarrow
\GL(2, \Z)$ (or on the double cover, $D_8\to\GL^+(2, \Z)$). Furthermore, for each class, there are two different
variants, which we denote in \eq{results} as tilded and untilded, and which differ in the way the inclusion takes
place; the reflections in $D_8$ can be embedded as reflections along the sides or along the diagonal in $\Spin
\text{-} {D_{16}}\subset \Spin\text{-}\GL^+(2, \Z)$, as illustrated in Figure \ref{square_in_octagon}. At the level of
the associated principal $\GL(2, \Z)$ bundle, these correspond to a $\Z_2$ reflection subgroup acting on the fermions as multiplication by either of the  two matrices
 \begin{equation} \mathbf{R}_+=
 	\begin{pmatrix}
		1 & \phantom{-}0\\
		0 & -1
	\end{pmatrix}
 \quad\text{or}\quad \mathbf{R}_\times=
 	\begin{pmatrix} 0 & 1\\1 & 0\end{pmatrix}
.\label{re5r}\end{equation}
Since both matrices are equivalent by a change of basis, the fermion anomalies will not be able to distinguish
tilded and untilded versions of the theory; they will all have the same anomaly. The same applies to the self-dual
field, which couples via the determinant representation which sends both matrices in \eq{re5r} to a sign. As a
result, we only need to evaluate the anomaly theory for untilded classes, lowering our workload from eight to four
classes.

One can also see this directly using bordism. Let $\GL^+(2, \R)$ denote the Pin\textsuperscript{$+$} cover of
$\GL(2, \R)$; then one can define Spin-$\GL^+(2, \R)$ structures analogously to Spin-$\GL^+(2, \Z)$ structures.
Using the inclusion $j\colon \GL^+(2, \Z)\to\GL^+(2, \R)$, a Spin-$\GL^+(2, \Z)$ structure on a manifold $M$
induces a Spin-$\GL^+(2, \R)$ structure on $M$.
\begin{lem}
Let $M$ be a closed Spin-$D_8$ manifold, and let $M$ and $\widetilde M$ denote the two Spin-$\GL^+(2,
\Z)$-structures on $M$ given by the maps $i$ and $\widetilde\imath$. As Spin-$\GL^+(2, \R)$-manifolds, $M$ and
$\widetilde M$ are bordant.
\end{lem}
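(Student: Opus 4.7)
The plan is to exhibit an explicit bordism by realising the change from $i$ to $\widetilde\imath$ as conjugation by an element of $\GL^+(2,\R)$ that can be continuously deformed to the identity. The key structural fact is that inside the discrete group $D_{16}$ the two embeddings $D_8\hookrightarrow D_{16}$ are genuinely inequivalent, but inside the continuous group $\GL^+(2,\R)$ they become conjugate by a rotation that does not live in $D_{16}$.

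First I would establish the \emph{conjugacy claim}: there exists $\rho\in\Spin(2)\subset\GL^+(2,\R)$ such that
\begin{equation}
j\circ\widetilde\imath \;=\; \mathrm{Ad}_\rho \circ j\circ i \colon D_8\longrightarrow \GL^+(2,\R).
\end{equation}
As one reads off from Figure~\ref{square_in_octagon}, the two inscribed squares differ by a rotation through $\pi/8$ in $\mathrm{O}(2)$. Conjugation by rotation through an angle $\theta$ carries a reflection through the $\phi$-axis to a reflection through the $(\phi+\theta)$-axis, so conjugation by $R_{\pi/8}\in\mathrm{SO}(2)$ carries one copy of $D_8\subset D_{16}$ onto the other. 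Lifting $R_{\pi/8}$ to its preimage $\rho$ in the Pin$^+$ cover $\GL^+(2,\R)$ establishes the claim. Note that $R_{\pi/8}\notin D_{16}$, explaining why the two embeddings are inequivalent at the discrete level.

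Second, since $\Spin(2)\cong S^1$ is path-connected, choose a smooth path $\{\rho_t\}_{t\in[0,1]}\subset \Spin(2)$ with $\rho_0=1$ and $\rho_1=\rho$, and set $\phi_t:=\mathrm{Ad}_{\rho_t}\circ(j\circ i)$. This is a smooth family of homomorphisms $D_8\to\GL^+(2,\R)$ interpolating between $j\circ i$ and $j\circ\widetilde\imath$. To assemble the bordism, let $P\to M$ be the principal $(\Spin\times D_8)/\Z_2$-bundle underlying the Spin-$D_8$ structure, pull it back to $M\times[0,1]$, and induce a principal $(\Spin\times\GL^+(2,\R))/\Z_2$-bundle whose $\GL^+(2,\R)$-factor on the slice $M\times\{t\}$ is associated via $\phi_t$. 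By construction, the two boundary components carry exactly the Spin-$\GL^+(2,\R)$-structures $M$ and $\widetilde M$.

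The main technical obstacle is verifying that the construction respects the central $\Z_2$ quotient smoothly across the cylinder (so that it defines a Spin-$\GL^+(2,\R)$-structure rather than a disjoint frame bundle and duality bundle); this reduces to observing that each $\rho_t$ lies in the identity component of $\GL^+(2,\R)$ and hence acts trivially on the central $\Z_2$. A cleaner reformulation, which I would fall back on if the direct construction becomes delicate, uses classifying spaces: for any topological group $G$ the map $B(\mathrm{Ad}_g)\colon BG\to BG$ is homotopic to the identity for every $g\in G$, so $B(j\circ i)$ and $B(j\circ\widetilde\imath)$ are homotopic as maps $BD_8\to B\GL^+(2,\R)$. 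Tensoring with the identity Spin-part yields a homotopy of the full classifying maps $M\to B((\Spin\times\GL^+(2,\R))/\Z_2)$, and the mapping cylinder of such a homotopy is the desired Spin-$\GL^+(2,\R)$-bordism between $M$ and $\widetilde M$.
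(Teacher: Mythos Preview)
Your proposal is correct and follows essentially the same approach as the paper. Both arguments exhibit a homotopy between $j\circ i$ and $j\circ\widetilde\imath$ as homomorphisms $D_8\to\Pin^+(2)\cong\mathrm O(2)$ and then use the homotopy to equip $[0,1]\times M$ with a Spin-$\GL^+(2,\R)$ structure interpolating between the two boundary structures; your framing via conjugation by a path $\rho_t$ in $\Spin(2)$ is exactly the paper's ``rotate the reflection line'' homotopy, since $\mathrm{Ad}_{\rho_t}$ fixes rotations and moves the axis of a reflection. Your additional remarks on the central $\Z_2$ and the classifying-space fallback are correct but not needed for the argument.
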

\begin{proof}
We will show that there is a homotopy $H$ between the two maps $j\circ i,j\circ\widetilde\imath\colon
D_8\to \GL^+(2, \R)$. Using $H$, define a Spin-$\GL^+(2, \R)$ structure on $[0, 1]\times M$ which
on $\{t\}\times M$ is induced by $H_t\colon D_8\to\GL^+(2, \R)$. This is a Spin-$\GL^+(2, \R)$ bordism from $M$ to
$\widetilde M$, as desired.

The image of $D_{16}$ in $\GL^+(2, \R)$ is contained in $\Pin^+(2)\cong\mathrm O(2)$; in the rest of this
paragraph, we regard $j\circ i$ and $j\circ\widetilde\imath$ as maps into $\mathrm O(2)$. Let $r$ be a generating
rotation of $D_8$ and $s$ be a generating reflection. Then, both $j\circ i$ and $j\circ\widetilde{\imath}$ send $r$
to $0$ and $s$ to a reflection through some line through the origin. These maps are homotopic because we can rotate
the line for $j\circ i$ to the line for $j\circ\widetilde\imath$.
\end{proof}
The partition functions of the anomaly theories for the fermion and the self-dual field only depend on the
underlying Spin-$\GL^+(2, \R)$-structure, so are equal on $M$ and on $\widetilde M$.

To compute anomalies in the classes involving real projective spaces, we need the analogs of \eq{eff} and
\eq{gfor1} for these cases. Both examples that concern us, $\HP^2\times \RP^3$ and $\RP^{11}$, are Spin, and so the anomalies are the same as those of fermions coupled to an ordinary $\Z_2$ bundle.

 We will denote the $\eta$-invariant of a real Dirac operator coupled to the $\Z_2$ in either the trivial or sign representation as $\eta^{\text{D}}_\pm$. We have that $\eta^{\text{D}}_\pm(\RP^n)$ is given by $\tfrac12$ of the result \eq{eff} for $n=3,11$, since the $\mathbb{Z}_2$ action behaves as charge conjugation on the complex fermions. Following a similar logic as the one leading to \eq{gfor1},
\begin{equation}
\eta ^{\text{RS}}(\RP^n) = n \, \eta^{\text{D}}_\pm(\RP^n) - \eta^{\text{D}}_\mp(\RP^n) \,.
\label{ras2}
\end{equation}
We also need to write down the $\eta$-invariants that appear in \eq{ath} in terms of $\eta^{\text{D}}_{\pm}$. Both
matrices in \eq{re5r} can be diagonalized, resulting in exactly one $+1$ and one $-1$ eigenvalue. As a result, the $\eta$-invariants of the IIB supergravity fermions decompose as
\begin{equation}
\eta_q^{\text{D}}=\eta_+^{\text{D}}+\eta_-^{\text{D}} \,,\quad \eta_q^{\text{RS}}=\eta_+^{\text{RS}}+\eta_-^{\text{RS}} \,.
\label{e344}
\end{equation}
Finally, we will also need the $\eta$-invariant for the self-dual field coupled to the sign representation. The explicit expression is given by \eq{effsd}.

We are now in a position to evaluate the anomalies on the classes in \eq{results} involving real projective spaces. One can check explicitly that
\begin{equation}
\eta_q^{\text{D}}(\RP^n) = \eta_q^{\text{RS}}(\RP^n) = \eta_-^{\text{Sig}}(\RP^n)=0 \,.
\end{equation}
For the fermions, the contribution with $\eta_+$ cancels that of $\eta_-$ in \eq{e344}. For the self-dual field, the result vanishes identically. These results can be understood as a consequence of anomaly cancellation, together with the facts that the $\eta$-invariants are defined over the reals and that that these bordism classes are $\Z_2$. For concreteness, consider $\RP^{11}$. The 12-manifold is defined as
\begin{equation} Z_{12}=\frac{\RP^{11}\times S^1}{\Z_2},\label{boc}\end{equation}
where the $\Z_2$ acts by reflecting the $S^1$ coordinate. This has boundary given by two copies of $\RP^{11}$, with opposite induced orientations.
Now consider the anomaly theory $8\mathcal{A}$, given by eight times the actual anomaly theory in \eq{ath}. The
partition function of $8\mathcal A$ is a bordism invariant, but it only depends on $\eta$-invariants. We can
use the APS index theorems to evaluate these explicitly as the equivariant indices of the corresponding operators
in $Z_{12}$; the zero modes of an elliptic operator in $Z_{12}$ correspond to $\Z_2$-invariant zero modes in
$\RP^{11}\times S^1$. There are no fermion zero modes in $\RP^{11}\times S^1$, and so the corresponding $\eta$'s
must vanish. For the self-dual field, there is a single $\Z_2$-odd zero mode (given by the polyform obtained as the
sum of the harmonic forms in degrees 1 and 10), but we need to subtract the contribution from the zero modes of the
boundary \cite{Hsieh:2020jpj}, of which there is just one. As a result, the anomaly for the self-dual field vanishes as well.

Having dealt with the $\eta$-invariants, we must now deal with the Arf invariant in \eq{ath}. This is simpler because
we must use cohomology in the local coefficient system $L$ that we defined in Section \ref{ssec:IIBan}, and
\begin{equation}
H^6(\RP^{11}; L)= H^2(\RP^{3}; L)=0 \,,
\end{equation}
so the Arf invariant must vanish. The anomaly in all these cases therefore vanishes.

 We can evaluate $\mathcal{A}$ in the classes $X_{10}\times S^1$ and $X_{10}\times \widetilde{S}^1$ directly by using \eq{jejeje}. Since both manifolds are products, \eq{jejeje} relates the anomaly to the Dirac, Rarita-Schwinger, and self-dual indices\footnote{In the case of the self-dual operator, we would apply \eq{jejeje} to the Dirac operator for the bispinor field, as in \cite{Alvarez-Gaume:1984zlq,Hsieh:2020jpj}.} of $X_{10}$, respectively. But this being a 10-manifold, the index densities all vanish. As a result,
\begin{equation}\mathcal{A}=0\quad \text{for}\quad X_{10}\times S^1,\quad X_{10}\times\widetilde{S^1}.\end{equation}

We are left with the classes involving $X_{11}$.
\begin{lem}
\label{eta_X11_vanish}
The $\eta$-invariants of the Dirac, Rarita-Schwinger, and signature operators vanish on $X_{11}$.
\end{lem}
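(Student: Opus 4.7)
The plan is to construct an orientation-reversing isometry $\sigma$ of $X_{11}$ that preserves the Spin-$D_8$ structure and intertwines each of the three operators with its negative. Once this is done, each spectrum is symmetric about zero and the corresponding $\eta$-invariant vanishes on the nose.

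First I would build $\sigma$ as the descent of
\[
\tilde\sigma(x_1,\dots,x_7,y_1,\dots,y_6) = (x_1,\dots,x_7,y_1,\dots,y_5,-y_6)
\]
from the double cover $S^6\times S^5$. This is an orientation-reversing isometry of the product metric, and the explicit formulas \eqref{Z2_act_X11} for $\alpha$ and $\beta$ show that $\tilde\sigma$ commutes with both generators (it moves only the coordinate $y_6$, on which neither $\alpha$ nor $\beta$ depends). Hence $\tilde\sigma$ descends to an orientation-reversing isometry $\sigma\colon X_{11}\to X_{11}$ which, by its $(\Z_2\times\Z_2)$-equivariance, preserves the principal $D_4$-bundle $S^6\times S^5\to X_{11}$ and thus the induced duality bundle and Spin-$D_8$ structure as bundles.

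Next I would argue that $\sigma$ lifts to an involution of the spinor bundle associated to the Spin-$D_8$ structure and of the $L$-twisted bundle of differential forms used in defining $\eta_-^{\mathrm{Sig}}$, and that under this lift each self-adjoint Dirac-type operator $D$ satisfies $\sigma_*\circ D = -D\circ\sigma_*$. This is the standard mechanism by which orientation reversal acts in odd dimensions: Clifford multiplication by the volume form flips sign under $\sigma$, so $D$ flips sign. The spectra of the Dirac, Rarita-Schwinger and signature operators on $X_{11}$ are therefore symmetric about zero, and all three $\eta$-invariants vanish identically.

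The main obstacle I foresee is showing that $\sigma$ admits a spin-preserving lift that squares to $+1$ rather than to the central element $(-1)^F$ of the Spin cover, equivalently a Pin$^+$-compatibility check for the Spin-$D_8$ structure built in Appendix \ref{app:A}. I would approach this by tracking $\sigma$ through the explicit construction underlying Lemma \ref{X11_w2}: the reflection $y_6\mapsto -y_6$ is a single coordinate flip in the ambient $\R^6$, so its Pin$^+$ lift is explicit, and its commutation with the ambient $\Z_2\times\Z_2$ deck transformations should make the lift descend cleanly. A backup plan, should this verification fail, is to use the equivariant APS index theorem on a bounding twelve-manifold such as $S^6\times D^6$ with the natural (non-free) extension of the $\Z_2\times\Z_2$-action, reducing the computation to the finite fixed-point contributions from the Atiyah-Bott-Segal-Singer formula.
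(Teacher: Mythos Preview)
Your main approach --- finding an orientation-reversing isometry of $X_{11}$ that forces spectral symmetry --- is genuinely different from what the paper does, and the map $\tilde\sigma\colon y_6\mapsto -y_6$ does indeed commute with $\alpha$ and $\beta$ and descend. However, the step you flag as the ``main obstacle'' is exactly where the argument currently stops short of a proof: showing that $\sigma$ lifts to the twisted spinor bundle so that $\sigma_*D\sigma_*^{-1}=-D$ is the entire content of the lemma, and for a non-Spin manifold with only a Spin-$D_8$ structure this is not just a parity check on a single Clifford generator. One must verify that the Spin-$D_8$ structure extends to a Pin$^+$-$D_8$ structure for which $\sigma$ is a symmetry, and that the two-dimensional $D_8$-representations carried by the gravitino and dilatino (and the twisted coefficient system $L$ for the signature operator) are compatible with this lift. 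Until that is done, the argument is a plan rather than a proof.

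The paper instead realizes $X_{11}$ as (part of) the boundary of an orbifold $Y_{12}=(S^6\times S^6)/(\Z_2\times\Z_2)$: extending the $S^5$ factor to an $S^6$ by adding a coordinate $y_7$ on which neither $\alpha$ nor $\beta$ acts, the only fixed points sit at $y_7=\pm 1$, and excising them leaves a bordism with boundary two copies of $X_{11}$ of opposite orientation. The $\eta$-invariants are then evaluated via APS as equivariant indices on the cover $S^6\times S^6$, where there are no Dirac or Rarita-Schwinger zero modes (since $S^6$ has none) and the single signature zero mode cancels against the boundary contribution. This is essentially your backup plan, with $S^6\times S^6$ in place of $S^6\times D^6$; the doubled version has the advantage that the singular locus is isolated in codimension six rather than a whole $S^6$ worth of orbifold points, and the zero-mode count is immediate. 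Your isometry approach would be more elegant if the lift goes through, but the paper's route avoids the Pin-compatibility question entirely.
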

\begin{proof}
We can realize $X_{11}$ as the boundary of a 12-manifold
\begin{equation}Y_{12}\equiv \frac{S^6\times S^6}{\Z_2\times\Z_2},\end{equation}
where, following the same conventions as in Appendix \ref{app:A}, we realize $S^6\times S^6$ as the locus
\begin{equation} \vert \vec{x}\vert=\vert\vec{y}\vert=1,\end{equation}
in an ambient $\mathbb{R}^{14}$ parametrized as $(\vec{x},\vec{y})$ where each of the vectors has seven components. The two
$\Z_2$ factors act by flipping $\vec{x}$ and $y_1$, and by flipping all the coordinates in $\vec{y}$ except for $y_7$,
respectively. The corresponding action has fixed points at $y_7=\pm1$; excising these produces a manifold with
boundary homeomorphic to two copies of $X_{11}$ with opposite orientation. We can then run the same argument as
around \eq{boc}; the Dirac and Rarita-Schwinger fields have no zero modes on $S^6$.  Taking into account the $\Z_2$
action, the self-dual field has one zero mode given by the sum of the harmonic forms in degree 1 and 12; but one
has to subtract the boundary contribution, which is again a single class at top degree, so that there are no zero modes either.
\end{proof}
As a result,
\begin{equation}\mathcal{A}(X_{11})= \text{Arf}(X_{11}),\end{equation}
depends only on the Arf invariant. A priori this is an element of $\mathbb{Z}_8$, but since $2[X_{11}] = 0$ in
$\Omega_{11}^{\Spin\text{-}\GL^+(2, \Z)}$, we know that this Arf invariant is either $0$ or $\tfrac12$ mod $1$.

As a consistency check, we compute the torsion subgroup of $H^6(X_{11}; L)$, where $L$ is the local coefficient
system from Section \ref{ssec:IIBan}. The torsion subgroup is isomorphic to $\Z_2\oplus\Z_2$; the possible Arf invariants
of a quadratic refinement of a bilinear pairing on this Abelian group are $\{0, 1/4, 1/2, 3/4\}\bmod 1$, which is
consistent with what we learned from the bordism consideration above.

In~\eqref{Z2_act_X11}, we explicitly identified $\pi_1(X_{11})\cong\{1, \alpha, \beta, \alpha\beta\}$. For both
Spin-$\GL^+(2, \Z)$ structures we care about, which we denoted $X_{11}$ and $\widetilde X_{11}$, $\alpha$ maps to a
reflection in $\GL(2, \Z)$ and $\beta$ maps to a rotation, so $L$ is the local system in which $\alpha$ acts by
$-1$ and $\beta$ acts by $1$.
\begin{prop}
\label{twisted_calc}
$H^6(X_{11}; L)\cong \Z^{\oplus 2} \oplus \Z_2^{\oplus 2}$ \,.
\end{prop}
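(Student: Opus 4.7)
I propose to compute $H^6(X_{11};L)$ via the Cartan--Leray spectral sequence for the regular covering $p\colon\widetilde X=S^6\times S^5\to X_{11}$ with deck group $G=\Z_2\times\Z_2$. Since $\widetilde X$ is simply connected, $p^{*}L$ is the constant sheaf with stalk $\Z$, on which $G$ acts through the holonomy $\rho\colon G\to\{\pm 1\}$ with $\rho(\alpha)=-1$ and $\rho(\beta)=+1$. The spectral sequence reads
\[E_2^{p,q}=H^p\bigl(G;\,H^q(\widetilde X;\Z)\otimes\widetilde\Z_\rho\bigr)\;\Longrightarrow\;H^{p+q}(X_{11};L),\]
where the $G$-module structure on $H^q(\widetilde X;\Z)\otimes\widetilde\Z_\rho$ combines the geometric pullback along the $G$-action on $\widetilde X$ with the $\rho$-twist on stalks.

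The first step is to tabulate the coefficient modules. The K\"unneth theorem gives $H^q(\widetilde X;\Z)=\Z$ for $q\in\{0,5,6,11\}$ and zero otherwise. Using that the antipodal map on $S^6$ has degree $-1$ and the reflection $y_1\mapsto -y_1$ on $S^5$ has degree $-1$, while the antipodal on $S^5$ has degree $+1$, one checks that after multiplying by $\rho$ the combined $G$-modules are $\widetilde\Z_\alpha$ (sign for $\alpha$, trivial for $\beta$) in degrees $0$ and $11$, and the trivial module $\Z$ in degrees $5$ and $6$. The needed group cohomology of $G=\Z_2\times\Z_2$ with these modules is then assembled via the K\"unneth formula for group cohomology from the elementary rings $H^{*}(\Z_2;\Z)$ and $H^{*}(\Z_2;\widetilde\Z)$. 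Only the columns $q\in\{0,5,6\}$ contribute to total degree $6$, so one obtains a short explicit list of nonzero entries $E_2^{p,q}$ whose direct sum realizes $H^6(X_{11};L)$ up to differentials and extensions.

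The crux is to show that the relevant $d_r$-differentials vanish and to pin down the extensions. The only candidate nontrivial maps are a small number of $d_2$'s entering or leaving the rows $q\in\{0,5,6\}$, and I would control them in two complementary ways. First, naturality with respect to the fiber bundle $\RP^5\to X_{11}\to\RP^6$ (under which $L$ is pulled back from the orientation local system on $\RP^6$) gives a parallel Serre spectral sequence whose transgressions are computable from the cellular cohomology of real projective spaces with twisted coefficients. Second, the rational transfer isomorphism $H^{*}(X_{11};L_{\Q})\cong H^{*}(Y;\Q)^{-}$ for the intermediate double cover $Y=S^6\times\RP^5\to X_{11}$ fixes the free rank of $H^6(X_{11};L)$, so no $\Z$-summand can be silently killed by a differential or merged by an extension. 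The main obstacle I expect is showing that these $d_2$'s indeed vanish on the nose; once this is settled, the extension problems at $E_\infty$ are constrained by the mod-$2$ Bockstein structure together with the rational rank, and the stated isomorphism follows.
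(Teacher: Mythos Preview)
Your Cartan--Leray approach is genuinely different from the paper's, which instead equips $S^6\times S^5$ with the product of the standard two-cells-per-dimension CW structures on the spheres and computes the cochain complex $\Hom_{\Z[V]}(C_*(S^6\times S^5),L)$ by hand. The cellular method bypasses differentials and extension problems entirely; yours has a cleaner module-theoretic setup but must then resolve them.

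There is, however, a substantive issue you should confront head-on. If you actually carry out your rational transfer argument, you obtain $H^6(X_{11};L_\Q)\cong\Q$: the twisted module $H^6(\widetilde X;\Z)\otimes\widetilde\Z_\rho$ at $q=6$ is the trivial $G$-module, its $G$-invariants are one-dimensional, and every other $E_2^{p,6-p}$ is torsion. Hence the free rank of $H^6(X_{11};L)$ is $1$, not $2$, contradicting the $\Z^{\oplus 2}$ in the proposition as stated. The paper's cellular argument has sign problems that account for this: the asserted boundary formula $\partial p_i=\partial q_i$ already fails to reproduce $H_0(S^n)=\Z$ (it gives $\Z^2$), and the $V$-action on cells is treated as a bare permutation, whereas the antipodal map sends $p_i\mapsto \pm q_i$ as an \emph{oriented} chain with a sign depending on $i$. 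So the free part of the proposition appears to be misstated.

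Only the torsion subgroup matters for the paper's downstream application (the Arf invariant), and your Serre spectral sequence for the $\RP^5$-bundle over $\RP^6$ has at most $\Z_2^{\oplus 2}$ in the associated graded at total degree $6$, coming from $E_2^{2,4}$ and $E_2^{4,2}$. But confirming that both copies survive still requires you to pin down $d_2\colon E_2^{0,5}\to E_2^{2,4}$, $d_3\colon E_3^{1,4}\to E_3^{4,2}$, and $d_4\colon E_4^{0,5}\to E_4^{4,2}$, which your outline does not yet address beyond an appeal to naturality.
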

\begin{proof}[Proof]
For brevity, let $V := \pi_1(X_{11})$. The universal cover of
$X_{11}$ is the principal $V$-bundle $S^6\times S^5\to X_{11}$; this induces a $V$-action on $S^6\times S^5$, hence
also on the singular chain complex $C_*(S^6\times S^5)$. Cohomology with local coefficients is defined to be the
cohomology of the cochain complex of maps to $L$:
\begin{equation}
	H^k(X_{11}; L) := H^k(\Hom_{\Z[V]}(C_*(S^6\times S^5), L)) \,.
\end{equation}
Abstract nonsense implies we can replace $C_*(S^6\times S^5)$ with any quasi-isomorphic chain complex of
$\Z[V]$-modules, in particular the complex of CW chains for any CW structure on $S^6\times S^5$ for which the
$V$-action is cellular. Consider the sequence of embeddings $S^{n-1}\hookrightarrow S^n$ as the equator and
inductively define a CW structure $\Pi_n$ on $S^n$ by beginning with $\Pi_{n-1}$ on the equatorial $S^{n-1}$, then
attaching the northern and southern hemispheres as two $n$-cells with orientations inherited from the standard
orientation on $S^n$. If we define the equator in $S^5$ to be the intersection of $S^5$ with $\{y_1 = 0\}$, then
the $V$-action on $S^6\times S^5$ is cellular for the product CW structure $\Pi_6\times\Pi_5$: the image of a cell
under any $x\in V$ is a union of cells.

Let $p_i,q_i$ denote the two $i$-cells in $\Pi_6$ ($0\le i\le 6$) and $r_j,s_j$ denote the two $j$-cells in $\Pi_5$
($0\le j\le 5$). Then
\begin{equation}
	\partial p_i = \partial q_i = \begin{cases}
		0, &i\text{ odd}\\
		p_{i-1} + q_{i-1}, &i\text{ even,}\\
	\end{cases}
\end{equation}
and $\partial r_j$ and $\partial s_j$ are analogous. To determine the boundary map in $\Pi_6\times\Pi_5$, use the
Leibniz formula $\partial(C\times D) = \partial(C)\times D + (-1)^{\deg(C)} C\times\partial(D)$. We also need to
know how $V$ permutes the cells; $\alpha$ exchanges $p_i\leftrightarrow q_i$ for all $i$, exchanges
$r_5\leftrightarrow s_5$, and fixes the remaining $r_j$ and $s_j$. $\beta$ fixes $p_i$ and $q_i$ for all $i$ and
exchanges $r_j\leftrightarrow s_j$ for all $j$. With all this data, we know the CW chains as a complex of
$\Z[V]$-modules.

If $e\in\Hom_{\Z[V]}(C_k(S^6\times S^5), L)$, $e$ is uniquely specified by its values on cells of the form
$p_i\times r_j$, where $i+j = k$, $0\le i\le 6$, and $0\le j\le 5$, because every $V$-orbit of the set of $k$-cells
contains exactly one cell of this form. Moreover, the $V$-action on the set of cells is free, so $e(p_i\times r_j)$
can be chosen arbitrarily. Let $e^{i,j}$ be the unique element of $\Hom_{\Z[V]}(C_k(S^6\times S^5), L)$
with value $1$ on $p_i\times r_j$, and which vanishes on all $p_{i'}\times r_{j'}$ for $(i',j')\ne (i, j)$. The
coboundary map $\delta$ can then be calculated as usual, and we find
\begin{equation}
\begin{alignedat}{2}
	\delta(e^{2, 3}) &= 2e^{2,4} &\qquad\qquad \delta(e^{3, 3}) &= -2 e^{3,4}\\
	\delta(e^{4, 1}) &= 2e^{4,2} & \delta(e^{5, 1}) &= -2e^{5,2},
\end{alignedat}
\end{equation}
and on the remaining $e^{i,j}$, $\delta(e^{i,j}) = 0$. Therefore $H^6(X_{11}; L) = \Z\cdot e^{1,5} \oplus \Z\cdot
e^{6, 0} \oplus \Z_2\cdot e^{2, 4} \oplus\Z_2\cdot e^{4, 2}$.
\end{proof}
\begin{cor}
\label{twisted_implies}
Let $\mathcal Q\colon H^6(X_{11}; L)\to \R/\Z$ be a quadratic refinement of the torsion pairing. Then the Arf
invariant of $\mathcal Q$ is either $0$, $1/4$, $1/2$, or $3/4$.
\end{cor}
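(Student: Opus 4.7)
The plan is to reduce the claim to a short computation on the torsion subgroup, using only two inputs: the identification of $\mathrm{Tors}\,H^6(X_{11}; L) \cong \Z_2\oplus\Z_2$ from \ref{twisted_calc}, and the defining identity of a quadratic refinement.

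First I would fix the normalization $\mathcal Q(0) = 0$; this is harmless for the Arf invariant since it only shifts every summand by the same constant phase. With the torsion subgroup identified as $A := \Z_2\oplus\Z_2$ and the torsion pairing $\langle-,-\rangle\colon A\times A\to\R/\Z$ inherited from $H^6(X_{11};L)$, I would apply the quadratic refinement identity
\begin{equation}
\langle v, w\rangle = \mathcal Q(v+w) - \mathcal Q(v) - \mathcal Q(w)
\end{equation}
at $w = v$. Because every $v\in A$ satisfies $2v = 0$, this yields $\langle v,v\rangle = -2\mathcal Q(v) \pmod{1}$. Since the torsion pairing on $2$-torsion elements lands in $\tfrac12\Z/\Z$, we conclude $\mathcal Q(v)\in \tfrac14\Z/\Z$ for every $v\in A$.

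Next I would plug this into the Arf invariant formula \eqref{eq:arfcalc}. Each term $e^{2\pi i\mathcal Q(a)}$ is a fourth root of unity, so
\begin{equation}
\sum_{a\in A} e^{2\pi i\mathcal Q(a)} \in \Z[i].
\end{equation}
The Arf invariant is defined precisely when this Gauss sum is nonzero, and the argument of any nonzero Gaussian integer is an integer multiple of $\pi/2$. Dividing by $2\pi$, this forces $\mathrm{Arf}(\mathcal Q)\in\{0, \tfrac14, \tfrac12, \tfrac34\}\bmod 1$, which is the claimed statement.

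There is no real obstacle here, just bookkeeping: the only subtlety is the normalization $\mathcal Q(0) = 0$ and the implicit assumption that the Gauss sum does not vanish. The latter is automatic in the regime where Arf is defined. As a consistency check one may note that this matches the bordism argument in the main text, which already constrained $\mathrm{Arf}(\mathcal Q)$ to lie in $\tfrac18\Z/\Z$; the present argument sharpens this to $\tfrac14\Z/\Z$ using only the pairing structure on the torsion subgroup, independent of the fact that $2[X_{11}]=0$ in bordism.
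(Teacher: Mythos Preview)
Your argument has a genuine gap at the step ``the argument of any nonzero Gaussian integer is an integer multiple of $\pi/2$.'' This is false as stated: $2+2i\in\Z[i]$ has argument $\pi/4$. More to the point, your constraints alone do not exclude such Gauss sums. Take the (degenerate) bilinear form on $\Z_2\oplus\Z_2$ with $\langle e_1,e_1\rangle=\langle e_1,e_2\rangle=0$ and $\langle e_2,e_2\rangle=\tfrac12$, together with the refinement $\mathcal Q(e_1)=0$, $\mathcal Q(e_2)=\mathcal Q(e_1+e_2)=\tfrac14$. Every hypothesis you invoke is satisfied, yet the Gauss sum is $1+1+i+i=2+2i$ and $\mathrm{Arf}=\tfrac18$. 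So knowing only that $\mathrm{Tors}\,H^6(X_{11};L)\cong\Z_2\oplus\Z_2$ and that $\mathcal Q$ takes values in $\tfrac14\Z/\Z$ is not enough.

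The missing ingredient is that the torsion linking pairing on a closed oriented manifold is \emph{non-degenerate} (this holds with twisted coefficients as well, by Poincar\'e duality). For a quadratic refinement of a non-degenerate pairing on a finite group $A$ the Gauss sum has modulus $\sqrt{|A|}$, here equal to $2$; the Gaussian integers of modulus exactly $2$ are $\pm 2,\pm 2i$, whose arguments are indeed multiples of $\pi/2$. With this one extra sentence your argument goes through and is somewhat slicker than the paper's, which proceeds by directly enumerating the quadratic refinements of the forms on $(\Z_2)^2$. As a minor aside, your closing remark also misreads the bordism input: the constraint $2[X_{11}]=0$ already forces $\mathrm{Arf}\in\{0,\tfrac12\}$, which is stronger than the corollary; the corollary is offered in the paper as a consistency check of that bordism conclusion, not a sharpening of it.
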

The proof amounts to a calculation of the possible Arf invariants of a bilinear form on a two-dimensional
$\Z_2$-vector space.


\bibliographystyle{JHEP-modified}
\bibliography{DualityAnomaly}

\end{document}